\newcommand{\equal}{\mathrel{=}}
\newcommand{\rightvdash}{\mathrel{\vdash}}
\newcommand{\nrightvdash}{\mathrel{\not{\vdash}}}
\newcommand{\lsem}{[\![}
\newcommand{\rsem}{]\!]}
\newcommand{\minus}{-}
\newcommand*{\envqed}{\hfill$\blacksquare$}
\newcommand{\exqed}{\qed}
\newcommand{\cmark}{\ding{51}}
\newcommand{\xmark}{\ding{55}}
\newcommand{\dom}[1]{\operatorname{dom}\!\left({#1}\right)}
\definecolor{yellowhighlight}{RGB}{255, 255, 64}
\definecolor{orangehighlight}{RGB}{255, 204, 153}
\newcommand{\Var}{\textsc{Var}\xspace}
\newcommand{\Val}{\textsc{Val}\xspace}
\newcommand{\PVar}{\textsc{PVar}\xspace}
\newcommand{\Act}{\textsc{Act}\xspace}
\newcommand{\basetype}[1][B]{\textsf{#1}}
\newcommand{\basetypek}[1]{\textsf{B}_{#1}}
\newcommand*{\eg}{\textit{e.g.,}\xspace}
\newcommand*{\ie}{\textit{i.e.,}\xspace}
\newcommand*{\etal}{\textit{et~al.}\xspace}
\newcommand*{\wrt}{w.r.t.\xspace}
\newcommand*{\resp}{resp.\xspace}
\newcommand{\ltsrule}[1]{{\normalfont[\textsc{#1}]}}
\newcommand{\pmsgNoVIdx}[1]{\texttt{l}_{#1}(v)}
\newcommand{\pmsgk}[2]{\texttt{#1}(#2)}
\newcommand{\psnd}[1]{\triangleleft \texttt{l}_{#1}(v_{#1})}
\newcommand{\psndk}[2]{\triangleleft \pmsgk{#1}{#2}}
\newcommand{\pstring}[1]{\texttt{"{#1}"}}
\newcommand{\prcv}[2]{\triangleright \big\{\texttt{l}_{#1}(x_{#1}).P_{#1}\big\}_{#1 \in #2}}
\newcommand{\prcvk}[2]{\triangleright \pmsgk{#1}{#2}}
\newcommand{\pmu}[1]{\mu_#1}
\newcommand{\pif}[3]{\textsf{if }#1\textsf{ then }#2\textsf{ else }#3}
\newcommand{\true}{\textsf{tt}}
\newcommand{\false}{\textsf{ff}}
\newcommand{\pnil}{\bm{0}}
\newcommand{\sndact}[2]{\triangleleft \texttt{#1}(#2)}
\newcommand{\rcvactk}[2]{\triangleright \texttt{#1}(#2)}
\newcommand{\stbra}[2]{\& \big\{?\texttt{l}_{#1}(x_{#1}:\textsf{B}_{#1})[A_{#1}].S_{#1}\big\}_{#1 \in #2}}
\newcommand{\stsel}[2]{\oplus \big\{!\texttt{l}_{#1}(x_{#1}:\textsf{B}_{#1})[A_{#1}].S_{#1}\big\}_{#1 \in #2}}
\newcommand{\stbraAi}[2]{\& \big\{?\texttt{l}_{#1}(x_{#1}:\textsf{B}_{#1})[A'_{#1}].S_{#1}\big\}_{#1 \in #2}}
\newcommand{\stselAi}[2]{\oplus \big\{!\texttt{l}_{#1}(x_{#1}:\textsf{B}_{#1})[A'_{#1}].S_{#1}\big\}_{#1 \in #2}}
\newcommand{\strec}[1]{\textsf{rec }#1}
\newcommand{\stbraNA}[2]{\& \big\{?\texttt{l}_{#1}(\textsf{B}_{#1}).S_{#1}\big\}_{#1 \in #2}}
\newcommand{\stselNA}[2]{\oplus \big\{!\texttt{l}_{#1}(\textsf{B}_{#1}).S_{#1}\big\}_{#1 \in #2}}
\newcommand{\selact}[1]{\triangleleft \texttt{#1}}
\newcommand{\selactk}[1]{\triangleleft \texttt{l}_#1}
\newcommand{\braact}[1]{\triangleright \texttt{#1}}
\newcommand{\braactk}[1]{\triangleright \texttt{l}_#1}
\newcommand{\fv}[1]{\textbf{fv}(#1)}
\newcommand{\fpv}[1]{\textbf{fpv}(#1)}
\newcommand{\revoke}{\texttt{Rvk}\xspace}
\newcommand{\auth}{\texttt{Auth}\xspace}
\newcommand{\get}{\texttt{Get}\xspace}
\newcommand{\succs}{\texttt{Succ}\xspace}
\newcommand{\fail}{\texttt{Fail}\xspace}
\newcommand{\monA}{$\textit{monitor}_{a}$\xspace}
\newcommand{\monB}{$\textit{monitor}_{b}$\xspace}
\newcommand{\server}{\textit{server}\xspace}
\newcommand{\client}{\textit{client}\xspace}
\newcommand{\lab}[1][l]{\texttt{#1}\xspace}
\newcommand{\labk}[1]{\texttt{#1}}
\newcommand{\sauth}{\textit{S}_\textit{auth}\xspace}
\newcommand{\pauth}{\textit{P}_\textit{auth}\xspace}
\newcommand{\pres}{\textit{P}_\textit{res}\xspace}
\newcommand{\mauth}{\textit{M}_\textit{auth}\xspace}
\newcommand{\sab}{\textit{S}_{ab}\xspace}
\newcommand{\pabc}{\textit{P}_{abc}\xspace}
\newcommand{\sef}{\textit{S}_{ef}\xspace}
\newcommand{\pef}{\textit{P}_{ef}\xspace}
\newcommand{\mef}{\textit{M}_{ef}\xspace}
\newcommand{\envsndOP}{\blacktriangle}
\newcommand{\envrcvOP}{\blacktriangledown}
\newcommand{\monout}[2]{\envsndOP \pmsgk{#1}{#2}}
\newcommand{\monin}[2]{\envrcvOP \big\{\texttt{l}_{#1}(x_{#1}).M_{#1}\big\}_{#1 \in #2}}
\newcommand{\monrcv}[2]{\triangleright \big\{\texttt{l}_{#1}(x_{#1}).M_{#1}\big\}_{#1 \in #2}}
\newcommand{\monsndk}[2]{\triangleleft \pmsgk{#1}{#2}}
\newcommand{\monmu}[1]{\mu_{#1}}
\newcommand{\monsndact}[2]{\triangleleft \texttt{#1}(#2)}
\newcommand{\monrcvactkNoVIdx}[1]{\triangleright \texttt{l}_{#1}(v)}
\newcommand{\moninactk}[1]{\blacktriangledown \texttt{l}_#1(v_#1)}
\newcommand{\monoutact}[2]{\blacktriangle \texttt{#1}(#2)}
\newcommand{\no}[2]{\textsf{no}^{#1}_{#2}}
\newcommand{\nops}{\no{S}{P}\xspace}
\newcommand{\noes}{\no{S}{E}\xspace}
\newcommand{\nopd}{\no{D}{P}\xspace}
\newcommand{\noed}{\no{D}{E}\xspace}
\newcommand{\instr}[2]{\langle {#1};{#2} \rangle}
\newcommand{\synth}[1]{\lsem #1 \rsem\xspace}
\newcommand{\envsnd}[2]{\envsndOP \texttt{#1}(#2)}
\newcommand{\envrcv}{\envrcvOP \big\{\texttt{l}_{i}(x_i:B_i).M_{i}\big\}_{i\in I}}
\newcommand{\alignhfill}[1]{\tag*{\llap{#1}}}
\newcommand{\accepted}[2]{\textbf{accepted}(#1,#2)}
\newcommand{\match}[2]{\textbf{match}(#1, #2)}
\newcommand{\lchannels}{\texttt{lchannels}\xspace}
\newcommand{\CM}{CM\xspace}
\newcommand{\synthbra}{\envrcvOP\mathlarger{\{}\,\lab_i(x_i:\basetype_i).\textsf{if } A_i \textsf{ then} \triangleleft \lab_i(x_i).\lsem S_i\rsem \textsf{ else no}^D_E\mathlarger{\}}_{i\in I}}
\newcommand{\synthsel}[2]{\triangleright\mathlarger{\{}\,\lab_{#1}(x_{#1}:\basetype_{#1}).\textsf{if } A_{#1} \textsf{ then} \envsndOP l_{#1}(x_{#1}).\lsem S_{#1}\rsem \textsf{ else no}^D_P\mathlarger{\}}_{#1\in #2}}
\newcommand{\condstmtinternal}[1]{\textsf{if } A_{#1}[\nicefrac{v_{#1}}{x_{#1}}] \textsf{ then} \triangleleft \lab_{#1}(v_{#1}).\lsem S_{#1}\rsem[\nicefrac{v_{#1}}{x_{#1}}] \textsf{ else no}^D_E}
\newcommand{\condstmtexternal}[1]{\textsf{if } A_{#1}[\nicefrac{v_{#1}}{x_{#1}}] \textsf{ then} \envsndOP \lab_{#1}(v_{#1}).\lsem S_{#1}\rsem[\nicefrac{v_{#1}}{x_{#1}}] \textsf{ else no}^D_P}
\newcommand{\pwrtk}[1]{\triangleleft l_{#1}(v_{#1})}
\newcommand{\isValueB}[2]{\textsf{isB}_{#1}({#2})}
\newcommand{\hlightlightyellow}[1]{{\setlength{\fboxsep}{0pt}\colorbox{lightgray}{#1}}}%
\newcommand{\stmonitor}{\texttt{STMonitor}\xspace}
\definecolor{dkgreen}{rgb}{0,0.6,0}
\definecolor{gray}{rgb}{0.5,0.5,0.5}
\definecolor{mauve}{rgb}{0.58,0,0.82}
\tiny\color{gray}\sffamily,
\newcommand*{\pill}[2][white]{%
  \tikz[baseline={([yshift=-.7ex]current bounding box.center)}]{
  \node[solid, draw, semithick, fill=#1, text=black, minimum width=0.84em, minimum height=0.8em, inner xsep=0.16em, inner ysep=0, rounded corners=0.4em, font=\sffamily\scriptsize, align=center] (char) {#2};}%
}
\tikzset{
  label/.style={
    font=\scriptsize\itshape,
    inner sep=0.4em
  },
  point/.style={
    circle,
    fill=black,
    text width=0.3em,
    inner sep=0
  },
  state/.style={
    circle,
    text width=0.8em,
    inner sep=0.1em,
    text depth=0.08em,
    draw,
    font=\scriptsize
  },
  participant/.style={
    draw=black,
    rounded corners,
    semithick,
    font=\footnotesize,
    text height=0.2cm,
    text centered,
    anchor=base
  },
  specification/.style={
    font=\small,
    semithick,
    draw=black
  },
  synthesiser/.style={
    draw=black,
    dotted,
    semithick,
    inner sep=2pt,
    font=\footnotesize
  },
  monitor/.style={
    draw=black,
    fill=white,
    semithick,
    minimum width=0.5cm,
    minimum height=0.5cm,
    font=\footnotesize,
    text height=0.2cm,
    drop shadow=ashadow
  },
  point/.style={
    minimum size=0pt, 
    inner sep=0pt
  },
  ltspoint/.style={
    circle,
    fill=black,
    draw=white,
    line width=0.5mm,
    text width=0.25em,
    inner sep=0,
    font=\footnotesize
  },
  ashadow/.style={
    opacity=.3, 
    shadow xshift=0.6mm, 
    shadow yshift=-0.6mm 
  },
  oppashadow/.style={
    opacity=.3, 
    shadow xshift=-0.6mm, 
    shadow yshift=-0.6mm 
  },
  process/.style={
    draw=black,
    fill=white,
    semithick,
    minimum width=0.5cm,
    minimum height=0.5cm,
    font=\footnotesize,
    text height=0.2cm,
    drop shadow=ashadow
  },
  event/.style={
    draw=black,
    semithick,
    minimum width=0.4cm,
    minimum height=0.4cm,
    font=\scriptsize
  },
  trap/.style={
    trapezium, 
    trapezium angle=67.5, 
    draw,
    inner ysep=5pt, 
    outer sep=0pt,
    minimum height=1.81mm, 
    minimum width=0pt
  },
  static-process/.style={
    draw=black,
    fill=white,
    semithick,
    minimum width=0.5cm,
    minimum height=0.5cm,
    font=\footnotesize,
    text height=0.2cm
  },
  pill/.style={
    draw,
    text=black,
    fill=white,
    minimum width=0.84em, 
    minimum height=0.8em,
    inner sep=0.1em,
    outer sep=0,
    font=\scriptsize\sffamily,
    align=center,
    rounded corners=0.4em  
  }
}
        \pgfpointadd{\pgfpointdecoratedinputsegmentlast}{\pgfpoint{1pt}{1pt}}
\title{On the Monitorability of Session Types, in Theory and Practice (Extended Version)} %
\author{Christian Bartolo Burl\`{o}}{Gran Sasso Science Institute, L'Aquila, Italy}{christian.bartolo@gssi.it}{http://orcid.org/0000-0002-0016-086X}{}
\author{Adrian Francalanza}{Department of Computer Science, University of Malta, Msida, Malta}{adrian.francalanza@um.edu.mt}{https://orcid.org/0000-0003-3829-7391}{}
\author{Alceste Scalas}{DTU Compute, Technical University of Denmark, Kongens Lyngby, Denmark}{alcsc@dtu.dk}{https://orcid.org/0000-0002-1153-6164}{}
\authorrunning{Christian Bartolo Burl\`{o}, Adrian Francalanza, Alceste Scalas} %
\keywords{Session types, monitorability, monitor correctness, Scala} %
\Crefname{section}{\S}{Sections}
\Crefname{figure}{Fig.\@}{Figures}
\Crefname{definition}{Def.\@}{Definitions}
\begin{document}

\maketitle

\begin{abstract}
  Software components are expected to communicate according to predetermined protocols and APIs.
  Numerous methods have been proposed to check the correctness of communicating systems against such protocols/APIs. 
  \emph{Session types} are one such method, used both for static type-checking as well as for run-time monitoring.
  This work takes a fresh look at the run-time verification of communicating systems using session types, in theory and in practice.
  On the theoretical side, we develop a formal model of session-monitored processes.
  We then use this model to formulate and prove new results on the \emph{monitorability} of session types, defined
  in terms of \emph{soundness} (\ie whether monitors only flag ill-typed processes)
  and \emph{completeness} (\ie whether all ill-typed processes can be flagged by a monitor).
  On the practical side, we show that our monitoring theory is indeed \emph{realisable}:
  we instantiate our formal model as a Scala toolkit (called \stmonitor) for the automatic generation of session monitors.
  These executable monitors can be used as proxies to instrument communication across black-box processes written in any programming language.
  Finally, we evaluate the viability of our approach through a series of benchmarks.
\end{abstract}

\section{Introduction}\label{s:intro}

\begin{wrapfigure}
  {R}{5.5cm}
  \centering
  \begin{tikzpicture}
    \node (client) [participant]{\textsf{client}};
    \node (server) [participant, right=2cm of client]{\textsf{server}};

    \node (client-end) [point, below=4.25cm of client]{};
    \node (server-end) [point, below=4.25cm of server]{};

    \draw[-open square, semithick] (client) edge (client-end);
    \draw[-open square, semithick] (server) edge (server-end);

    \node (client-login) [point, below=0.5cm of client]{};
    \node (client-login-left) [fill=black, draw=black, minimum size=0.2pt, inner sep=0pt, left=0.3cm of client-login]{};
    \draw[dashed,->] (client-login-left) edge (client-login.east);

    \node (server-login) [point, below=0.5cm of server]{};
    
    \draw[semithick,->] (client-login) edge node[above,yshift=-2pt]{\scriptsize\texttt{Auth}(\textsf{Str},\textsf{Str})}  (server-login);

    \node (client-succ) [point, below=0.7cm of client-login]{};
    \node (server-succ) [point, below=0.7cm of server-login]{};
    \node (server-succ-right) [fill=black, draw=black, minimum size=0.2pt, inner sep=0pt, right=0.3cm of server-succ]{};
    \draw[dashed,-] (server-succ) edge (server-succ-right);

    \draw[semithick,->] (server-succ) edge node[above,yshift=-2pt]{\scriptsize\texttt{Succ}(\textsf{Str})} (client-succ);

    \node (client-get) [point, below=0.7cm of client-succ]{};
    \node (server-get) [point, below=0.7cm of server-succ]{};
    \draw[semithick,->] (client-get) edge node[above,yshift=-2pt]{\scriptsize\texttt{Get}(\textsf{Str},\textsf{Str})} node[below,align=center, yshift=-3pt]{$\mathsmaller{\vdots}$} (server-get);

    \node (client-rvk) [point, below=1cm of client-get]{};
    \node (server-rvk) [point, below=1cm of server-get]{};
    \node (client-rvk-left) [fill=black, draw=black, minimum size=0.2pt, inner sep=0pt, left=0.3cm of client-rvk]{};
    \draw[dashed,-] (client-rvk-left) edge (client-rvk);
    \draw[semithick,->] (client-rvk) edge node[above,yshift=-2pt]{\scriptsize\texttt{Rvk}(\textsf{Str})} (server-rvk);

    \node (client-fail) [point, below=0.9cm of client-rvk]{};
    \node (server-fail) [point, below=0.9cm of server-rvk]{};
    \node (server-fail-right) [fill=black, draw=black, minimum size=0.2pt, inner sep=0pt, right=0.3cm of server-fail]{};
    \draw[dashed,-] (server-fail) edge (server-fail-right);

    \draw[semithick,->] (server-fail) edge node[above,yshift=-2pt] {\scriptsize\texttt{Fail}(\textsf{Int})} (client-fail);

    \node (client-fail-left) [fill=black, draw=black, minimum size=0.2pt, inner sep=0pt, left=0.3cm of client-fail]{};
    \draw[dashed,-] (client-fail-left) edge (client-fail);

    \draw[dashed,-] (client-login-left) edge node[left,above,rotate=90,yshift=-2pt] {\scriptsize\textit{recurse}} (client-fail-left);

    \draw[dashed,-] (server-succ-right) edge node[right,above,rotate=-90,yshift=-2pt] {\scriptsize\textit{choice}} (server-fail-right);

    \node (client-1) [fill=white, minimum size=1pt, inner sep=3pt, below=0.3cm of client-rvk]{};
    \node (server-1) [fill=white, minimum size=1pt, inner sep=3pt, below=0.3cm of server-rvk]{};
  \end{tikzpicture}
   \caption[Authentication Protocol.]{Authentication Protocol.}
  \label{fig:uml-sd}
  \vspace{-3mm}
\end{wrapfigure}
Communication protocols and Application Programming Interfaces (APIs)~\cite{BroFT:14:API} 
govern the interactions between concurrent and distributed software components by exposing the functionality of a component for others to use. %
Although the order of messages exchanged and methods invoked is crucial for correct API usage,
this information is either outright omitted, or stated informally via textual descriptions~\cite{SonT:13:MC-API,Sho:ISSTA:07:AutomataAPI}.
At best, protocols and temporal API usage are described semi-formally as message sequence charts \cite{DBLP:journals/entcs/Peled02}. 
This state of affairs is conducive to conflicting interactions, which may 
manifest themselves as run-time errors, deadlocks
and livelocks.
\emph{Behavioural types}~\cite{DBLP:journals/ftpl/AnconaBB0CDGGGH16} provide a methodology to address these shortcomings, by elevating protocols and flat API descriptions to \emph{formal behavioural specifications}
with explicit \emph{sequences} and \emph{choices} of operations.
A prevalent form of behavioural types are \emph{session types}~\cite{DBLP:conf/concur/Honda93,DBLP:conf/esop/HondaVK98} which can ensure correct interactions %
that are free from communication errors, %
deadlocks and livelocks.

\begin{example}\label{eg:intro-example} 
  Consider a server that exposes the API calls \auth (authenticate), \get and \revoke (revoke). 
  The intended use of this API is to invoke \auth followed with \get and finally \revoke, as depicted in \Cref{fig:uml-sd}. 
  If authentication is successful, \auth returns a token that can be used for exclusive access to a resource with the service \get. 
  After its use, the token should be revoked with the service \revoke to allow other parties to access the resource. 
  For security reasons, the server is expected to only reply \get requests after it services an \auth request. 
  However, if the order of invocation is not respected, a client may send a \get request before an \auth request. The resulting components' interaction will be incorrect, causing an error or deadlock.
  Even worse, the server may accept the \get request and let an unauthenticated client access sensitive information.
  The protocol from the viewpoint of the client can be described as the session type:

  \smallskip\centerline{\(
  S \quad\equal\quad !\auth\,.\,\&
  \left\{
    \begin{array}{ll}
      ?\succs\,.\,!\get
      \dots
      !\revoke\,.\,S,
      \quad
      ?\texttt{Fail}\,.\,S\end{array}
  \right\}
  \)}\smallskip

Type $S$ states that the client is expected to first invoke ($!$) the service \auth and then branch ($\&$) according to the response received ($?$). 
If it receives {\succs}ess, the client can invoke \get and eventually \revoke before restarting the protocol ($S$). 
Otherwise, if  it receives \fail, the client may start following the type $S$ from the beginning and retry authentication.  
\exqed 
\end{example}

\subparagraph{Run-time monitoring of session types: promise and challenges.}
In behavioural type frameworks (including session types), the conformance between the component under scrutiny and a desired protocol is commonly checked \emph{statically}, via a type system that is tailored for the language used to develop the component. 
This avoids runtime overhead and allows for early error detection. 
However, there are cases where a (full) static analysis is not possible.  
For instance, within a distributed or collaborative system, not all system components are necessarily accessible %
for static analysis
(\eg due to 
source obfuscation). 
Components may also be implemented using different programming languages, making it infeasible to develop bespoke type-checkers for every programming language used in development. 
In these cases, post-deployment techniques such as Runtime Verification (RV)~\cite{DBLP:conf/rv/FrancalanzaAAAC17,DBLP:series/lncs/BartocciFFR18} can be used where protocol conformance 
is carried out \emph{dynamically} via \emph{monitors} \cite{DBLP:conf/tgc/ChenBDHY11,  DBLP:conf/rv/NeykovaYH13, DBLP:journals/fmsd/DemangeonHHNY15, DBLP:conf/popl/JiaGP16, DBLP:conf/cc/NeykovaY17, DBLP:journals/tcs/BocchiCDHY17, DBLP:conf/esop/GommerstadtJP18}.
Runtime monitoring of behavioural 
types comes with a set of challenges.
\begin{description}
  \item[The realisability of effective monitoring:] 
  Restrictions such as inaccessible code and license agreements 
  (regulating code modifications),  
  may restrict the ways in which software components can be instrumented,
  thus hindering a monitor's capabilities for observation and intervention.
  Moreover, the runtime overhead induced by monitors should be kept within acceptable levels.
  \item[Monitor Correctness:]
  Intuitively, a ``correct'' monitor for a session type $S$ should carry out detections that correspond to the protocol represented by $S$.
  The recent results on \emph{monitorability} help us unpack this intuition of ``correctness'' in terms of \emph{soundness} and \emph{completeness}:
  the monitor should not unnecessarily flag well-behaving code (\emph{detection soundness}~\cite{DBLP:journals/fmsd/FrancalanzaAI17,DBLP:journals/pacmpl/AcetoAFIL19}), while providing guarantees for recognising misbehaving components (\emph{detection completeness}~\cite{DBLP:conf/sefm/AcetoAFIL19,AcetoAFIL21:SOSYM}). 
\end{description} 
The aforementioned challenges are not independent of one another, and an adequate solution often needs to take both aspects into consideration.
On the one hand, monitor correctness may require computations that increase runtime overheads;
on the other hand, there are inherent limits to what can be detected at runtime
(\ie the monitorability problem~\cite{DBLP:series/lncs/BartocciFFR18}) ---
and moreover, practical implementation concerns may restrict monitoring capabilities even further (\eg 
due to the need for low overheads).
To our knowledge, the above aspects have not been fully investigated together for session types, in \emph{one unified study}:%
\begin{itemize}
  \item there is no systematic examination for the  \emph{monitorability} of session types, determining the limits of runtime monitoring when verifying session-type conformance;
  \item no previous work tackles the design of a session monitoring system that is practically \emph{realisable}, while also backed by formal detection soundness \emph{and} completeness guarantees.
\end{itemize}

\subparagraph{Contributions.}
We present the first formal analysis of the \emph{monitorability of session types},
and use it to guide the design and implementation of a practical framework
(written in Scala) for the run-time monitoring of concurrent and distributed applications.
We focus on communication protocols that can be formalised as \emph{(binary) session types}~\cite{DBLP:conf/concur/Honda93,DBLP:conf/esop/HondaVK98}
with two interacting parties (\eg a client and a server). %
Crucially, we tackle scenarios where at least one of the parties is a ``black-box'' process that may not be statically verified.
After formalising a streamlined process calculus with session types (\Cref{sec:calculus-types}),
we present our contributions:
\begin{enumerate}
  \item 
  We develop a formal model detailing how processes can be instrumented with monitors, to observe their interactions and flag violations on the offending party (\Cref{s:designing-hybrid-methodology}).
  We then design an automated synthesis procedure from session types to monitors (in this operational model) to study the monitorability of session types (\Cref{s:monitor-synthesis});
  \item 
  We carry out the first study on the \emph{monitorability} of session types, formally linking their static and run-time verification (\Cref{s:formal-results}).
  We prove that our synthesised monitors are
  \emph{detection-sound}, \ie components flagged by a monitor for session type $S$ are indeed ill-typed for $S$
  (\Cref{thrm:sound}).
  We also prove a \emph{weak detection-completeness} result (\Cref{lem:partial-monitor-completeness}) showing to what degree can our synthesised monitors detect ill-typed components.
  Importantly, we show that these limits are not specific to our synthesis procedure by proving an impossibility result: under our ``black-box'' monitoring model, session monitoring cannot be both sound and complete (\Cref{thm:impossibility}). 
  The latter results are new to the area of behavioural types;
  \item
  We show the \emph{realisability} of our model, by implementing a toolkit (called \stmonitor)
  that synthesises session monitors as executable Scala programs (\Cref{s:realisabilty}).
  \iftoggle{techreport}{%
    We provide \stmonitor as companion artifact of this work.
  }{
    We provide \stmonitor as companion artifact of this paper.
  }
  We also provide evaluation benchmarks showing that our generated Scala monitors induce limited overheads, hence their usability in practice appears promising (\Cref{s:implementation}).
\end{enumerate} 

Proofs and additional details are available
\iftoggle{techreport}{%
   in the appendices.%
}{%
  in the extended version of this paper \cite{burlo2021monitorability-techrep}.%
}

\section{Process Calculus and Session Types}
\label{s:background}
\label{sec:calculus-types}

This section 
introduces the formalism at the basis of our work:
a streamlined process calculus (\Cref{s:process-calculus})
with standard session types (\Cref{s:binary-session-types})
and typing system (\Cref{s:session-type-system}).

\subsection{Process Calculus}\label{s:process-calculus}

\subparagraph{Syntax.}
We adopt a streamlined process calculus that models
a sequential process interacting on a single communication channel,
similar to
\cite{GhilezanPPSY21,GhilezanJPSY19,SeveriD19}. 
Our process calculus %
is defined in Figure \ref{fig:process-calc-syntax-semantics}. 
The syntax assumes separate denumerable sets of \textbf{values} $v, u, w \in \Val$ (including tuples), value \textbf{variables} $x, y, z \in \Var$ and \textbf{process variables} $X, Y \in \PVar$.
We use $a,b$ to range over the set $\textsc{Val}\,\cup\,\textsc{Var}$. %
The syntax also assumes a set of %
\textbf{predicates} $A$ (used in conditionals). 
A process may communicate by sending or receiving \textbf{messages} of the form $\pmsgk{l}{v}$,
where $\texttt{l}$ is a \textbf{label}, and $v$ is the \textbf{payload value}. %
To this end, a process may perform \textbf{outputs} $\psndk{l}{a}.P$
(\ie send message $\pmsgk{l}{v}$ and continue as $P$),
or \textbf{inputs} $\prcv{i}{I}$
(\ie receive a message with label $\texttt{l}_i$ for any $i \in I$, and continue as $P_i$, with $x_i$ replaced by the message payload).
Loops are supported by the \textbf{recursion} construct $\pmu{X}.P$,
and the \textbf{process variable} $X$.
The process $\pnil$ represents a \textbf{terminated} process.
The calculus also includes a standard \textbf{conditional} construct $\pif{A}{P}{Q}$.
We assume that all recursive processes are \textbf{guarded}, \ie process variables can only occur under an input or output prefix. 
The calculus has two \textbf{binders}:
the input construct $\prcv{i}{I}$ binds the free occurrences of the (value) variables $x_i$ %
in the continuation process $P_i$, whereas the recursion construct $\mu_X.P$ binds the process variable $X$ in the continuation process $P$. 
\begin{figure}
  \small
  \textbf{Syntax}\hspace{1cm}
  {\(  
    \begin{array}{r@{\hskip 2mm}r@{\hskip 2mm}c@{\hskip 2mm}l}
      \text{Predicates} & A & \Coloneqq & \textsf{tt} \;\vert\; \textsf{ff} \;\vert\; v_{1} \equal\equal v_{2} \;\vert\; v_{1} >\equal v_{2} \;\vert\; A_{1}{\normalfont\textsf{ \&\& }}A_{2} \;\vert\; !A \;\vert\; \ldots\\[1mm]
      \text{Processes} & P,Q & \Coloneqq & \psndk{l}{a}.P \;\vert\;\prcv{i}{I} \;\vert\; \pmu{X}.P \;\vert\; X \;\vert\; \pif{A}{P}{Q} \;\vert\; \pnil\\
    \end{array}
  \)}\bigskip\\
  \textbf{Semantics}
  \hspace{5mm}
  {
    \inference[\text{[\textsc{pRec}]}]{}{\mu_{X}.P \xrightarrow{\tau} P[\nicefrac{\mu_{X}.P}{X}]}
    \hspace{1.5cm}
    \inference[\text{[\textsc{pSnd}]}]{}{\psnd{}.P \xrightarrow{\sndact{l}{v}} P}
  }\\[2mm]
  \centerline{
    \inference[\text{[\textsc{pRcv}]}]{\phantom{X}}{\prcv{i}{I} \xrightarrow{\triangleright \pmsgNoVIdx{j}} P_{j}[\nicefrac{v}{x_j}]} $j\in I$
  }\\[2mm]
  \centerline{
    \inference[\text{[\textsc{pTru}]}]{A \Downarrow \textsf{tt}}{\pif{A}{P}{Q}\xrightarrow{\tau} P}
    \hspace{1cm}
    \inference[\text{[\textsc{pFls}]}]{A \Downarrow \textsf{ff}}{\pif{A}{P}{Q}\xrightarrow{\tau} Q}
  }
  \caption[Process Calculus Syntax and Semantics]{Process Calculus Syntax and Semantics}\label{fig:process-calc-syntax-semantics}
\end{figure}
 
\subparagraph{Semantics.} The dynamic behaviour of a process is described by the transition rules in \Cref{fig:process-calc-syntax-semantics}. %
The rules take the form $P \xrightarrow{\mu} P'$, where the transition \textbf{action} $\mu$
can be either an \textbf{output action} $\sndact{l}{v}$, an \textbf{input action} $\rcvactk{l}{v}$, or a \textbf{silent action} $\tau$.
Rule \ltsrule{pRec} allows the recursive process $\pmu{X}.P$ to unfold. %
Rules \ltsrule{pSnd} and \ltsrule{pRcv} enable communication: %
\begin{itemize}
\item by \ltsrule{pSnd}, process $\psndk{l}{v}.P$ sends a message by firing action $\sndact{l}{v}$ and continuing as $P$;
\item by \ltsrule{pRcv}, process $\prcv{i}{I}$ can receive a message $\pmsgNoVIdx{j}$ ($j\in I$)
by firing action $\triangleright\pmsgNoVIdx{j}$ and continuing as $P_j$, with the payload value $v$ replacing the variable $x_j$. 
\end{itemize}
The remaining two rules \ltsrule{pTru} and \ltsrule{pFls} define the silent transitions when the predicate in the process $\pif{A}{P}{Q}$ evaluates to true ($A\Downarrow\true$) or false ($A\Downarrow\false$), respectively. 
For brevity, we often omit the trailing \textbf{0} and write $\prcvk{l}{v}.P$ for singleton input choices. 

\begin{example}[Process syntax and semantics]\label{eg:process-calculus}
  \begin{subequations}
    Recall the protocol depicted in \Cref{fig:uml-sd}.
    A corresponding client process %
    for this protocol is defined as $P_\textit{auth}$ below. 
    
    \smallskip\centerline{\(
      \small P_\textit{auth}\, \equal\, \pmu{X}.\triangleleft \texttt{Auth}(\pstring{Bob},\pstring{pwd}).P_\textit{res}
      \qquad
      \text{\small{where} }\; P_\textit{res}\, \equal\, \triangleright \big\{ \texttt{Succ}(\textit{tok}).P_\textit{succ}\,,\;\texttt{Fail}(\textit{code}).P_\textit{fail}\big\}
    \)}\smallskip

    \noindent From the rules in Figure \ref{fig:process-calc-syntax-semantics},
    the process $P_\textit{auth}$ executes as follows:
    \begin{align}
      \nonumber
      P_\textit{auth} & \xrightarrow{\tau} \big{(} \triangleleft \texttt{Auth}(\pstring{Bob},\pstring{pwd}).P_\textit{res}\big{)} [\nicefrac{P_\textit{auth}}{X}] && \quad\text{using \ltsrule{pRec}}\\
      \nonumber
      &\quad\xrightarrow{\triangleleft \texttt{Auth}(\pstring{Bob},\pstring{pwd})}  \triangleright \left\{ \begin{array}{ll} \texttt{Succ}(\textit{tok}).P_\textit{succ}[\nicefrac{P_\textit{auth}}{X}]\,,\\\texttt{Fail}(\textit{code}).P_\textit{fail}[\nicefrac{P_\textit{auth}}{X}] \end{array}\right\} && \quad\text{using \ltsrule{pSnd}}\\
      \intertext{The process performs a silent action $\tau$ %
      to unfold its recursion, and then sends a message %
      with label \texttt{Auth} and tuple \pstring{Bob},\pstring{pwd} as payload.
      If the authentication is successful, the process receives the message \texttt{Succ} including a token \textit{tok} and proceeds according to $P_\textit{succ}$ (omitted):}
      \nonumber
      &\quad\quad \xrightarrow{\triangleright \texttt{Succ}(321)} P_\textit{succ}[\nicefrac{P_\textit{auth}}{X}][\nicefrac{321}{\textit{tok}}] && \quad \text{using \ltsrule{pRcv}}
    \end{align}
    Otherwise, if the authentication is unsuccessful, the process receives the message \texttt{Fail} including an error \textit{code} from the server and proceeds according to $P_\textit{fail}$. \exqed
  \end{subequations}
\end{example} 
\subsection{Binary Session Types}\label{s:binary-session-types}

Session types 
describe the structure of interaction among processes. 
They enable the verification of 
communicating systems against a stipulated communication protocol. 
Figure \ref{fig:session-types-noa} formalises binary session types.
\begin{figure}
  \small
  \textbf{Syntax}\hspace{5mm}
  {\(
    \begin{array}{r@{\hskip 2mm}r@{\hskip 2mm}c@{\hskip 2mm}l}
      \text{Base types}& \textsf{B} & \Coloneqq & \normalfont{\textsf{Int}} \;\vert\; \normalfont{\textsf{Str}} \;\vert\; \normalfont{\textsf{Bool}} \;\vert\; \ldots \;\vert\; (\textsf{B},\textsf{B})
      \\[1mm]
      \text{Session types}& R,S & \Coloneqq & \underbrace{\stselNA{i}{I} \;\;\vert\;\; \stbraNA{i}{I}}_{\text{with $I \!\neq\! \emptyset$ and $\texttt{l}_i$ pairwise distinct}} \;\;\vert\;\; \strec{X}.S \;\;\vert\;\; X \;\;\vert\;\; \textsf{end} 
    \end{array}
    \)
  }

  \medskip

  \textbf{Dual types}\hspace{5mm}
  {\(
    \begin{array}{r@{\;\;}c@{\;\;}l@{\qquad}c}
    \overline{\stbraNA{i}{I}} &=& \oplus \big\{!\texttt{l}_{i}(\textsf{B}_{i}).\overline{S_{i}}\big\}_{i\in I} &
    \overline{\textsf{end}} \,=\, \textsf{end} \qquad {\overline{X}} \,=\, X\\[2mm]
    \overline{\stselNA{i}{I}} &=& \& \big\{?\texttt{l}_{i}(\textsf{B}_{i}).\overline{S_{i}}\big\}_{i\in I}&
    \overline{\strec{X}.S} \;=\; \strec{X}.\overline{S}
  \end{array}
  \)}
  \caption[Session Types Syntax, and Definition of Dual Types]{Session Types Syntax, and Definition of Dual Types.}\label{fig:session-types-noa}
\end{figure} %
We assume a set of standard base types $B$ which includes tuples. 
The \textbf{selection type} (or \textbf{internal choice}) $\stselNA{i}{I}$ requires a component to 
send a message $\pmsgNoVIdx{i}$ where the value $v$ has base type $\textsf{B}_i$, for some $i\in I$. 
The \textbf{branching type} (or \textbf{external choice}) $\stbraNA{i}{I}$ requires a component to receive a message of the form $\pmsgNoVIdx{i}$, where the value $v$ (\ie the message payload) is of the corresponding base type $\textsf{B}_i$ for any $i \in I$. 
The \textbf{recursive} session type $\strec{X}.S$ binds the recursion variable $X$ in $S$ (we assume guarded recursion), while \textsf{end} describes a \textbf{terminated} session. 
For brevity, we often omit $\oplus$ and $\&$ for singleton choices, as well as trailing \textsf{end}s.

A process implementing a session type $S$ can correctly interact with a process implementing the \textbf{dual type} of $S$, denoted as $\overline{S}$ (defined in \Cref{fig:session-types-noa}). 
Intuitively, the dual type of a selection 
is a branching 
type with the same choices.
Hence, every possible output from one component matches an input by the other component, and \emph{vice versa}.
Duality guarantees that the interaction between typed components is \emph{safe} (\ie only expected messages are communicated) and \emph{deadlock-free} (\ie the session 
terminates only if both components reach their end).

\begin{example}\label{eg:session-types-noa}\begin{subequations}
  The session type $S_\textit{auth}$ below formalises the first part of the protocol that the \client in \Cref{fig:uml-sd} is expected to follow (\ie the type $S$ in \Cref{eg:intro-example}). 
  \begin{align}
    \nonumber
    S_\textit{auth} \;\;\equal\;\; &\strec{Y}.!\texttt{Auth}(\textsf{Str},\textsf{Str}).
    \&\big\{?\texttt{Succ}(\textsf{Str}).S_\textit{succ}\,,\;?\texttt{Fail}(\textsf{Int}).Y \big\}
  \end{align}
  The \server 
  should follow  
  \begin{math}
    \overline{S_\textit{auth}}\ \equal\ \strec{Y}.?\texttt{Auth}(\textsf{Str},\textsf{Str}).
    \oplus\big\{!\texttt{Succ}(\textsf{Str}).\overline{S_\textit{succ}}\,,\;!\texttt{Fail}(\textsf{Int}).Y \big\}
  \end{math}, its dual.
  According to $S_\textit{auth}$, the \client 
  initiates interaction by sending a message with label $\texttt{Auth}$,
  carrying a tuple of strings (username and password) as payload.
  The \server should then
  reply with either \labk{Succ}ess (carrying a string), or \labk{Fail}ure (with an integer error code). 
  In case of \labk{Succ}ess, the \client 
  continues along 
  $S_\textit{succ}$.
  In case of \labk{Fail}ure, the session loops. \exqed
\end{subequations}
\end{example} 

\subsection{Session Typing System}\label{s:session-type-system}

Our session typing system (in \Cref{fig:session-typing-rules-noa}) is standard.
It uses two \textbf{typing environments} $\Theta$ and $\Gamma$,
where $\Theta$ is a partial mapping from process variables to session types,
while $\Gamma$ is a partial mapping from value variables to base types.
We represent them syntactically as:

{
\centerline{
  \(
  \Theta \;\Coloneqq\; \emptyset\ \big| \ \Theta,X:S \qquad\qquad \Gamma \;\Coloneqq\; \emptyset\ \big| \ \Gamma,x:\basetype
  \)
}
}
\noindent
The type system is \emph{equi-recursive} \cite{Pierce02}: when comparing two types, we consider a recursive type $\strec{X}.S$ to be equivalent to its unfolding $S[\nicefrac{\strec{X}.S}{X}]$ %
(\ie interchangeable in all contexts). 
\begin{figure}
  \small
  \textbf{Identifier Typing}\hspace{2cm}
    \inference[\text{[\textsc{tVar}]}]{\Gamma(x)\equal \basetype}{\Gamma \rightvdash x : \basetype}
    \hspace{1cm}
    \inference[\text{[\textsc{tVal}]}]{v \in \basetype}{\Gamma \rightvdash v : \basetype}
  \bigskip\\
  \textbf{Process Typing}
  \bigskip\\
  \centerline{
    \inference[\text{[\textsc{tBra}]}]{\forall i \in I \qquad \Theta\cdot\Gamma, x_i:\textsf{B}_i \rightvdash P_i:S_i}{\Theta\cdot\Gamma \rightvdash \prcv{i}{I \cup J}: \stbraNA{i}{I}}
    \hspace{1cm}
    \inference[\text{[\textsc{tRec}]}]{\Theta, X:S\cdot\Gamma\rightvdash P:S}{\Theta\cdot\Gamma \rightvdash \pmu{X}.P:S}
  }\bigskip\\
  \centerline{
    \inference[\text{[\textsc{tSel}]}]{\exists i \in I \qquad \texttt{l} \equal \texttt{l}_i \qquad \Gamma\rightvdash a:\textsf{B}_i \qquad \Theta\cdot\Gamma \rightvdash P:S_i}{\Theta\cdot\Gamma \rightvdash \psndk{l}{a}.P: \stselNA{i}{I}}
    \hspace{1cm}
    \inference[\text{[\textsc{tPVar}]}]{\Theta(X)\equal S}{\Theta\cdot\Gamma \rightvdash X :S}
  }\bigskip\\
  \centerline{
    \inference[\text{[\textsc{tIf}]}]{\Gamma\rightvdash A:\textsf{Bool}&\Theta\cdot\Gamma\rightvdash P:S & \Theta\cdot\Gamma\rightvdash Q:S}{\Theta\cdot\Gamma \rightvdash \pif{A}{P}{Q}: S}
    \hspace{1cm}
    \inference[\text{[\textsc{tNil}]}]{}{\Theta\cdot\Gamma \rightvdash \pnil: \textsf{end}}
  }
  \caption[Session Typing Rules]{Session Typing Rules.}\label{fig:session-typing-rules-noa}
\end{figure}
 
The typing judgement for values and variables %
is\; $\Gamma\rightvdash a:\basetype$, defined by rules \ltsrule{tVar} and 
\ltsrule{tVal}.
The process typing judgement, $\Theta \cdot \Gamma \rightvdash P:S$,
states that process $P$ communicates according to session type $S$, given the typing assumptions in $\Theta$ and $\Gamma$. 
In the \emph{branching rule} \ltsrule{tBra}, an input process has a branching type $\stbraNA{i}{I}$ if all the possible branches in the type are present as choices in the process, with matching labels. 
Hence, the process must have the form $\prcv{i}{I \cup J}$
(notice that if $J \neq \emptyset$, the process has more input branches than the type).
Moreover, for each matching branch, each continuation process $P_i$ (for $i \in I$) must be typed with the corresponding continuation type $S_i$, assuming that the received message payload $x_i$ has the expected type $\textsf{B}_i$.
The \emph{selection rule} \ltsrule{tSel} states that $\psndk{l}{a}.P$ follows a selection type of the form $\stselNA{i}{I}$ if there exists a possible choice in the type that matches the message $\lab(a)$. 
To match, the labels must be identical, and the type of the payload $a$ must be of the type $\basetypek{i}$ stated in the session type, and the continuation process $P$ must be of the continuation type $S_i$. 
The remaining rules are fairly standard.

\begin{remark}
  \label{remark:predicates}
Although we do not fix the boolean predicates $A$, 
we assume that:
\begin{enumerate}
  \item boolean predicates can be type-checked with standard rules;
  \item\label{assumption:bi}
    base types $\textsf{B}$ come with a predicate $\isValueB{}{v}$
    that returns $\true$ if $v$ is of type $\textsf{B}$, and $\false$ otherwise (akin to {\small\texttt{instanceof}} in Java.)  \exqed
\end{enumerate}
\end{remark}

\begin{example}\label{eg:type-system}\begin{subequations}
  Recall the process $P_\textit{auth}$ (\Cref{eg:process-calculus}) and the session type $S_\textit{auth}$ (\Cref{eg:session-types-noa}):
 
  \smallskip\centerline{\(
    P_\textit{auth}\ \equal\ \pmu{X}.\triangleleft \texttt{Auth}(\pstring{Bob},\pstring{pwd}).P_\textit{res}
    \qquad
    S_\textit{auth}\ \equal\ \strec{Y}.!\texttt{Auth}(\textsf{Str},\textsf{Str}).S_\textit{res}
  \)}\smallskip
  
  \noindent%
  One can show that $P_\textit{auth}$ type-checks with 
  $S_\textit{auth}$, \ie $\emptyset\cdot\emptyset \rightvdash P_\textit{auth}: S_\textit{auth}$. \exqed 
\end{subequations}

\end{example} 

\section{A Formal Model for Monitoring Sessions}
\label{s:designing-hybrid-methodology}

We now formalise an operational setup that enables us to verify the (binary) session types of \Cref{s:background} at runtime.
Our runtime analysis is conducted by \emph{uni-verdict} rejection monitors, whose purpose is to flag any session type violations detected (\ie \emph{violation monitors}~\cite{DBLP:journals/fmsd/FrancalanzaAI17,DBLP:journals/pacmpl/AcetoAFIL19}).

\subsection{Monitor and Instrumentation Design}
\label{s:monitor-design}
\label{s:monitor-instrumentation}

We now illustrate the design decisions behind our formal monitoring framework.  To this end, we use as a reference the client-server system outlined in \Cref{eg:intro-example}.
Consider, in particular, the scenario depicted in \Cref{fig:potential-designs:no-monitors}, where a \client is expected to interact with a \server following the prescribed protocol $S_\textit{auth}$;
the \server is \emph{trusted} and guaranteed to adhere to the dual type $\overline{S_\textit{auth}}$
(\eg because it has been statically typechecked against $\overline{S_\textit{auth}}$ using the type system in \Cref{s:session-type-system})
--- but we have limited control over the \client, which might be untyped,
hence its interactions are potentially unsafe.

Our setup should place no assumptions on the \client, largely treating it as a black box. 
In fact, we target scenarios where the \client source is inaccessible, possibly remote, interacting with the server via a generic channel of communication (\eg TCP sockets or HTTP addresses).
This precludes the possibility of weaving the monitor within the \client component. 
To achieve a model that can handle these requirements, we restrict ourselves to \emph{outline monitors}~\cite{DBLP:series/lncs/BartocciFFR18,AAttardFI:outlined-choreographed-tech:21}, which are decoupled from the process-under-scrutiny as concurrent units of code that can be more readily deployed over a black-box component; 
outline monitors are also easier to verify for correctness via compositional techniques~\cite{DBLP:conf/sp/AmorimDGHPST15,DBLP:conf/fossacs/Francalanza16,DBLP:conf/concur/Francalanza17,BasinDHKRST:2020:verifiedMon,DBLP:conf/coordination/FrancalanzaX20,Monitors:21}.

\begin{wrapfigure}[20]{r}{7cm}
  \vspace{-0.2cm}

  \begin{subfigure}{7cm}
    \begin{tikzpicture}[decoration=penciline]
      \node(client) [process, fill=gray!20] {\textit{client}};
      \node(server) [process, right=3cm of client] {\textit{server}};
      \draw[thin,double distance=2pt] (server) -- (client);

      \node(snake-start) [point, above right=0.1cm and 1.5cm of client] {};
      \node(snake-end) [point, below right=0.7cm and 1.5cm of client] {};

      \draw[preaction={draw, line width=1pt, white},decorate,decoration={snake, segment length=3mm, amplitude=.5mm, post length=0mm}] (snake-start) -- (snake-end);

      \node(s-typechecks) [below=0.14cm of server, decorate, draw, font=\footnotesize] {$\rightvdash \textit{server}:\overline{S_\textit{auth}}$};
      \node(c-typechecks) [below=0.14cm of client, decorate, draw, font=\footnotesize] {$\rightvdash \textit{client}:S_\textit{auth}$};
      \node(qmark-l) [left=0.008cm of c-typechecks, font=\small, rotate=10, inner sep=2pt] {\textbf{?}};
      \node(qmark-r) [right=0.008cm of c-typechecks, font=\small, rotate=-10, inner sep=2pt] {\textbf{?}};

    \end{tikzpicture}
    \caption{\centering No monitors.}
    \label{fig:potential-designs:no-monitors}
  \end{subfigure}

  \bigskip
  \smallskip
  \begin{subfigure}{7cm}
    \begin{tikzpicture}
      \node(client) [process, fill=gray!20] {\textit{client}};
      \node(server) [process, right=1.5cm of client] {\textit{server}};
      \node(monitor) [monitor, right=1.5cm of server]{$\textit{monitor}_a$};
      \draw (server) edge[{Arc Barb[reversed]}->] (monitor);

      \draw[thin,double distance=2pt] (server) -- (client);

      \draw [dotted] ($(monitor.north west)+(-0.15cm,0.15cm)$) rectangle ($(monitor.south east)+(0.15cm,-0.15cm)$);

      \draw[->, dashed, semithick, bend left=20] (server) edge node(auth-mon)[above, xshift=4pt, yshift=1.5pt, font=\footnotesize]{$\triangleright\texttt{Auth}(\texttt{``Bob''},\texttt{``pwd''})$} node[pill, solid] {2} (monitor);
      \draw[->, dashed, semithick, bend left=20] (client) edge node(auth)[above, xshift=-4pt, yshift=1.5pt, font=\footnotesize]{$\texttt{Auth}(\texttt{``Bob''},\texttt{``pwd''})$} node[pill, solid] {1} (server);

      \draw[->, dashed, semithick, bend right=20] (server) edge node(succ-mon)[below, yshift=-1.5pt, font=\footnotesize]{$\triangleleft\texttt{Fail}(\texttt{1})$} node[pill, solid] {4} (monitor);
      \draw[->, dashed, semithick, bend left=20] (server) edge node(succ)[below, yshift=-1.5pt, font=\footnotesize]{$\texttt{Fail}(\texttt{1})$} node[pill, solid] {3} (client);

    \end{tikzpicture}
    \caption{\centering Server side instrumentation.}
    \label{fig:potential-designs:a}
  \end{subfigure}
  
  \bigskip
  \smallskip
  \begin{subfigure}{7cm}
    \begin{tikzpicture}
      \node(client) [process, fill=gray!20] {\textit{client}};
      \node(monitor) [monitor, right=1.5cm of client]{$\textit{monitor}_b$};
      \node(server) [process, right=1.5cm of monitor] {\textit{server}};

      \draw[->, dashed, semithick, bend left=20] (monitor) edge node(auth-mon)[above, font=\footnotesize, yshift=1.5pt]{$\texttt{Auth}(\texttt{``Bob''},\texttt{``pwd''})$} node[pill, solid] {2} (server);
      \draw[->, dashed, semithick, bend left=20] (client) edge node(auth)[above, font=\footnotesize, yshift=1.5pt]{$\texttt{Auth}(\texttt{``Bob''},\texttt{``pwd''})$} node[pill, solid] {1} (monitor);

      \draw[->, dashed, semithick, bend left=20] (server) edge node(succ-mon)[below, font=\footnotesize, yshift=-1.5pt]{$\texttt{Fail}(\texttt{1})$} node[pill, solid] {3} (monitor);
      \draw[->, dashed, semithick, bend left=20] (monitor) edge node(succ)[below, font=\footnotesize, yshift=-1.5pt]{$\texttt{Fail}(\texttt{1})$} node[pill, solid] {4} (client);

      \draw[thin,double distance=2pt] (client) -- (monitor);

      \draw[thin,double distance=2pt] (monitor) -- (server);

      \draw [dotted] ($(monitor.north west)+(-0.15cm,0.15cm)$) rectangle ($(monitor.south east)+(0.15cm,-0.15cm)$);

    \end{tikzpicture}
    \caption{\centering Channel instrumentation.}
    \label{fig:potential-designs:b}
  \end{subfigure}
  \caption{Design choices for instrumentation.}
  \label{fig:potential-designs} \end{wrapfigure}
Our model focusses on the communication occurring on the channel between the \client and the \server~--- and we assume such communication to be \emph{synchronous} and \emph{reliable}.
Outline monitors can typically only analyse the externally \emph{observable} actions of a monitored component.
In our case, monitored processes follow the semantics of \Cref{fig:process-calc-syntax-semantics}, hence the only observable actions are send ($\sndact{l}{v}$) and receive ($\rcvactk{l}{v}$); $\tau$-moves are unobservable. 

We consider two potential instrumentation setup designs for an outline approach. 
In the setup in \Cref{fig:potential-designs:a}, the \server is instrumented with a \emph{sequence-recogniser monitor}~\cite{Schneider:2000,Ligatti05} (\monA). 
The \server is required to notify \monA about every send and receive action it performs---this can be achieved 
via listeners added through mechanisms such as class-loaders, agents and VM-level tracers. 
For \monA, every receive action the \server performs indicates a send action by the \client and vice-versa (\ie every send indicates a receive). 
  In \Cref{fig:potential-designs:a}, 
  The \client sends the message $\texttt{Auth}(\texttt{``Bob''}, \texttt{``pwd''})$ to the \server \pill{1}. 
  Once received, \pill{2} the \server notifies \monA with the message contents 
  and the direction of the message ($\triangleright$). 
  For \monA this indicates that the \client sent the particular message. 
  After the \server replies with the message $\texttt{Fail}(\texttt{1})$ \pill{3}, it notifies \monA with the message contents and the direction ($\triangleleft$) \pill{4}, indicating that the \client received the message. 

In the alternative setup depicted in \Cref{fig:potential-designs:b}, the monitor (\monB) is instrumented on the communication channel and acts as a \emph{proxy} (or a \emph{partial-identity monitor}~\cite{DBLP:conf/esop/GommerstadtJP18}) between the two components. 
Any communicated messages must pass through \monB in order for it to analyse them.
  In the execution of \Cref{fig:potential-designs:b} 
  the \client sends the message $\texttt{Auth}(\texttt{``Bob''}, \texttt{``pwd''})$ to \monB \pill{1}. 
  The monitor checks that its contents conform with the protocol before proceeding to forward the message to the \server \pill{2}. 
  The \server replies by sending the message $\texttt{Fail}(\texttt{1})$ to \monB \pill{3}, which forwards it straight to the \client \pill{4}. 

On the one hand, the monitor in \Cref{fig:potential-designs:a} is completely passive: it performs analysis in response to the events received. 
On the other hand, the monitor in \Cref{fig:potential-designs:b} is also responsible for \emph{forwarding} messages between the \client and the \server. 
Thus, the communication between the two components in \Cref{fig:potential-designs:b} \emph{relies} on 
\monB: should the monitor crash or terminate abruptly, the \client and the \server will stop interacting. 
Moreover, the setup in \Cref{fig:potential-designs:b} introduces additional delays when every communicated message passes through \monB; these are avoided in \Cref{fig:potential-designs:a}.  
The main drawback of the setup in \Cref{fig:potential-designs:a} is that the \server is directly exposed to an untrusted client, with additional responsibility of reporting events.
In contrast, the instrumentation in \Cref{fig:potential-designs:b} provides a layer of protection to the \server from potentially malicious interactions: if the \client sends a message that violates the protocol, \monB is able to flag the message without forwarding it to the \server. 
Moreover, the setup in \Cref{fig:potential-designs:b} provides more flexibility for reasoning on the run-time monitoring of systems where \emph{both} the \client and the \server are black boxes.
This work 
opts for the setup in \Cref{fig:potential-designs:b}. 

\subsection{A Monitor Calculus}

\Cref{fig:monitor-calculus} describes the structure and behaviour of a partial-identity monitor operating as in \Cref{fig:potential-designs:b}.
Monitors are similar to the processes defined in \Cref{fig:process-calc-syntax-semantics}, with a few key additions.
Since monitors need to interact with the environment, they also include the constructs $\envsnd{l}{a}.M$ and $\envrcv$, and rules \ltsrule{mOut} and \ltsrule{mIn}:
they are analogous to the process output and input constructs, where
interaction takes place between the
environment and the monitor instead. 
We use the terms \textbf{internal} and \textbf{external} to differentiate
between actions involving the monitored process and the environment, respectively. 

\begin{figure}
  \small
  \textbf{Syntax}\hspace{1cm}
  \(  
    \begin{array}{r@{\hskip 2mm}r@{\hskip 2mm}c@{\hskip 2mm}l}
      \text{Monitor} & M,N & \coloneqq & \monsndk{l}{a}.M \;\vert\;\monrcv{i}{I} \;\vert\;\monout{l}{a}.M \;\vert\; \monin{i}{I}\\[1mm]
      & & & \;\vert\; \monmu{X}.M \;\vert\; X \;\vert\; \pif{A}{M}{N} \;\vert\; \pnil \;\vert\; \no{}{P} \;\vert\; \no{}{E}\\
    \end{array}
  \)
  \bigskip\\
  \textbf{Semantics}
  \medskip\\
  \centerline{
    \inference[\text{[\textsc{mSnd}]}]{}{\monsndk{l}{v}.M \xrightarrow{\monsndact{l}{v}} M}
    \hspace{0.5cm}
    \inference[\text{[\textsc{mOut}]}]{}{\monout{l}{v}.M \xrightarrow{\monoutact{l}{v}} M}
    \hspace{0.5cm}
    \inference[\text{[\textsc{mRec}]}]{}{\monmu{X}.M \xrightarrow{\tau} M[\nicefrac{\monmu{X}.M}{X}]}
  }\bigskip\\
  \centerline{
    \inference[\text{[\textsc{mRcv}]}]{}{\monrcv{i}{I} \xrightarrow{\monrcvactkNoVIdx{j}} M_{j}[\nicefrac{v}{x_j}]} $j\in I$%
    \hspace{0.5cm}
    \inference[\text{[\textsc{mIn}]}]{}{\monin{i}{I} \xrightarrow{\moninactk{j}} M_{j}[\nicefrac{v_{j}}{x_j}]} $j\in I$%
  }\medskip\\
  \centerline{
    \inference[\text{[\textsc{mTru}]}]{A \Downarrow \textsf{tt}}{\pif{A}{M}{N} \xrightarrow{\tau} M}
    \hspace{1cm}
    \inference[\text{[\textsc{mFls}]}]{A \Downarrow \textsf{ff}}{\pif{A}{M}{N} \xrightarrow{\tau} N}
  }\bigskip\\
  \textbf{Violation Semantics}
  \medskip\\
  \centerline{
    \inference[\text{[\textsc{mIV}]}]{}{\monrcv{i}{I} \xrightarrow{\monrcvactkNoVIdx{{}}} \no{}{P}} $\forall i \!\in\!I: {\texttt{l} \!\neq\! \texttt{l}_i}$%
    \hspace{0.5cm}
    \inference[\text{[\textsc{mEV}]}]{}{\monin{i}{I} \xrightarrow{\moninactk{{}}} \no{}{E}} $\forall i \!\in\!I: {\texttt{l} \!\neq\! \texttt{l}_i}$%
  }\medskip\\
  \caption[Monitor Syntax and Semantics]{Monitor Syntax and Semantics}\label{fig:monitor-calculus}
\end{figure}

\begin{wrapfigure}[11]{r}{5.8cm}
  \vspace{-0.5cm}
  \begin{subfigure}{5.8cm}
    \begin{tikzpicture}[decoration=penciline]
      \node(client) [process, fill=gray!20] {\textit{client}};
      \node(monitor) [monitor, right=2cm of client]{\textit{monitor}};
      \draw[->, dashed, semithick,] (client) edge node(auth-mon)[above, font=\scriptsize, yshift=-1pt]{$\texttt{Login}(\texttt{``Bob''})$} (monitor);
      \node(static-error-internal)[draw, decorate, font=\scriptsize,below=0.2cm of monitor]{$\not{\rightvdash}\triangleleft \texttt{Login}(\texttt{``Bob''}):S_\textit{auth}$};
      \node(no-s-p)[right=0.5cm of monitor, inner sep=1.5 pt, text=red!70!black] {$\no{}{P}$};
      \draw[-stealth, dotted, semithick] (monitor) edge (no-s-p);
    \end{tikzpicture}
    \caption{\centering Internal violation.}
    \label{fig:monitor-violations:nosp}
  \end{subfigure}
  \smallskip

  \begin{subfigure}{5.8cm}
    \begin{tikzpicture}[decoration=penciline]
      \node(monitor) [monitor]{\textit{monitor}};
      \node(server) [process, right=2cm of monitor] {\textit{server}};
      \draw[->, dashed, semithick] (server) edge node(auth-mon)[above, font=\scriptsize, yshift=-1pt]{$\texttt{Res}(\texttt{227})$} (monitor);
      \node(static-error-internal)[draw, decorate, font=\scriptsize,below=0.2cm of monitor]{$\not{\rightvdash}\triangleright \texttt{Res}(\text{227}):S_\textit{auth}$};
      \node(no-s-e)[left=0.5cm of monitor, inner sep=1.5 pt, text=red!70!black] {$\no{}{E}$};
      \draw[-stealth, dotted, semithick] (monitor) edge (no-s-e);
    \end{tikzpicture}
    \caption{\centering External violation.} %
    \label{fig:monitor-violations:nose}
  \end{subfigure}
  \caption{Monitor violations.}
  \label{fig:monitor-violations}
 \end{wrapfigure}
As shown in Figure \ref{fig:monitor-violations}, monitors can reach two kinds of \textbf{rejection verdicts}, namely $\no{}{P}$ and $\no{}{E}$; 
the $P$ and $E$ tags distinguish between violations committed by the \emph{monitored process} ($P$) and the \emph{environment} ($E$). 
The rules \ltsrule{mIV} and \ltsrule{mEV} specify how the monitor reaches a verdict. 
Rule \ltsrule{mIV} represents the case when the monitor receives a violating message $\pmsgk{l}{v}$ and consequently reaches the verdict $\no{}{P}$; 
the message is deemed violating since its label is not among those that the monitor expects to receive. 
Symmetrically, in rule \ltsrule{mEV} the monitor reaches $\no{}{E}$ when it receives a violating message from the external environment. 
The following example outlines the scenarios in which monitors reach a verdict.

\begin{example}\label{eg:monitor-violations}
  \Cref{fig:monitor-violations} depicts a monitor verifying the conformity of a \client with the session type $\sauth$ (from \Cref{eg:session-types-noa}). 
  In \Cref{fig:monitor-violations:nosp}, the \client sends the message $\texttt{Login}(\texttt{``Bob''})$.
  Since the type $\sauth$ states that the \client should send a message with label $\texttt{Auth}$, %
  the monitor reaches the verdict $\no{}{P}$ by rule \ltsrule{mIV}.
  In \Cref{fig:monitor-violations:nose}, the monitor receives $\texttt{Res}(\texttt{227})$ from the environment (which represents a buggy server). In this case the monitor reaches the verdict $\no{}{E}$ (by rule \ltsrule{mEV}) since the message does not conform with $\sauth$ which states that the \client should receive either $\texttt{Succ}$ or $\texttt{Fail}$. 
\exqed
\end{example}
 
\begin{remark}
  According to \Cref{fig:monitor-calculus},
  our monitors can reach a verdict explicitly in their syntax (by having $\no{}{P}$/$\no{}{E}$ in their body),
  or by just transitioning to a verdict via rules \ltsrule{mIV} or \ltsrule{mEV}.
  We will make use both methods for our synthesised monitors (see \Cref{s:monitor-synthesis}).\exqed
\end{remark}

\subsection{Composite Monitored System}
\label{s:composite-monitored-system}

\begin{figure}
  \small
  \centerline{
    \inference[\text{[\textsc{iSnd}]}]{P \xrightarrow{\triangleleft \texttt{l}(v)} P' & M \xrightarrow{\triangleright\texttt{l}(v)} M'}{\langle P ; M\rangle \xrightarrow{\tau} \langle P' ; M'\rangle}
    \hspace{1cm}
    \inference[\text{[\textsc{iRcv}]}]{P \xrightarrow{\triangleright\texttt{l}(v)} P' & M \xrightarrow{\triangleleft\texttt{l}(v)} M'}{\langle P ; M\rangle \xrightarrow{\tau} \langle P' ; M'\rangle}
  }\medskip
  \centerline{
    \inference[\text{[\textsc{iOut}]}]{M \xrightarrow{\envsndOP\texttt{l}(v)} M'}{\langle P ; M\rangle \xrightarrow{\envsndOP\texttt{l}(v)} \langle P ; M'\rangle}
    \hspace{1cm}
    \inference[\text{[\textsc{iIn}]}]{M \xrightarrow{\envrcvOP\texttt{l}(v)} M'}{\langle P ; M \rangle \xrightarrow{\envrcvOP\texttt{l}(v)} \langle P ; M'\rangle}
  }\medskip
  \centerline{
    \inference[\text{[\textsc{iProc}]}]{P \xrightarrow{\tau} P'}{\langle P ; M\rangle \xrightarrow{\tau} \langle P' ; M\rangle}
    \hspace{1cm}
    \inference[\text{[\textsc{iMon}]}]{M \xrightarrow{\tau} M'}{\langle P ; M\rangle \xrightarrow{\tau} \langle P ; M'\rangle}
  }
  \caption{Composite monitored system semantics.}\label{fig:composite-system}
\end{figure}
 
The rules in \Cref{fig:composite-system} formalise the behaviour of the monitor when composed with the process to monitor, while also interacting with an environment (\ie another process). 
This setup is depicted in \Cref{fig:chosen-design}. 
We refer to a process $P$ instrumented with a monitor $M$ as a \textbf{composite (monitored) system}, denoted as $\langle P; M\rangle$.
The rules \ltsrule{iRcv} and \ltsrule{iSnd} model
\begin{wrapfigure}[4]{r}{5.2cm}
  \captionsetup{justification=centering}%
  \centering
  \vspace{-0.5cm}%
  \centering
  \begin{tikzpicture}
    \node(client) [process, fill=gray!20] {$P$};
    \node(monitor) [monitor, right=1cm of client]{$M$};
    \node(server) [process, right=1cm of monitor] {$Q$};
    \draw [dotted] ($(client.north west)+(-0.12cm,0.12cm)$) rectangle ($(monitor.south east)+(0.12cm,-0.12cm)$);

    \draw [dotted] ($(server.north west)+(-0.12cm,0.12cm)$) rectangle ($(server.south east)+(0.12cm,-0.12cm)$);

    \draw[thin,double distance=2pt, fill=white] (client) -- node(client-mon-ch)[]{} (monitor);

    \draw[thin,double distance=2pt, fill=white] (monitor) -- (server);

    \node(composite-system) [above=0.22cm of client-mon-ch, font=\scriptsize] {\textit{composite system}};
    \node(composite-system) [above=0.14cm of server, font=\scriptsize] {\textit{environment}};

  \end{tikzpicture}
   \vspace{-0.2cm}%
  \caption[The composite system interacting with the environment.]{}\label{fig:chosen-design}
\end{wrapfigure}
the interaction \emph{within} the composite system, (\ie between the monitored process $P$ and the monitor $M$ in \Cref{fig:chosen-design}). 
Note that the interaction between the two is \emph{synchronous}: 
the monitor (\resp process) can only send a message when the process (\resp monitor) can receive the same message. 
If $P$ sends a message (by \ltsrule{iSnd}) that violates the monitor's inputs, $M$ is able to flag the violation by rule \ltsrule{mIV}. 
The rules \ltsrule{iOut} and \ltsrule{iIn} model the interaction between the composite system and the environment. 
As shown in \Cref{fig:chosen-design}, the monitor is the entity that interacts with the environment (represented as a process $Q$). 
Accordingly, the monitor can flag a message sent by the environment if the message violates the monitor's expected inputs, by rule \ltsrule{mEV}.
The rules \ltsrule{iProc} and \ltsrule{iMon} allow the monitored process and the monitor respectively to perform actions independent of each other (\eg to recurse or branch internally). 

Our partial identity monitors halt upon reaching a verdict, in contrast to instrumented sequence recognisers that operationally continue to process events without changing their (irrevocable) verdict~\cite{DBLP:conf/fossacs/Francalanza16,Monitors:21}.
As a result, our monitors also halt any interactions between the composite system and the environment. 
Because of this, monitor correctness is of paramount importance.
The following example outlines the impact of a poorly constructed monitor.

\begin{example}\label{eg:bad-monitor}
  Recall process\;
  \begin{math}
    P_\textit{auth} \,\equal\, \pmu{X}.\triangleleft \texttt{Auth}(\texttt{``Bob''}, \texttt{``pwd''}).P_\textit{res}
  \end{math} (\Cref{eg:process-calculus}),
  \;which adheres to the session type\;
  \begin{math}
    S_\textit{auth} \,\equal\, \strec{Y}.!\texttt{Auth}(\textit{uname}:\textsf{Str},\textit{pwd}:\textsf{Str}).S_\textit{res}
  \end{math}
  \;(\Cref{eg:type-system}).
  A monitor corresponding to $\sauth$ should \emph{receive} from $\pauth$, analyse the message, and forward it to the environment. 
  The following (erroneous) monitor might seem to monitor $\sauth$:
\begin{equation*}
  \label{mon-bad}M_\textit{bad}\ \equal\ \triangleright \texttt{Login}(\textit{uname}).\envsndOP\texttt{Login}(\textit{uname}).N_\textit{bad}
\end{equation*}
\noindent
If process  $\pauth$ is instrumented with monitor $M_\textit{bad}$, we observe the following behaviour:
    \begin{equation*}
    \instr{\pauth}{M_\textit{bad}} \;\xrightarrow{\tau}\; \instr{\triangleleft \texttt{Auth}(\texttt{``Bob''}, \texttt{``pwd''}).\pres[\nicefrac{P_\textit{auth}}{X}]}{M_\textit{bad}} 
    \;\xrightarrow{\tau}\; \instr{\pres[\nicefrac{P_\textit{auth}}{X}]}{\no{}{P}}
  \end{equation*}
  After $\pauth$ unfolds, %
  it sends the message $\texttt{Auth}(\texttt{``Bob''}, \texttt{``pwd''})$ to the monitor %
  as per $\sauth$. 
  However, $M_\textit{bad}$ can only receive messages with label $\texttt{Login}$, hence it transitions to $\no{}{P}$. \exqed 
\end{example} 
\subsection{Monitor Synthesis}
\label{s:monitor-synthesis}

 \Cref{def:synth-function} presents a synthesis procedure from session types (\Cref{fig:session-types-noa}) to monitors (\Cref{fig:monitor-calculus}). The monitors generated are meant to act as a proxy between the monitored process and the environment process, as outlined in \Cref{fig:potential-designs:b}. 
There are various practical advantages in having an automated synthesis function: 
it is less error prone, expedites development and improves the maintainability of the verification framework. 

\begin{definition}\label{def:synth-function}
  The \textit{monitor synthesis function} $\lsem \minus \rsem:S\mapsto M$ takes as input a session type $S$ and returns a monitor $M$. It is defined inductively, on the structure of the session type $S$: 
  
  \smallskip\centerline{
    $\lsem\stselNA{i}{I}\rsem\,\triangleq\,\triangleright\big\{\texttt{l}_i(x_i).\pif{\isValueB{i}{x_i}}{\envsndOP \texttt{l}_i(x_i).\lsem S_i\rsem}{\no{}{P}}\big\}_{i\in I}$
  }\medskip\centerline{
    $\lsem\stbraNA{i}{I}\rsem\,\triangleq\,\envrcvOP\big\{\texttt{l}_i(x_i).\pif{\isValueB{i}{x_i}}{\triangleleft \texttt{l}_i(x_i).\lsem S_i\rsem}{\no{}{E}}\big\}_{i\in I}$
  }\medskip\centerline{\hfill
    $\lsem\textsf{rec } X.S\rsem\triangleq \mu_X.\lsem S\rsem \qquad\qquad\quad\quad \lsem X\rsem \triangleq X \qquad\qquad\quad\quad \lsem\textsf{end}\rsem\triangleq \textbf{0}$\exqed} 
\end{definition}

The main cases of \Cref{def:synth-function} are those for the selection and branching types.
In the case of $S = \stselNA{i}{I}$, the synthesised monitor first waits to receive a message from the monitored process, with one of the labels specified in the type. 
Once the message is received, the monitor checks whether its payload is of the correct base type $\textsf{B}_i$, \ie $\isValueB{i}{x_i}$
(see \Cref{remark:predicates}), raising $\no{}{P}$ if it is not.
If $\isValueB{i}{x_i}$ is true, the monitor forwards the message towards the environment, and proceeds according to $\lsem S_i\rsem$. 
The synthesis for $S = \stbraNA{i}{I}$ is analogous, but the generated monitor receives a message from the environment, analyses it, and forwards it to the monitored process; any violations are attributed to the environment.

\begin{example}[Session Monitor Synthesis]\label{eg:monitor-synthesis}
  \begin{subequations}
  Recall the session type $\sauth$ in \Cref{eg:session-types-noa}:
   
  \smallskip\centerline{\(
    S_\textit{auth}\ \equal\ \strec{Y}.!\texttt{Auth}(\textsf{Str},\textsf{Str}).S_\textit{res}
    \quad\text{\small{where} }\;
    S_\textit{res}\ \equal\ \& \big\{?\texttt{Succ}(\textsf{Str}).S_\textit{succ},?\texttt{Fail}(\textsf{Int}).Y \big\}
  \)}\smallskip
  
  \noindent%
  The 
  synthesis for this type 
  first generates the recursion construct $\pmu{Y}$ 
  followed by the synthesis for the selection type: %

  \smallskip\centerline{\(
    \mauth \ \equal\ \lsem S_\textit{auth}\rsem\ \equal
    \begin{cases}
      \pmu{Y}.\triangleright \texttt{Auth}(\textit{uname},\textit{pwd}). \;\textsf{if } \bigl(\isValueB{\textsf{Str}}{\textit{uname}} \wedge \isValueB{\textsf{Str}}{\textit{pwd}}\bigr) 
    \\
    \qquad \qquad\textsf{ then }\envsndOP \texttt{Auth}(\textit{uname},\textit{pwd}).\lsem S_\textit{res}\rsem\textsf{ else }\no{}{P}  
    \end{cases}
  \)}\smallskip

  \noindent%
  Monitor $\mauth$ first waits to receive a message with label \texttt{Auth} from the monitored process (via $\triangleright$), checks the types of the payload  $\bigl(\isValueB{\textsf{Str}}{\textit{uname}} \wedge \isValueB{\textsf{Str}}{\textit{pwd}}\bigr)$, and proceeds to forward the message %
  to the  environment (via $\envsndOP$),  continuing as the monitor of $S_\textit{res}$:

  \smallskip\centerline{\(
    \lsem S_\textit{res}\rsem\ \equal\ \envrcvOP\left\{
      \begin{array}{ll} \texttt{Succ}(\textit{tok}).\textsf{if } \isValueB{\textsf{Str}}{\textit{tok}}\textsf{ then }\triangleleft \texttt{Succ}(\textit{tok}).\lsem S_\textit{succ} \rsem\textsf{ else } \no{}{E}, \\ 
      \texttt{Fail}(\textit{code}).\textsf{if } \isValueB{\textsf{Int}}{\textit{code}}\textsf{ then }\triangleleft \texttt{Fail}(\textit{code}).Y\textsf{ else } \no{}{E} 
    \end{array}\right\}
  \)}\smallskip
  
  \noindent%
  Observe that $\lsem S_\textit{res}\rsem$ inputs from the environment 
  and outputs to the monitored process.%
  \exqed
\end{subequations}
\end{example} 
If process $\pauth$ is instrumented with monitor $\mauth$ as the composite system $\instr{\pauth}{\mauth}$, we observe the behaviour outlined in \Cref{fig:potential-designs:b},
as we show in the following example.

\begin{example}
  \label{eg:good-monitor}
  \begin{subequations}
  Recall $\pauth$ defined in Example \ref{eg:process-calculus}:
  
  \smallskip\centerline{\(
    P_\textit{auth}\ \equal\ \pmu{X}.(\triangleleft \texttt{Auth}(\texttt{``Bob''}, \texttt{``pwd''})).\pres
    \quad\text{\small{where} }\;%
    \pres\ \equal\ \triangleright \big\{ \texttt{Succ}(\textit{tok}).P_\textit{succ},\texttt{Fail}(\textit{code}).P_\textit{fail}\big\}
  \)}\smallskip

  \noindent%
  When $\pauth$ is instrumented with the monitor $\mauth \equal \synth{\sauth}$ we observe the behaviour:
  \begin{align}
    \nonumber
    \instr{P_\textit{auth}}{\mauth} &\xrightarrow{\tau} \instr{P'_\textit{auth}}{\mauth} 
    \qquad
    \text{where }P'_\textit{auth} \equal \triangleleft \texttt{Auth}(\texttt{``Bob''}, \texttt{``pwd''}).\pres[\nicefrac{P_\textit{auth}}{X}]
    \\[2mm]
    \nonumber
    \instr{P'_\textit{auth}}{\mauth} &\xrightarrow{\tau} \instr{P'_\textit{auth}}{M'_\textit{auth}} \alignhfill{using \ltsrule{iMon}}\\
    \nonumber
    \text{where }M'_\textit{auth} &\equal \big(\triangleright \texttt{Auth}(\textit{uname},\textit{pwd}).
    \textsf{if }\bigl(\isValueB{\textsf{Str}}{\textit{uname}} \wedge \isValueB{\textsf{Str}}{\textit{pwd}}\bigr)\\[-1mm]
    \notag&\qquad\qquad\textsf{ then }\envsndOP \texttt{Auth}(\textit{uname},\textit{pwd}).\lsem S_\textit{res}\rsem\textsf{ else }\no{}{P}\big)[\nicefrac{\mauth}{Y}]
  \end{align}
  After unfolding, using the rules \ltsrule{iProc} and \ltsrule{iMon} respectively, the monitor can receive and the process can send, and  they can transition together to communicate:
  (see \pill{1} in \Cref{fig:potential-designs:b})
  \begin{align}
    \nonumber
    P'_\textit{auth} &\xrightarrow{\triangleleft \texttt{Auth}(\texttt{``Bob''}, \texttt{``pwd''})} P''_\textit{auth} 
    \;\text{where}\;P''_\textit{auth} \equal \triangleright \big\{\texttt{Succ}(\textit{tok}).P_\textit{succ},\texttt{Fail}(\textit{code}).P_\textit{fail}\big\}[\nicefrac{P_\textit{auth}}{X}]
    \\
    \nonumber
    M'_\textit{auth} &\xrightarrow{\triangleright \texttt{Auth}(\texttt{``Bob''}, \texttt{``pwd''})} M''_\textit{auth} 
    \text{ where }\\
    \nonumber
    M''_\textit{auth} & \equal \textsf{if } 
    \bigl(\isValueB{\textsf{Str}}{\texttt{``Bob''}} \wedge \isValueB{\textsf{Str}}{\texttt{``pwd''}}\bigr)
    \textsf{ then }\envsndOP \texttt{Auth}(\texttt{``Bob''}, \texttt{``pwd''}).\lsem S_\textit{res}\rsem[\nicefrac{\mauth}{Y}] \textsf{ else }\no{}{P}
    \\
    \nonumber
    & \instr{P'_\textit{auth}}{M'_\textit{auth}} \xrightarrow{\tau} \instr{P''_\textit{auth}}{M''_\textit{auth}} 
  \end{align}
  The monitor proceeds by checking the values of the payload values using the rule \ltsrule{iMon}.
  \begin{align}
    \nonumber
    M''_\textit{auth} &\xrightarrow{\tau} M'''_\textit{auth}
    \quad\text{where}\; M'''_\textit{auth}\ \equal\ \envsndOP \texttt{Auth}(\texttt{``Bob''}, \texttt{``pwd''}).\lsem S_\textit{res}\rsem[\nicefrac{\mauth}{Y}]
    \\
    \nonumber
    \instr{P''_\textit{auth}}{M''_\textit{auth}}&\xrightarrow{\tau} \instr{P''_\textit{auth}}{M'''_\textit{auth}} 
  \end{align}
  $M'''_\textit{auth}$ now forwards the message to the environment by rule \ltsrule{iOut}:
  (see \pill{2} in \Cref{fig:potential-designs:b})
  \begin{align}
    \nonumber
    \instr{P''_\textit{auth}}{M'''_\textit{auth}} \xrightarrow{\envsndOP \texttt{Auth}(\texttt{``Bob''}, \texttt{``pwd''})} \instr{P''_\textit{auth}}{\lsem S_\textit{res}\rsem[\nicefrac{\mauth}{Y}]
    }
  \end{align}
  The monitor is currently waiting to receive from the environment, since:%
  \begin{align}
    \notag
    \lsem S_\textit{res}\rsem[\nicefrac{\mauth}{Y}]
    \equal
    &
    {\envrcvOP}\left\{\mkern-10mu
    \begin{array}{ll} 
      \texttt{Succ}(\textit{tok}).\textsf{if }\isValueB{\textsf{Str}}{\textit{tok}}\textsf{ then}\triangleleft \texttt{Succ}(\textit{tok}).\lsem S_\textit{succ} \rsem[\nicefrac{\mauth}{Y}]
      \textsf{ else } \no{}{E} \\ 
      \texttt{Fail}(\textit{code}).\textsf{if } \isValueB{\textsf{Int}}{\textit{code}}\textsf{ then}\triangleleft \texttt{Fail}(\textit{code}).\mauth\textsf{ else } \no{}{E} 
    \end{array}
    \mkern-10mu\right\}
  \end{align}
  If the monitor receives the message $\texttt{Succ}(\texttt{321})$, it  forwards the message to the monitored process and proceeds according to $\lsem S_\textit{succ} \rsem[\nicefrac{\mauth}{Y}]
  $.
  If the monitor receives the message $\texttt{Fail}(\texttt{1})$
  (see \pill{3} in \Cref{fig:potential-designs:b})
  it forwards the message to the process $P''_\textit{auth}$
  (see \pill{4} in \Cref{fig:potential-designs:b}):
  \begin{gather}
    \nonumber
    \triangleleft \texttt{Fail}(\texttt{1}).\mauth \xrightarrow{\triangleleft \texttt{Fail}(\texttt{1})} \mauth 
    \qquad\qquad\qquad
    P''_\textit{auth} \xrightarrow{\triangleright \texttt{Fail}(\texttt{1})} P_\textit{fail}[\nicefrac{P_\textit{auth}}{X}][\nicefrac{\texttt{1}}{code}]
    \\
    \nonumber
    \instr{P''_\textit{auth}}{\triangleleft \texttt{Fail}(\texttt{1}).\mauth} \xrightarrow{\tau} \instr{P_\textit{fail}[\nicefrac{P_\textit{auth}}{X}][\nicefrac{\texttt{1}}{code}]}{\mauth}
  \end{gather}
  The composite system can now proceed %
  with the monitor restarting as $\mauth$.
  \exqed
\end{subequations}
\end{example} 
Should the monitored process send a message that violates the session type, the monitor can flag the violation upon receiving a message, as the following example shows. 

\begin{example}\label{eg:bad-process}
  Consider the scenario in \Cref{fig:monitor-violations:nosp}, where the \client is the process $P_\textit{bad}$: 
  
  \smallskip\(%
    \nonumber
    P_\textit{bad}\ \equal\ \triangleleft \texttt{Login}(\texttt{``Bob''}).\triangleright \texttt{Res}(\textit{tok}:\textsf{Str}).P_\textit{res}
  \)\smallskip%
  
  \noindent%
  and recall the monitor $\mauth$ (from \Cref{eg:monitor-synthesis,eg:good-monitor})
  obtained from the session type $\sauth$.
  When $P_\textit{bad}$ is instrumented with $\mauth$, we observe the following behaviour:
  \begin{align*}
    \instr{P_\textit{bad}}{\mauth} &\xrightarrow{\tau} \instr{P_\textit{bad}}{M'_\textit{auth}} \xrightarrow{\tau} \instr{\triangleright \texttt{Res}(\textit{tok}:\textsf{Str}).P_\textit{res}}{\no{}{P}} \alignhfill{using \ltsrule{iMon},\ltsrule{iSnd}}
    \\
    & \text{where } M'_\textit{auth} \equal 
    \begin{cases}
      \big(\triangleright \texttt{Auth}(\textit{uname},\textit{pwd}).\textsf{if }\bigl(\isValueB{\textsf{Str}}{\textit{uname}} \wedge \isValueB{\textsf{Str}}{\textit{pwd}}\bigr)\\
    \qquad\qquad\textsf{ then }\envsndOP \texttt{Auth}(\textit{uname},\textit{pwd}).\lsem S_\textit{res}\rsem
    \textsf{else }\no{}{P}\big)[\nicefrac{\mauth}{Y}]   
    \end{cases}
  \end{align*}
  \ie $\mauth$ unfolds, %
  receives $\texttt{Login}(\texttt{``Bob''})$ from $P_\textit{bad}$,
  and flag the rejection verdict $\no{}{P}$.%
  \exqed 
\end{example}  
\section{Formal Analysis and Results}
\label{s:formal-results}

In \Cref{s:designing-hybrid-methodology} we argued for the importance of monitor correctness. 
This has also been recognised by other works that study monitoring techniques for session types~\cite{DBLP:journals/tcs/BocchiCDHY17,DBLP:conf/popl/JiaGP16,DBLP:conf/esop/GommerstadtJP18}.  
However, these attempts all propose their own bespoke notion of monitor correctness that is often hard to relate to the others.
\begin{figure}
  \centering
  \scalebox{0.9}{
  \begin{minipage}{0.9\linewidth}
    \centering
    \begin{tikzpicture}[decoration=penciline]
      \node(client-left) [static-process, fill=gray!20] {\textit{P}};
      \node(violates-s) [decorate, draw,  above right=0.5cm of client-left, font=\footnotesize] {\textit{violates $\varphi$}};
      \node(satisfies-s) [decorate, draw, below right=0.5cm of client-left, font=\footnotesize] {\textit{satisfies $\varphi$}};
      \draw[dotted, semithick, bend left=20] ($(client-left.north east)$) edge[-stealth] (violates-s);
      \draw[dotted, semithick, bend right=20] ($(client-left.south east)$) edge[-stealth] (satisfies-s);

      \draw [dotted] ($(client-left.north west)+(-0.3cm,1.1cm)$) rectangle ($(client-left.south east)+(2.5cm,-1.1cm)$);
    
      \node(soundness)[single arrow,shape border rotate=180, semithick, draw=black, rounded corners=1pt, fill=white, font=\scriptsize,  text centered, anchor=center, minimum width=7mm, minimum height=25mm,inner sep=0mm, single arrow head extend=1mm, inner sep=0mm, align=center, right=3.7cm of client-left, yshift=0.55cm, drop shadow=ashadow] {\textit{\phantom{p}soundness\phantom{p}}};
      
      \node[single arrow, fill=white, semithick, draw=black, below=0.6cm of soundness, rounded corners=1pt, font=\scriptsize, text centered, anchor=center, minimum width=7mm, minimum height=25mm, inner sep=0mm, single arrow head extend=1mm, inner sep=0, align=center, drop shadow=oppashadow] {\textit{completeness}};
    
      \node(client-right) [process, right=11cm of client-left, fill=gray!20] {\textit{P}};
      \node(monitor) [monitor, left=0.45cm of client-right]{\textit{monitor}};
      \draw (client-right) edge[{Arc Barb[reversed]}->] (monitor);
      \node(spec-s)[decorate, draw, font=\footnotesize, above=0.4cm of monitor] {$\varphi$};
      \draw[-angle 90,dashed] (spec-s) edge (monitor);

      \node(satisfaction-mon) [circle, draw, below left=0.5cm and 0.6cm of monitor, text=green!70!black, font=\scriptsize, inner sep=1pt, minimum size=4mm] {\cmark};
      \node(violation-mon) [circle, draw, above left=0.5cm and 0.6cm of monitor, text=red!70!black, font=\scriptsize, inner sep=1pt, minimum size=4mm] {\xmark};

      \draw[dotted, semithick, bend left=20] ($(monitor.south west)$) edge[-stealth] (satisfaction-mon);
      \draw[dotted, semithick, bend right=20] ($(monitor.north west)$) edge[-stealth] (violation-mon);

      \draw [dotted] ($(client-right.north west)+(-3.3cm,1.1cm)$) rectangle ($(client-right.south east)+(0.3cm,-1.1cm)$);
    \end{tikzpicture}
  \end{minipage}
  }%
  \caption{Monitoring soundness and completeness, from a logic-based viewpoint \cite{DBLP:journals/fmsd/FrancalanzaAI17, DBLP:journals/pacmpl/AcetoAFIL19, DBLP:conf/sefm/AcetoAFIL19}.}
  \label{fig:sound-vs-complete}
\end{figure} %
Instead, we strive towards a more systematic approach for monitor correctness and study monitor correctness
in relation to an independent characterisation of process correctness.
More concretely, we assess the correctness of \emph{session monitors} in relation to \emph{session typing}. 
We draw inspirations from a recent body of work that captures this relationship in terms of \emph{soundness} and \emph{completeness}~\cite{DBLP:journals/fmsd/FrancalanzaAI17, DBLP:journals/pacmpl/AcetoAFIL19}, as depicted in \Cref{fig:sound-vs-complete}.
In such body of work, \emph{monitor soundness} states that if a monitor $M$ is monitoring a process $P$ for a property $\varphi$, and $M$ reaches a rejection (\resp acceptance) verdict, then such a verdict must correspond to $P$'s violations (\resp satisfactions) of property $\varphi$.
\emph{Monitor completeness} is the dual property: if a process $P$ violates (\resp satisfies) a property $\varphi$, then the monitor that runtime-checks $P$ for $\varphi$ must reach a rejection (\resp acceptance) verdict.
This formulation is appealing to our study for a number of reasons:
\begin{itemize}
  \item The touchstone logic used to specify process correctness is the Hennessy-Milner Logic with minimal and maximal fixpoints (recHML)~\cite{Aceto07}; like session types, it has a tight relation to ($\omega$-)regular properties, and a long tradition of automata-based interpetations.
  \item Recent work~\cite{DBLP:conf/sefm/AcetoAFIL19,AcetoAFIL21:SOSYM} has extended this framework to a spectrum of correctness criteria.  This gives us the flexibility of identifying the criteria that best fit our concerns.  
\end{itemize} 
To study session types monitorability, we adapt this theoretical framework to our setting: 

\begin{enumerate}[{M}1]%
\item\label{item:mon:spec} instead of logic formulas as specifications, we adopt session types as specifications; and
\item\label{item:mon:charact} to characterise processes satisfying a specification, we use the session typing system.
\end{enumerate}

\noindent
This leads to important differences between our approach and \cite{DBLP:journals/fmsd/FrancalanzaAI17, DBLP:journals/pacmpl/AcetoAFIL19}:

\begin{enumerate}[{D}1]%
  \item\label{item:difference:syntax-semantics} by item~M\ref{item:mon:charact}, our processes characterisation is syntactic (rather than semantic), which is further removed from the runtime behaviour observed by the monitor;
  \item session types describe interactions between two parties, and our monitors can attribute a violation to a party.  By contrast, monitors for recHML formulas flag generic rejections;
  \item we here limit our analysis to rejection monitors and do not consider acceptance verdicts. 
\end{enumerate}

\noindent
Consequently, we formalise our notions of monitoring soundness and completeness as follows.
Here, $t$ represents a \emph{trace}, \ie finite a sequence of environment send/receive actions
$\envsndOP\texttt{l}(v)$ and $\envrcvOP\texttt{l}(v)$ (from \Cref{fig:composite-system});
moreover, $\xRightarrow{t}$ is a sequence of transitions where the actions in $t$
are interleaved with finite sequences of $\tau$-transitions.

\begin{definition}[\textit{Session Monitor Soundness}]
  \label{def:internal-static-soundness}
  A monitor $M$ \emph{soundly monitors} for a session type $S$ iff, for all $P$, %
  if there is a trace $t$ such that\; $\instr{P}{M} \xRightarrow{t} \instr{P'}{\no{}{P}}$, \;then\; $\emptyset\cdot\emptyset \rightvdash P:S$ \;does \emph{not} hold. \exqed
\end{definition}

\begin{definition}[\textit{Session Monitor Completeness}]
  \label{def:internal-static-completeness}
  A monitor $M$ monitors for a session type $S$ in a \emph{complete} manner, iff for all processes $P$, %
  whenever\; $\emptyset\cdot\emptyset \rightvdash P:S$ \;does \emph{not} hold, then there exists a trace $t$ such that  $\instr{P}{M}\xRightarrow{t}\instr{P'}{\no{}{P}}$. \exqed
\end{definition}

\subsection{Soundness of Session Type Monitoring}
\label{sec:monitor-soundness-maintext}

A tenet of \cite{DBLP:journals/fmsd/FrancalanzaAI17, DBLP:journals/pacmpl/AcetoAFIL19, DBLP:conf/sefm/AcetoAFIL19} is that, in order to have monitor correctness, soundness (\Cref{def:internal-static-soundness}) is not negotiable. 
We here show that our monitor synthesis procedure is sound, \ie we show that for any session type $S$, monitor $\synth{S}$ observes \Cref{def:internal-static-soundness} \wrt specification $S$.

\begin{theorem}[\textit{Synthesis Soundness}]\label{thrm:sound}
  For all session types $S$ and processes $P$, %
  if there exists a trace $t$ such that $\instr{P}{\synth{S}}\xRightarrow{t} \instr{P'}{\textsf{no}_P}$, \;then\; $\emptyset\cdot\emptyset \rightvdash P:S$ \;does \emph{not} hold. 
\end{theorem}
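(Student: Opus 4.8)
The plan is to prove the contrapositive-flavoured statement directly: assume $\instr{P}{\synth{S}}\xRightarrow{t}\instr{P'}{\no{}{P}}$ and derive that $P$ cannot be typed by $S$. The key idea is to track, along the reduction sequence, an invariant relating the residual process, the residual monitor, and a ``residual'' session type. Concretely, I would prove a strengthened lemma: whenever $\emptyset\cdot\emptyset\rightvdash P:S$ holds and $\instr{P}{\synth{S}}\xRightarrow{t}\instr{P''}{M''}$, then $M''$ is (up to equi-recursive unfolding and substitution of values for variables) of the form $\synth{S''}$ (or an intermediate ``if''-configuration arising from it) for some session type $S''$ with $\emptyset\cdot\emptyset\rightvdash P'':S''$. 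Since no such configuration is ever $\no{}{P}$, the theorem follows by contraposition. So the real work is a subject-reduction / invariant-preservation argument for the composite system.

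First I would set up the intermediate monitor configurations: besides $\synth{\stselNA{i}{I}}$ itself, the synthesis produces transient shapes $\pif{\isValueB{i}{v}}{\envsndOP\texttt{l}_i(v).\synth{S_i}}{\no{}{P}}$ (after receiving) and $\envsndOP\texttt{l}_i(v).\synth{S_i}$ (after a successful type check); dually for branching, the configurations $\envrcvOP\{\dots\}$, $\pif{\dots}{\triangleleft\texttt{l}_i(v).\synth{S_i}}{\no{}{E}}$, and $\triangleleft\texttt{l}_i(v).\synth{S_i}$. I would then do a case analysis on the composite-system rules (\Cref{fig:composite-system}) applied in one step, crossed with the current shape of $S$ (selection, branching, $\strec{X}.S$, $X$, $\textsf{end}$). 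For $S=\strec{X}.S'$ the monitor fires \ltsrule{mRec} via \ltsrule{iMon} and we appeal to equi-recursivity of the typing system plus \ltsrule{tRec}. For $S=\stselNA{i}{I}$: the only way the composite can move is \ltsrule{iSnd}, where $P$ outputs $\texttt{l}(v)$ and the monitor receives it. If the label $\texttt{l}$ is not among the $\texttt{l}_i$, the monitor takes \ltsrule{mIV} to $\no{}{P}$ --- but then by inversion on \ltsrule{tSel}, no typing $\emptyset\cdot\emptyset\rightvdash P:\stselNA{i}{I}$ can allow $P$ to emit $\texttt{l}(v)$ with $\texttt{l}\notin\{\texttt{l}_i\}_{i\in I}$, contradicting the invariant hypothesis; so in the typed case $\texttt{l}=\texttt{l}_j$ for some $j$, the monitor moves to the ``if'' configuration, and the subsequent $\tau$-step (\ltsrule{mTru} or \ltsrule{mFls} via \ltsrule{iMon}) resolves the type check. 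Here I use \Cref{remark:predicates}(\ref{assumption:bi}): since $P$ is typed, the payload $v$ really has base type $\textsf{B}_j$, so $\isValueB{j}{v}\Downarrow\true$, the \ltsrule{mFls}/$\no{}{P}$ branch is unreachable, and the monitor continues as $\synth{S_j}$ while, by inversion on \ltsrule{tSel}, $P$'s continuation is typed by $S_j$ --- re-establishing the invariant. The external-facing steps (\ltsrule{iOut}, \ltsrule{iIn}) for branching are symmetric but crucially only ever lead to $\no{}{E}$, never $\no{}{P}$, so they are harmless for this particular theorem; I would still need to carry the invariant through them. The $\textsf{end}$ case maps to $\pnil$ which has no transitions, and $\lsem X\rsem = X$ is reached only under a recursion binder, handled together with the $\strec{X}.S$ case via the substitution lemma for $\nicefrac{\synth{\strec{X}.S}}{X}$ matching $\nicefrac{\strec{X}.S}{X}$ in the type.

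The main obstacle I anticipate is bookkeeping the interaction between equi-recursive unfolding, value substitution, and the transient ``if''-configurations so that the induction hypothesis is strong enough but still true after each single step --- in particular, making precise the claim ``$M''$ has the form $\synth{S''}[\sigma]$ for some value substitution $\sigma$ and $\emptyset\cdot\emptyset\rightvdash P'':S''[\sigma\text{-typing}]$'', and showing this is preserved when the monitor is mid-way through a branch (after \ltsrule{mRcv}/\ltsrule{mIn} but before the conditional resolves). A secondary subtlety is that $P$ itself may take independent $\tau$-steps (\ltsrule{iProc}: \ltsrule{pRec}, \ltsrule{pTru}, \ltsrule{pFls}); these change $P$ but not its type (standard subject reduction for the process calculus, which follows from the typing rules \ltsrule{tRec}, \ltsrule{tIf} and equi-recursivity), so the invariant survives. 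Once the strengthened invariant lemma is in place, the theorem is immediate: a derivation ending in $\instr{P'}{\no{}{P}}$ would contradict the invariant (no reachable configuration from a well-typed $\instr{P}{\synth{S}}$ carries monitor body $\no{}{P}$), hence $\emptyset\cdot\emptyset\rightvdash P:S$ must fail.
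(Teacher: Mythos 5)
Your proposal is correct and follows essentially the same route as the paper's proof: argue the contrapositive, establish subject reduction for the process calculus, and then strengthen the induction hypothesis so that it also covers the transient monitor configurations (the post-receive ``if'' states and the post-check forwarding states) that arise because the synthesised monitor takes several steps per protocol action --- this is exactly the five-case strengthened proposition the paper proves by lexicographic induction on the typing derivation and the trace length. Your observation that external-facing steps can only yield $\no{}{E}$ and are therefore harmless for this theorem likewise matches the paper's treatment of those cases.
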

\begin{proof}
  Instead of proving the statement directly, we prove its contrapositive:
  \begin{quote}
    For all session types $S$ and processes $P$ %
    such that\; $\emptyset\cdot\emptyset\rightvdash P:S$,
    \;if\; $\instr{P}{\synth{S}} \xRightarrow{t} \instr{P'}{M'}$
    then\; $M'\not\equal \textsf{no}_P$.
  \end{quote}
  To this end, we first establish a  \emph{subject reduction}
  result, relying on standard properties of our type system:
  this determines how process $P$ evolves \wrt its session type $S$.
  Then, we prove the contrapositive statement above %
  by \emph{lexicographical induction} on the derivation of\; $\emptyset\cdot\emptyset\rightvdash P:S$ \;and the number of transitions in the trace $\instr{P}{\synth{S}} \xRightarrow{t} \instr{P'}{M'}$. This requires some sophistication, because as the instrumented system $\instr{P}{\synth{S}}$ evolves,
  for each step of $P$ the monitor $\synth{S}$ %
  (as generated by \Cref{def:synth-function}) may take multiple steps to evaluate synthesised conditions before it can forward messages.
  Hence, we prove additional results to handle such cases, and formulate a suitable
  induction hypothesis allowing us to complete the proof of the contrapositive statement.
  \Cref{thrm:sound} follows as a corollary.
\end{proof}

As a 
by-product of \Cref{thrm:sound} we also deduce that if a process $P$ has type $S$,
then the instrumented process $\instr{P}{\synth{S}}$ can only get stuck due to an \emph{external} violation, \ie $\no{}{E}$; this arises when the environment sends a message with a wrong label or payload type.
This result is formalised in \Cref{thm:blame-correctnes} below, and is reminiscent of the notion of \emph{blaming} in gradual types
(\ie 
untyped components can always be blamed in case of errors~\cite{DBLP:conf/popl/AhmedFSW11,DBLP:journals/pacmpl/IgarashiTVW17}).

\begin{corollary}[Monitor Blaming]
  \label{thm:blame-correctnes}
  For any process $P$ and session types $S$ where\: $\emptyset\cdot\emptyset\rightvdash P:S$, for any trace $t$ such that\; $\instr{P}{\synth{S}} \xRightarrow{t} \instr{P'}{M'} \mathrel{\not\rightarrow}{}$ where  $P \neq \pnil$, we have $M' \equal \no{}{E}$. \exqed
\end{corollary}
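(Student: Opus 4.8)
The plan is to derive Corollary~\ref{thm:blame-correctnes} as a direct consequence of Theorem~\ref{thrm:sound} (synthesis soundness) together with the by-product of its proof sketched just before the corollary, namely that when $\emptyset\cdot\emptyset\rightvdash P:S$ the instrumented system $\instr{P}{\synth{S}}$ cannot reach $\no{}{P}$. First I would fix $P$, $S$ with $\emptyset\cdot\emptyset\rightvdash P:S$, a trace $t$, and a terminal configuration $\instr{P}{\synth{S}}\xRightarrow{t}\instr{P'}{M'}\not\rightarrow$ with $P'\neq\pnil$. The goal is to show $M'\equiv\no{}{E}$, and the strategy is to rule out every other possible shape of $M'$ by case analysis on the monitor syntax of Figure~\ref{fig:monitor-calculus}, using subject reduction (established in the proof of Theorem~\ref{thrm:sound}) to control the shape of $P'$ and of $M'$ in tandem.

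The key steps, in order: (i) Invoke the subject-reduction machinery from the proof of Theorem~\ref{thrm:sound} to obtain that along $\instr{P}{\synth{S}}\xRightarrow{t}\instr{P'}{M'}$ there is some residual session type $S'$ (a continuation/unfolding of $S$) with $\emptyset\cdot\emptyset\rightvdash P':S'$ and $M'$ related to $\synth{S'}$ up to internal $\tau$-steps --- i.e.\ $M'$ is either $\synth{S'}$ itself or an intermediate monitor state arising from evaluating a synthesised conditional $\isValueB{i}{x_i}$. (ii) Use Theorem~\ref{thrm:sound} directly to eliminate $M'\equiv\no{}{P}$: if $M'$ were $\no{}{P}$, the trace would witness $\instr{P}{\synth{S}}\xRightarrow{t}\instr{P'}{\no{}{P}}$, contradicting $\emptyset\cdot\emptyset\rightvdash P:S$. (iii) For each remaining syntactic form of $M'$, show the configuration is \emph{not} stuck, contradicting $\not\rightarrow$ --- except for $\no{}{E}$, which genuinely is stuck. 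Concretely: a recursion $\monmu{X}.M''$ fires by \ltsrule{mRec}/\ltsrule{iMon}; a conditional $\pif{A}{M''}{N''}$ fires by \ltsrule{mTru}/\ltsrule{mFls}/\ltsrule{iMon} (the predicate always evaluates, by Remark~\ref{remark:predicates}); a monitor input $\monrcv{i}{I}$ awaiting the process can fire with the process, since by subject reduction $P'$ is then a selection-typed process whose available output label matches one of the $\texttt{l}_i$ (so either \ltsrule{iSnd} applies, or \ltsrule{mIV} giving $\no{}{P}$ which step (ii) has excluded); a monitor external output $\monout{l}{a}.M''$ fires by \ltsrule{mOut}/\ltsrule{iOut}; a monitor internal output $\monsndk{l}{a}.M''$ toward the process fires by \ltsrule{iRcv}, because by subject reduction $P'$ is a branching-typed process and duality/typing guarantees it has an input branch on label $\texttt{l}$; a monitor external input $\monin{i}{I}$ can always fire by \ltsrule{mIn} or \ltsrule{mEV}/\ltsrule{iIn} (inputs from the environment are unconstrained and either match, producing a $\tau$/output step, or mismatch, producing $\no{}{E}$); and $\pnil$ as $M'$ would force, by subject reduction, $P'$ to be typed by $\textsf{end}$, hence $P'\equiv\pnil$ up to $\tau$-reduction --- but $P'\neq\pnil$ and any remaining $\tau$-step of $P'$ keeps the system moving, so this case is excluded too (or, if one takes $P'$ strictly, it contradicts $P'\neq\pnil$). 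The only configuration consistent with being stuck while $P'\neq\pnil$ is $M'\equiv\no{}{E}$.

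The main obstacle I anticipate is the bookkeeping in step (iii) for the case $M'\equiv\monrcv{i}{I}$ (monitor waiting to receive from the monitored process) and the symmetric internal-output case: here I must argue that $P'$ cannot simply be stuck itself. This needs the invariant that a well-typed $P'$ against a non-$\textsf{end}$ session type always has a matching communication or a $\tau$-move available --- essentially a progress property for typed processes threaded through the instrumentation rules \ltsrule{iProc}/\ltsrule{iSnd}/\ltsrule{iRcv}. This should follow from the typing rules \ltsrule{tSel}/\ltsrule{tBra}/\ltsrule{tIf}/\ltsrule{tRec} (every typed non-terminated process is a send, a receive, a conditional, or a recursion, each of which can step), combined with the label-matching guarantee between $\synth{S'}$ and $P'$ that the subject-reduction lemma provides. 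Since all of this infrastructure is already developed for the proof of Theorem~\ref{thrm:sound}, the corollary is genuinely a short argument: enumerate the shapes of $M'$, discharge $\no{}{P}$ via soundness, show every other shape admits a transition, and conclude $M'\equiv\no{}{E}$.
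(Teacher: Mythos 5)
Your proposal is correct and is essentially the argument the paper intends: the paper offers no explicit proof of Corollary~\ref{thm:blame-correctnes} at all --- it is stated with a closing box and justified only by the preceding remark that it is ``a by-product'' of \Cref{thrm:sound} --- so what you have done is make explicit the case analysis that the paper leaves implicit. Your decomposition (use the subject-reduction invariant threaded through the soundness proof to pin down the joint shape of $P'$ and $M'$, discharge $M'=\no{}{P}$ by \Cref{thrm:sound}, and show every remaining monitor shape other than $\no{}{E}$ admits a further transition) is the natural and, as far as I can tell, the intended route; your enumeration of the intermediate monitor states ($\triangleleft$-prefixed and $\envsndOP$-prefixed forms arising between a receive and the corresponding forward) correctly covers the states that a bare ``$M'$ is $\synth{S'}$ up to $\tau$'' characterisation would miss.

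Two small caveats, neither of which is a flaw in your reasoning so much as in the statement you are proving. First, the side condition in the corollary literally reads $P\neq\pnil$, but the result is false under that reading (a well-typed $P$ that completes its protocol reaches $\instr{\pnil}{\pnil}\not\rightarrow$ with $M'=\pnil$); your reading of it as $P'\neq\pnil$ is the only one under which the corollary holds, and is clearly what is meant. Second, your case ``$P'$ is a conditional, which can step'' silently requires that every \emph{well-typed} predicate $A$ evaluates to a boolean; \Cref{remark:predicates} only guarantees this for the synthesised $\isValueB{i}{\cdot}$ checks, not for arbitrary predicates occurring in $P'$. Without that assumption a well-typed process could itself get stuck on an if-statement and falsify the corollary. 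This assumption is needed by the paper's own claim just as much as by your proof, so it is worth stating explicitly rather than a reason to change your argument.
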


\subsection{On the Completeness of Session Type Monitoring}
\label{sec:monitor-synthesis-completeness}

Monitor soundness, by itself, is a weak result. For instance, the monitor that merely acts as a forwarder between the monitored process and the environment, \emph{never raising any detections}, is trivially sound but, arguably, not very useful. 
One way to force the monitor to produce useful detections is via \emph{completeness}, as per \Cref{def:internal-static-completeness} above.
We investigate completeness for our synthesised monitors by 
first establishing a ``weak'' completeness result (\Cref{sec:weak-completeness})
showing how ill-typed processes can misbehave when instrumented. 
Then, we prove that a ``full'' completeness result is impossible in our black-box monitoring model (\Cref{sec:completeness-impossible}).

\subsubsection{Weak Monitor Synthesis Completeness}
\label{sec:weak-completeness}

To achieve our completeness result, in this section we need a \emph{precise typing} assumption on predicates $A$:
ill-typed predicates do not evaluate to a boolean --- \ie
if\; $\Gamma \rightvdash A : \textsf{Bool}$ \;does \emph{not} hold,
then\; $A \not\Downarrow \true$ \;and\; $A \not\Downarrow \false$.
Furthermore, we need to limit our analysis to processes without \emph{dead code} (\Cref{def:no-dead-code} below).  
For the process language of \Cref{fig:process-calc-syntax-semantics}, this means: for every ``if'' statement occurring in a process $P$, there are executions of $P$ where the left branch is taken, and executions where the right branch is taken.
These executions depend on $P$'s inputs, which may cause different instantiations to $P$'s variables. 
\Cref{ex:why-no-dead-code} illustrates why we need this assumption; note that these assumptions are \emph{not} needed for monitor soundness.

\begin{example}
  \label{ex:why-no-dead-code}%
  The process\; $P = \pif{\textsf{tt}}{\psnd{1}.\pnil}{\psnd{2}.\pnil}$
  \;is \emph{not} typable with\;
  $S = \stselNA{i}{\{1\}}$ (for any $S_i$):
  it is only typable with internal choices of the form $\stselNA{i}{1..n}$,
  with $n \ge 2$.
  Yet, $P$ would operate correctly if instrumented with monitor
  $\lsem{S}\rsem$, because its ``else'' branch is dead code.
  If we remove the dead code from $P$, the remaining process\;
  $\psnd{1}.\pnil$ \;is typable with $S$, and behaves like $P$. \exqed
\end{example}

\begin{definition}
  \label{def:no-dead-code}
  \emph{A process $P$ has no dead code} iff
  for all its subterms of the form\; $P' = \pif{A}{Q}{Q'}$,
  \;there exist traces $t$ and $t'$
  and substitutions $\sigma$ and $\sigma'$
  such that\; $P \xRightarrow{t} P'\sigma \xrightarrow{\tau} Q\sigma$
  (hence, $A\sigma \Downarrow \true$)
  \;and\; $P \xRightarrow{t'} P'\sigma' \xrightarrow{\tau} Q'\sigma'$
  (hence, $A\sigma \Downarrow \false$).%
  \exqed
\end{definition}

With the ``no dead code'' assumption, we can formulate our weak completeness result.  
It states that when a process $P$ is ill-typed for a session type $S$, then the monitored system  $\instr{P}{\lsem S\rsem}$ exhibits at least one execution that gets stuck due to $P$'s behaviour, without any violation by the environment.   

\begin{restatable}[Weak Monitor Synthesis Completeness]{theorem}{lemMonitorCompleteness}
  \label{lem:partial-monitor-completeness}
  Take any closed process $P$ without dead code
  such that\; $\emptyset\cdot\emptyset \rightvdash P:S$ \;does \emph{not} hold.
  Then, there exists a trace $t$ such that\; %
  $\instr{P}{\lsem S\rsem} \xRightarrow{t} \instr{P'}{M'} \mathrel{\not\rightarrow}{}$,
  \;with
  $P' \not\equal \pnil$ or $M' \not\equal \pnil$;
  moreover, $M' \not\equal \no{}{E}$.
\end{restatable}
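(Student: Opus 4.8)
The plan is to proceed by induction on the structure of the session type $S$ (or equivalently on the derivation attempt of $\emptyset\cdot\emptyset \rightvdash P : S$), exploiting the "no dead code" and precise-predicate assumptions to pinpoint where $P$ deviates from $S$ and to drive the monitored system $\instr{P}{\lsem S\rsem}$ to a stuck configuration that is \emph{not} an external violation. The key observation is that since $\emptyset\cdot\emptyset \rightvdash P:S$ fails, there is a "first point of failure" in $P$ relative to $S$: either $P$'s top-level syntactic shape is incompatible with the top-level shape of $S$ (e.g.\ $P$ is a selection/output when $S$ is a branching type, or $P$ is $\pnil$ when $S \neq \textsf{end}$, or the message labels do not match the type's choices, or a payload identifier is ill-typed), or the shapes match but some continuation fails to type-check, in which case we use the induction hypothesis after one communication step.

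First I would set up the case analysis. \textbf{(a)} If $S = \stselNA{i}{I}$: by the synthesis function $\lsem S\rsem$ first waits to \emph{receive} an internal message from $P$. If $P$ cannot produce any output at all (e.g.\ $P = \pnil$, or $P$ reduces via $\tau$-steps to $\pnil$, or to an input-prefixed process), then $\instr{P}{\lsem S\rsem}$ is stuck with $P' \neq \pnil$ unless $P$ became $\pnil$ — but then $M' \neq \pnil$ (the monitor is still a receive-prefix), so the disjunction $P' \neq \pnil$ or $M' \neq \pnil$ holds, and no external action was taken, so $M' \neq \no{}{E}$. If $P$ can fire an output $\sndact{l}{v}$ with $\texttt{l}\notin\{\texttt{l}_i\}_{i\in I}$, then by rule \ltsrule{mIV} composed via \ltsrule{iSnd} the system steps to $\instr{P'}{\no{}{P}}$, which is stuck (our monitors halt at verdicts), $M' = \no{}{P} \neq \pnil$, and $M' \neq \no{}{E}$. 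If $\texttt{l} = \texttt{l}_i$ but $v$ is not of base type $\textsf{B}_i$, then after the internal communication the monitor evaluates the guard $\isValueB{i}{v}$ to $\false$ (using the precise-predicate / assumption~\ref{assumption:bi}) and reduces to $\no{}{P}$ — again a good stuck state. Finally, if labels and payloads match but $\emptyset\cdot\emptyset\rightvdash P': S_i$ fails for the continuation $P'$ after the output (here I must be careful: $P$ may be a larger selection process $\psndk{l}{a}.Q$ with the right label/type but ill-typed $Q$, or $P$ may have \emph{extra} output branches, which is fine for \ltsrule{tSel} since it only requires \emph{one} matching choice), then I apply the induction hypothesis to $Q$ (after substituting the payload) against $S_i$, and prepend the $\tau$-steps and the synchronisation to the trace. \textbf{(b)} The case $S = \stbraNA{i}{I}$ is the mirror image: $\lsem S\rsem$ waits to receive from the \emph{environment}. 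Here the crucial point is that the environment is universally quantified in \Cref{def:internal-static-completeness}/the theorem statement is over \emph{all} $P$ but the monitored system $\instr{P}{\lsem S\rsem}$ has the monitor driving external inputs — I need to check the type-failure is genuinely $P$'s fault: by rule \ltsrule{tBra}, $\emptyset\cdot\emptyset\rightvdash P : \stbraNA{i}{I}$ requires $P = \prcv{i}{I\cup J}$ with matching labels and each $P_i$ typable with $S_i$. So the failure is either (i) $P$ is not an input process of the right shape with all labels in $I$ present, in which case I can feed the monitor (via \ltsrule{iIn} and \ltsrule{mIn}) a well-typed message on some label $\texttt{l}_i$; the monitor passes its guard and forwards $\triangleleft\texttt{l}_i(v)$ to $P$ via \ltsrule{iRcv}, but $P$ cannot receive it — the system is stuck with $P' \neq \pnil$ and $M' \neq \no{}{E}$ (in fact $M' \neq \pnil$ too, or it has already moved past; careful bookkeeping of which disjunct holds is needed); or (ii) shapes match but some continuation $P_i$ fails against $S_i$, and I recurse. \textbf{(c)} $S = \strec{X}.S'$: unfold the type (equi-recursive, \Cref{def:synth-function} gives $\lsem\strec X.S'\rsem = \mu_X.\lsem S'\rsem \xrightarrow{\tau}$ its unfolding), note $\emptyset\cdot\emptyset\rightvdash P:\strec X.S'$ fails iff it fails against the unfolding, take one $\tau$-step of the monitor via \ltsrule{iMon}/\ltsrule{mRec}, and apply the IH. Guardedness of recursion ensures this terminates. \textbf{(d)} $S = \textsf{end}$: failure of $\emptyset\cdot\emptyset\rightvdash P:\textsf{end}$ means $P$ is not (reducible to) $\pnil$; by \ltsrule{tNil} the only process typable with $\textsf{end}$ is $\pnil$ (up to $\tau$-reductions — here I may need a small lemma that $\pif{A}{P_1}{P_2}$ with one branch $\pnil$ can be dead-code-free only if both branches are reachable, so "no dead code" forces it to not be spuriously $\textsf{end}$-typed), so $\instr{P}{\lsem\textsf{end}\rsem} = \instr{P}{\pnil}$ is stuck with $P' \neq \pnil$ and no external action. \textbf{(e)} $S = X$ (a free recursion variable): vacuous, since $\emptyset$ doesn't bind $X$ and closed well-formed types have no free variables; I'd note this case does not arise for the closed types we consider.

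The main obstacle will be the interplay between the "no dead code" assumption and the conditional construct $\pif{A}{P_1}{P_2}$ in the process, together with correctly threading the environment's choices so the stuck configuration is provably not $\no{}{E}$. Because $\instr{P}{\lsem S\rsem}$ may need the monitor to take several $\tau$-steps evaluating guards (and the process may itself take several $\tau$-steps, including \ltsrule{pTru}/\ltsrule{pFls} on its own conditionals) before a genuine mismatch surfaces, I need the "no dead code" hypothesis to guarantee that whichever branch of a process-side conditional I need in order to reach the type-mismatch is actually reachable under some input substitution $\sigma$ — this is exactly what \Cref{def:no-dead-code} provides, and \Cref{ex:why-no-dead-code} shows the statement is false without it. The precise-predicate assumption is needed so that an ill-typed guard cannot evaluate at all, preventing the process from "escaping" through a branch that a non-precise semantics might allow. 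I would package the branch-selection argument as an auxiliary lemma: "if $P$ has no dead code and $\emptyset\cdot\emptyset\rightvdash P:S$ fails, then there is a substitution $\sigma$ (realisable by some trace of environment inputs) such that $P\sigma$ reduces by $\tau$-steps to a process $P''$ whose \emph{top-level} shape is syntactically incompatible with $S$'s top-level shape, or to a matching shape with an ill-typed continuation." Proving this lemma — and showing the required $\sigma$ can be realised by the monitor feeding well-typed messages through (which it can, since $\lsem S\rsem$'s external inputs accept any payload and only check types after receipt, and we always feed \emph{well-typed} payloads to avoid triggering $\no{}{E}$) — is where the real work lies; the rest is a careful but routine assembly of traces via the composite-system rules of \Cref{fig:composite-system}.
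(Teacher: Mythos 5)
Your proposal is correct in substance and identifies every essential ingredient, but it takes a different formal route from the paper. You induct on the structure of $S$ (equivalently, on the shape of the failed derivation attempt) and appeal informally to a ``first point of failure''; the paper instead makes that notion precise by defining a \emph{rule function} $\Phi$ mapping each judgement to its premises, formalising a \emph{failing derivation} as a finite sequence of judgements $(J_0,\ldots,J_n)$ each drawn from the premises of its predecessor and ending where no rule applies, introducing a \emph{negated typing judgement} $\Theta\cdot\Gamma\nrightvdash P:S$ that is derivable exactly when a failing derivation exists, and then constructing the witness trace by induction on the \emph{length} of that sequence (in decreasing order), threading a non-empty set of substitutions $\Sigma_i$ through each step to guarantee realisability. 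What the paper's machinery buys is precisely the well-foundedness that your induction lacks: under equi-recursion, unfolding $\strec{X}.S'$ yields a strictly larger type, so structural induction on $S$ is not well-founded as stated; ``guardedness'' alone does not repair this, because the descent is on the type, not on a decreasing numeric measure. The paper sidesteps this by first unfolding $P$ (via a dedicated proposition) until the negated-typing derivation never invokes the process-variable axiom, and then inducting on the finite length of the failing derivation. Your argument is salvageable by switching the induction measure to the depth of the failure point (which is finite since non-derivability in a deterministic, finitely-branching rule system is witnessed by a finite failed branch), but as written the case for $\strec{X}.S'$ does not go through. Everything else --- the case split on label mismatch versus payload-type mismatch versus ill-typed continuation, the use of the no-dead-code hypothesis to reach the offending branch of a conditional, the precise-predicate assumption blocking escape through an ill-typed guard, feeding only well-typed environment messages to avoid $\no{}{E}$, and isolating the substitution-realisability argument as the crux --- matches the paper's proof closely, and your auxiliary lemma is essentially the paper's invariant on the sets $\Sigma_i$.
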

\begin{proof}
  The proof is based on \emph{failing derivations}, inspired by \cite{BHLN12,BHLN17}.
  It consists of 6 steps.
\begin{enumerate}
\item We define the \emph{rule function $\Phi$} %
  that, following the typing rules in \Cref{fig:session-typing-rules-noa},
  maps a judgement
  of the form\; $J = \Theta\cdot\Gamma \rightvdash P:S$ \;
  to either the set of
  all judgements in $J$'s premises (for inductive rules), or $\{\true\}$ (for axioms),
  or $\emptyset$ (if $J$ does not match any rule);
\item we formalise a \emph{failing derivation} of a session typing judgement\;
  $\Theta\cdot\Gamma \rightvdash P:S$
  \;as a finite sequence of judgements\; $\mathcal{D} = (J_0, J_1, \ldots, J_n)$ \;such that:
  \begin{enumerate}[(i)]%
    \item for all $i \in 0..n$, $J_i$ is a judgement of the form\;
      $\Theta_i\cdot\Gamma_i \rightvdash P_i:S_i$;
    \item $J_0 = \Theta\cdot\Gamma \rightvdash P:S$
      (\ie the failing derivation $\mathcal{D}$ begins with the judgement of interest);
    \item  $\forall i \in 1..n$, $J_i \in \Phi(J_{i-1})$
      (\ie each judgement in $\mathcal{D}$ is followed by one of its premises); 
    \item $\Phi(J_n) = \emptyset$
      (\ie the last judgement in $\mathcal{D}$ does not match any rule in \Cref{fig:session-typing-rules-noa})
  \end{enumerate}
\item\label{item:mon-comp-proof:fail-type} we prove there is a failing derivation of\;
  $J = \Theta\cdot\Gamma \rightvdash P:S$ \;if and only if $J$ is \emph{not} derivable;%
\item\label{item:mon-comp-proof:neg-typing} we formalise a \emph{negated} typing judgement\;
  $\Theta\cdot\Gamma \nrightvdash P:S$
  and prove that it holds if and only if there is a corresponding 
  failing derivation of\; $\Theta\cdot\Gamma \rightvdash P:S$;
\item\label{item:mon-comp-proof:neg-typing-iff}
  thus, from items~\ref{item:mon-comp-proof:fail-type} and \ref{item:mon-comp-proof:neg-typing} above,
  we know that $\Theta\cdot\Gamma \rightvdash P:S$ \;is \emph{not} derivable
  if and only if\; $\Theta\cdot\Gamma \nrightvdash P:S$ \;is derivable.
  Consequently, the judgement\; $\Theta\cdot\Gamma \nrightvdash P:S$
  \;tells us exactly what are the possible shapes of $P$ and $S$ covered by the theorem's statement;
\item finally, we use all ingredients above to prove the thesis.
  From a failing derivation of \;$\Theta\cdot\Gamma \rightvdash P:S$
  \;(item~\ref{item:mon-comp-proof:fail-type}), we construct
  a trace $t$ leading from\;
  $\instr{P}{M}$ \;to some\; $\instr{P'}{M'}$;
  further, using the corresponding derivation of\;
  \;$\Theta\cdot\Gamma \nrightvdash P:S$
  \;(items~\ref{item:mon-comp-proof:neg-typing}, \ref{item:mon-comp-proof:neg-typing-iff}),
  we prove that $t$ is a valid trace,
  and\; $\instr{P'}{M'}\mathrel{\not\rightarrow}{}$
  with
  $P' \not\equal \pnil$ or $M' \not\equal \pnil$,
  and $M' \not\equal \no{}{E}$. \qedhere
\end{enumerate}
\end{proof}

Although \Cref{lem:partial-monitor-completeness} is weaker than the ideal requirement set out in \Cref{def:internal-static-completeness},
its proof sheds light on all the possible reasons why an ill-typed monitored process gets stuck:

\begin{enumerate}%
  \item\label{item:mon:out-wrong-out} the monitor reaches a process rejection verdict, $M' = \no{}{P}$, because the process sends a message
  with a wrong label, or payload value of a wrong base type.
\item\label{item:mon:out} the monitor blocks waiting for the process to send a message, but:
  \begin{enumerate}[a.]%
  \item\label{item:mon:out-in} $P'$ is attempting to receive a
    message itself or
  \item\label{item:mon:out-end} $P' = \pnil$ (i.e., $P'$ has terminated its execution);
  \end{enumerate}
\item\label{item:mon:in} the monitor blocks waiting for the process to receive a message, but:
  \begin{enumerate}[a.]%
  \item\label{item:mon:in-wrong-in} the process is also waiting to receive a message but does not support the required message label being sent or
  \item\label{item:mon:in-out} $P'$ is attempting to send a message itself or
  \item\label{item:mon:in-end} $P' = \pnil$; 
  \end{enumerate}
\item\label{item:mon:end} the monitor expects the process to end,
  but $P'$ is trying to send/receive more messages;
\item\label{item:mon:wrong-exp} $P'$ is stuck on an ill-typed expression.
\end{enumerate}

\begin{remark}
Process violations are only flagged $\no{}{P}$ (as required in \Cref{def:internal-static-completeness}) is case~\ref{item:mon:out-wrong-out}. 
We now discuss how a practical monitor implementation could, in principle,
  detect violations in other cases, and highlight when this additional detection power
  would require additional assumptions that go beyond our black-box monitoring design.
\begin{itemize}
\item In cases~%
  \ref{item:mon:in-wrong-in} and \ref{item:mon:wrong-exp},
  the trace $t$ may lead to a run-time error; this could be flagged %
  by assuming that the monitor can detect whether the monitored process has crashed;
\item In case~\ref{item:mon:end}, the monitor expects %
  the session to be ended. This could be handled by assuming and end-of-session signal:
  the monitor can wait for such a signal, and flag any other message sent by the process.
  However, if the process is attempting to receive (instead of ending the session), the detection is more subtle,
  as in case %
  \ref{item:mon:out-in} below;
\item Cases~%
  \ref{item:mon:out-end}
  and %
  \ref{item:mon:in-end} could be similarly handled by assuming an end-of-session signal;
\item Case %
  \ref{item:mon:out-in} is more subtle:
  both the process and monitor are waiting for a message. 
  Reception timeouts from the monitor side are inadequate because they 
  lead to unsound detections.  
  To accurately handle this case, we would need to instrument the
  process executable, which breaks our black-box assumptions from \Cref{s:monitor-design}.
  Similarly, flagging a violation in case %
  \ref{item:mon:in-out} 
  also requires access to the process code, 
  again breaking our black-box design. 
  \exqed
\end{itemize}
\end{remark}

\subsubsection{Impossibility of Sound and Complete Session Monitoring}
\label{sec:completeness-impossible}

The weakness of our completeness result in \Cref{lem:partial-monitor-completeness} is not specific to our monitor synthesis function.
Rather, we show that this is an inherent limit of the operational model (\Cref{fig:monitor-calculus,fig:composite-system}) that captures the black-box monitor design decisions of \Cref{s:monitor-design}.
Similar impossibility results often arise for 
reasonably expressive specification languages (such as the logics in 
\cite{DBLP:journals/fmsd/FrancalanzaAI17,DBLP:conf/fsttcs/AcetoAFI17,DBLP:journals/pacmpl/AcetoAFIL19,DBLP:conf/sefm/AcetoAFIL19}), where it is usually the case that only a subset of specifications can be monitored in a sound and complete way.

\begin{theorem}[Impossibility of Sound and Complete Session Monitoring] \label{thm:impossibility}
  A (closed) session type $S \neq \textsf{end}$ cannot have a sound and complete monitor under the semantics of \Cref{fig:monitor-calculus}.%
\end{theorem}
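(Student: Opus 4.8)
The plan is to argue by contradiction: assume some monitor $M$ is both sound (\Cref{def:internal-static-soundness}) and complete (\Cref{def:internal-static-completeness}) for $S$, and derive a contradiction using the terminated process $\pnil$ as the critical witness. First I would note that since $S \neq \textsf{end}$ --- which I read as ``$S$ is not equi-recursively equivalent to $\textsf{end}$'', so that $S$ unfolds to a selection or branching type --- the only process-typing rule whose conclusion can type $\pnil$ is \ltsrule{tNil}, and hence $\emptyset\cdot\emptyset\rightvdash\pnil:S$ does not hold. By completeness of $M$, there is then a trace $t$ and a process $P'$ such that $\instr{\pnil}{M}\xRightarrow{t}\instr{P'}{\no{}{P}}$.

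The key step is to show that this execution is entirely ``monitor-internal''. Since $\pnil$ has no outgoing transition in the LTS of \Cref{fig:process-calc-syntax-semantics}, none of the composite-system rules that require a move of the process component --- namely \ltsrule{iSnd}, \ltsrule{iRcv} and \ltsrule{iProc} --- can fire along $\instr{\pnil}{M}\xRightarrow{t}\instr{P'}{\no{}{P}}$; in particular the monitor cannot reach $\no{}{P}$ via \ltsrule{mIV}, which would need a paired \ltsrule{iSnd}. Thus every step uses one of \ltsrule{iMon}, \ltsrule{iOut}, \ltsrule{iIn}; in particular $P' = \pnil$, and each of these three rules leaves the process slot untouched and places no constraint on it. Consequently, for any process $Q$ whatsoever, I can replay exactly the same sequence of monitor transitions to obtain $\instr{Q}{M}\xRightarrow{t}\instr{Q}{\no{}{P}}$: along the trace $t$, the monitor $M$ cannot distinguish $Q$ from $\pnil$.

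To close the argument I would use inhabitation of session types: for every closed session type $S$ there exists a process $P$ with $\emptyset\cdot\emptyset\rightvdash P:S$ (proved by a routine induction on the structure of $S$, using guarded recursion for $\strec{X}.S'$ and the inhabitation of base types). Instantiating $Q$ with such a $P$ yields $\instr{P}{M}\xRightarrow{t}\instr{P}{\no{}{P}}$ while $\emptyset\cdot\emptyset\rightvdash P:S$ does hold --- contradicting soundness of $M$. Therefore no monitor for $S$ can be simultaneously sound and complete, which is exactly \Cref{thm:impossibility}. Intuitively: completeness forces $M$ to flag the empty process $\pnil$, yet any execution flagging $\pnil$ is oblivious to the process and hence also flags well-typed processes, violating soundness.

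The step I expect to require the most care is the ``monitor-internal'' decomposition in the second paragraph: it relies on inspecting all rules of \Cref{fig:composite-system} and observing that, when the process component is $\pnil$, the only available $\tau$-labelled composite moves are instances of \ltsrule{iMon} (because \ltsrule{iProc}, \ltsrule{iSnd} and \ltsrule{iRcv} each demand a process transition), while the visible actions of $t$ can only be produced by \ltsrule{iOut}/\ltsrule{iIn} --- none of which read or rewrite the process. Two minor points also need recording: fixing the intended reading of ``$S \neq \textsf{end}$'' (so that $\pnil$ is genuinely ill-typed for $S$), and the standard session-type inhabitation lemma used at the end.
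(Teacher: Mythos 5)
Your proof is correct, but it takes a genuinely different route from the paper's. The paper proceeds by case analysis on the top-level constructor of $S$: for $S=\stselNA{i}{I}$ it uses exactly your witness $\pnil$ and the observation that the rejecting run cannot involve the process, but for $S=\stbraNA{i}{I}$ it instead uses a \emph{pruned} input process $P_1 = \triangleright\{\texttt{l}_j(x_j).Q_j\}_{j\in J}$ with $J\subset I$, and argues that any rejecting run of $\instr{P_1}{M}$ — whether or not it interacts with $P_1$ — can be replayed verbatim by the well-typed superset process $P_2$, since every transition of $P_1$ is also a transition of $P_2$. Your proposal unifies both cases by noticing that $\pnil$ is ill-typed for \emph{every} $S$ not equating to $\textsf{end}$ (rule \ltsrule{tNil} only types $\pnil$ against $\textsf{end}$), so the "monitor-internal run" argument — only \ltsrule{iMon}, \ltsrule{iOut}, \ltsrule{iIn} can fire when the process slot holds $\pnil$, and none of these reads or rewrites that slot — applies uniformly; you then close with a session-type inhabitation lemma, which the paper also uses implicitly when it asserts the existence of well-typed $P_2$ in both of its cases. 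What each approach buys: yours is shorter and avoids the case split entirely; the paper's branching case yields the additional insight that even processes implementing a strict, nonempty subset of the expected external choices are undetectable, which is a strictly stronger observation than "the empty process is undetectable." Do record the two side conditions you flagged — the equi-recursive reading of $S\neq\textsf{end}$ (so that $S$ unfolds to a selection or branching type) and the inhabitation lemma — but neither is a gap.
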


\begin{proof}
  We proceed by case analysis on the structure of $S$:
  \begin{description}
  \item[Case $S=\stbraNA{i}{I}$:] 
  We assume that a complete  monitor $M$ for $S$ exists and proceed to show that such a monitor is necessarily unsound for $S$.
    Fix a complete monitor $M$ for $S$.
    Consider the process\; $P_2 = \triangleright \big\{\texttt{l}_{i}(x_{i}).Q_{i}\big\}_{i \in I}$ \;that is well-typed \wrt the session type $S$.
    Then, consider the process $P_1$ obtained by pruning some of the top-level external choices of $P_2$,
     \ie $P_1 = \triangleright \big\{\texttt{l}_{j}(x_{j}).Q_{j}\big\}_{j \in J}$ where $J \subset I$ (a strict inclusion). 
     Observe that $P_1$ is ill-typed for $S$, and thus, by completeness  (\Cref{def:internal-static-completeness}), $M$ should reject $P_1$, (\ie there must exists a trace $t$ such that $\langle P_1;M \rangle \xRightarrow{t} \langle P'_1;\no{}{P} \rangle$). 
    There are two ways for $M$ to reach such a verdict: 
    \begin{itemize}
      \item $M \xRightarrow{t} \no{}{P}$ without interacting with $P_1$.
        In this case, the same rejection verdict is reached by the composite system $\instr{P_2}{M}$. 
        Since $P_2$ is well-typed for $S$, this means that $M$ is unsound for $S$ by \Cref{def:internal-static-soundness};
      \item $M$ reaches the rejection verdict after interacting (at least  once) with $P_1$.
      In this case, we have\; $P_1 \xrightarrow{\triangleright \texttt{l}_j(v)} Q_j$ \;(for some $j \in J$),
        and there are $t_1,t_2,P'_1$ such that $t = t_1.t_2$ \;and\; $M \xRightarrow{{t_1.\triangleleft \texttt{l}_j(v)}} M'$ %
        \;and\; $\instr{Q_j}{M'} \xRightarrow{t_2} \instr{P'_1}{\no{}{P}}$.
        \;But then, since $j \in J \subseteq I$,
        we also have\; $\instr{P_2}{M} \xRightarrow{t} \instr{P'_1}{\no{}{P}}$.  Since $M$ rejects the well-typed process $P_2$, this again makes $M$ unsound for $S$ by \Cref{def:internal-static-soundness}.
    \end{itemize}
    We have thus shown that a complete monitor $M$ for $S$ is necessarily unsound.
  \item[Case $S=\stselNA{i}{I}$:] Assume that a complete monitor $M$ for $S$ exists.
  The process $P_1 = \pnil$ is ill-typed for $S$ (since it does not produce any of the expected outputs). 
  By \Cref{def:internal-static-completeness} (Completeness), there must exist a trace $t$ such that $\langle P_1;M \rangle \xRightarrow{t} \langle P'_1;\no{}{P} \rangle$. 
  From the structure of $P_1$ it is clear that $M$ reaches its rejection verdict without interacting with $P_1$, \ie $M \xRightarrow{t} \no{}{P}$. 
  This also means that $M$ would also reach a rejection verdict when instrumented with $P_2 = \psnd{k}.Q'_2$ with $k \in I$ and is well-typed \wrt $S$.  
  This makes $M$ unsound by \Cref{def:internal-static-soundness}. 
\end{description}
Recall that all session types are assumed to be guarded. 
Since the above two cases rule out all the guarding constructs, $\stselNA{i}{I}$ and $\stbraNA{i}{I}$, we conclude that there is no closed (guarded) session type that can be monitored for soundly and completely, except for all the trivial session types that equate to \textsf{end}.%
\end{proof}

\section{Realisability and Implementation}
\label{s:realisabilty}

Up to this point we have considered a level of abstraction that allows us to model session monitors and monitored components, reason about their behaviour, and prove their properties.
We now illustrate how our theoretical developments can be translated into an actual implementation of session monitoring, targeting the Scala programming language.
The key idea is to turn our monitor synthesis procedure (\Cref{def:synth-function})
into a code generation tool that, given a protocol specification (as a session type), %
produces the Scala code of a corresponding executable monitor.
The tool is called \stmonitor, and is provided as companion artifact to this paper.
It is also available at:
\begin{center}
  \url{https://github.com/chrisbartoloburlo/stmonitor} \quad(release tag \texttt{v0.0.1})
\end{center}

\noindent%
We describe \stmonitor in \Cref{sec:implementation} %
--- but first, we augment session types with \emph{assertions} (\Cref{sec:session-type-assertions}).

\subsection{Introducing Assertions in Session Types Specifications}
\label{sec:session-type-assertions}%

\newcommand{\nopdA}{\textsf{no}_P^A}
\newcommand{\noedA}{\textsf{no}_E^A}

Since we use session types as specifications 
for a tool that generates executable monitors,
it is convenient to enrich them with \emph{assertions} on the values being sent or received.
We augment the session types syntax (\Cref{fig:session-types-noa}) by extending selection and branching as follows:

\newcommand{\stbraHL}[2]{\& \big\{?\texttt{l}_{#1}(\hlightlightyellow{$x_{#1}$}:\textsf{B}_{#1})[\hlightlightyellow{$A_{#1}$}].S_{#1}\big\}_{#1 \in #2}}
\newcommand{\stselHL}[2]{\oplus \big\{!\texttt{l}_{#1}(\hlightlightyellow{$x_{#1}$}:\textsf{B}_{#1})[\hlightlightyellow{$A_{#1}$}].S_{#1}\big\}_{#1 \in #2}}

\smallskip\centerline{\(
  S \;\;\Coloneqq\;\; \stselHL{i}{I} \quad\vert\quad \stbraHL{i}{I} \quad\vert\quad \ldots
\)}\smallskip

The assertions $A_i$ are predicates of the process calculus (\Cref{fig:process-calc-syntax-semantics}, \Cref{remark:predicates}),
and they can refer to the named payload variables $x_i$.
Such assertions do not influence type-checking: %
they are copied in the synthesised monitors, where they are used to flag the new violations
$\nopdA$ (assertion violation by the process) and $\noedA$ (external assertion violation).
To achieve this, we update our monitor synthesis function (\Cref{def:synth-function}) as follows:

{\small
\medskip\centerline{
  $\lsem\stsel{i}{I}\rsem\,\triangleq\,\triangleright\big\{\texttt{l}_i(x_i).\pif{\isValueB{i}{x_i}}{\hlightlightyellow{$\left(\pif{A_i}{\envsndOP \texttt{l}_i(x_i).\lsem S_i\rsem}{\nopdA}\right)$}}{\no{}{P}}\big\}_{i\in I}$
}\medskip\centerline{
  $\lsem\stbra{i}{I}\rsem\,\triangleq\,\envrcvOP\big\{\texttt{l}_i(x_i).\pif{\isValueB{i}{x_i}}{\hlightlightyellow{$\left(\pif{A_i}{\triangleleft \texttt{l}_i(x_i).\lsem S_i\rsem}{\noedA}\right)$}}{\no{}{E}}\big\}_{i\in I}$
}\medskip
}%

The only changes are \hlightlightyellow{highlighted}: if the monitored process sends a message that violates the assertion, it is flagged with $\nopdA$;
symmetrically, if a message that violates the assertion is received from the environment, then the message is flagged with $\noedA$.

\begin{example}\label{eg:session-types}\begin{subequations}
  Recall  $S_\textit{auth}$ from \Cref{eg:session-types-noa}. We can refine it with assertions to check the validity of the data being transmitted and received:
  \begin{align}
    \nonumber
    S^{A}_\textit{auth}\ \equal\ &\strec{Y}.!\texttt{Auth}(\hlightlightyellow{$\textit{uname}$}:\textsf{Str},\hlightlightyellow{$\textit{pwd}$}:\textsf{Str})[\hlightlightyellow{$\textit{validUname}(\textit{uname})$}].S^{A}_\textit{res}
    \\
    \nonumber
    S^{A}_\textit{res}\ \equal\ &\& \big\{?\texttt{Succ}(\hlightlightyellow{$\textit{tok}$}:\textsf{Str})[\hlightlightyellow{$\textit{validTok}(\textit{tok}, \textit{uname})$}].S^{A}_\textit{succ},?\texttt{Fail}(\hlightlightyellow{$\textit{code}$}:\textsf{Int})[\hlightlightyellow{$\true$}].Y \big\}
  \end{align}
  In $S^{A}_\textit{auth}$, when the \client sends $\texttt{Auth}(\textit{uname},\textit{pwd})$, 
  the value of \textit{uname} is passed to the predicate $\textit{validUname}$ which ensures that the supplied \textit{uname} is given in the correct format. 
  If the \server replies with $\labk{Succ}(\textit{tok})$, the token \textit{tok} and username \textit{uname} are validated by the cryptographic predicate $\textit{validTok}$,
  which tests whether the token is correct for the given username. 
  If so, the \client continues along session type $S^{A}_\textit{succ}$.
  Otherwise, if the \server chooses to send \labk{Fail} with the error \textit{code}, the trivial assertion check $\true$ is performed.%
  \exqed
\end{subequations}
\end{example} 
Notice that, when all assertions are trivially true, the augmented monitor synthesis is equivalent to the original \Cref{def:synth-function}.
Otherwise, the synthesised monitors with assertions are more restrictive: executions where no violations $\no{}{P}$ nor $\no{}{E}$ were detected
might now violate an assertion and result in $\nopdA$ or $\noedA$.
The introduction of such assertions in our theory changes our monitorability results as follows:
\begin{itemize}
  \item soundness (\Cref{thrm:sound}) is preserved --- which is crucial for practical usability;%
  \item blaming (\Cref{thm:blame-correctnes}) is weakened: an instrumented well-typed process may violate an assertion, and be flagged with $\nopdA$;
  \item weak detection completeness (\Cref{lem:partial-monitor-completeness}) is \emph{not} preserved: %
     assertions can in principle be unsatisfiable,
    hence some ill-typed processes may not be flagged because all their traces end with an environment assertion violation $\noedA$. 
\end{itemize}

\subsection{Implementation}
\label{sec:implementation}

We now illustrate the implementation of our session monitor synthesis tool. It generates runnable Scala code from session types, possibly including the assertions discussed in \Cref{sec:session-type-assertions}.

\subparagraph{Implementation framework.}
Our synthesised monitors uses the session programming library \lchannels \cite{DBLP:conf/ecoop/ScalasY16}.
It allows for implementing a session type $S$ in Scala, by
\begin{enumerate}%
  \item defining a set of \emph{Continuation-Passing-Style Protocol classes} (\textsf{CPSPc}) corresponding to $S$, and
  \item using a communication API that, by leveraging such \textsf{CPSPc}, lets the Scala compiler spot protocol violations.
\end{enumerate}
By using \lchannels, we are more confident that if a syntesised monitor for session type $S$ compiles, then it correctly sends/receives messages according to $S$.
Moreover, \lchannels abstracts communication from the underlying message transport, hence it allows our monitors to interact with clients or servers written in any programming language.

\subparagraph{Implementation of the session monitor synthesis.}
Overall, our Scala monitor generation requires 3 user-supplied inputs:
\begin{enumerate}[{i}1]%
  \item\label{item:synth:type} a session type $S$ (with or without assertions) describing the desired protocol;
  \item\label{item:synth:func} for each assertion in $S$ (if any), a corresponding Scala function returning true/false; and
  \item\label{item:synth:cm} a \emph{Connection Manager} class (discussed below) to interact with the monitored process.
\end{enumerate}
Given a session type (input \ref{item:synth:type}) our monitor synthesiser tool generates: 
\begin{enumerate}
  \item the protocol classes (\textsf{CPSPc}) for representing the session type in Scala + \lchannels, and
  \item the Scala source code of a runtime monitor (requiring inputs \ref{item:synth:func} and \ref{item:synth:cm} to compile).
\end{enumerate}

\newcommand{\sauthA}{S^{A}_\textit{auth}}
\newcommand{\sresA}{S^{A}_\textit{res}}

The generated monitor acts as a mediator between client and server: one is on the \emph{internal} side of the monitor (\ie the instrumented process), while the other is on the external side. 
The internal side is untrusted: its messages are run-time checked, to ensure they follow the desired protocol (\eg session type $\sauthA$ in \Cref{eg:session-types}). 
Instead, the \emph{external} side is trusted: it is (mostly) expected to follow the dual protocol (\eg the dual session type $\overline{\sauthA}$). 
This design choice allows us to simplify the monitor implementation, as its communication with the external side are handled by \lchannels.
However, our design does not limit the flexibility of the approach, since an untrusted peer can be made trusted by instrumenting it with a monitor (see discussion below).

\subparagraph{Monitor synthesis in practice.}
We now illustrate the scenario depicted in \Cref{fig:implementation-design} where: 

\begin{enumerate}
\item we want a client/server system to implement the session type (with assertions) $\sauthA$ (\Cref{eg:session-types});
\item we trust the $\server$ (\eg because it is type-checked), and %
\item we want to instrument a $\client$ whose source code is inaccessible or cannot be verified.
\end{enumerate}
Other variations of this scenario are possible. 
For instance, we could similarly instrument an untrusted server, by running our monitor synthesiser on the dual session type $\overline{\sauth}$.
The resulting combination of monitor-and-server is then trusted
and can interact via \lchannels.
As a result, it could then be used as the trusted \server in 
\Cref{fig:implementation-design}.
 
\begin{wrapfigure}[6]{r}{6cm}
  \captionsetup{justification=centering,font=scriptsize}
  \centering
  \vspace{-0.4cm}%
\centering
\begin{tikzpicture}[decoration=penciline]
  \node(client) [process, fill=gray!20] {$\client$};
  \node(cm) [draw, right=0.5cm of client, rounded corners, font=\scriptsize] {\textsf{CM}};
  \node(monitor) [process, right=0.2cm of cm]{$\textit{mon}$};
  \node(server) [process, right=1.5cm of monitor] {$\server$};
  
  \draw [thin, dotted, rounded corners] ($(client.north west)+(-0.1cm,0.17cm)$) rectangle ($(monitor.south east)+(0.1cm,-0.15cm)$);
  
  \draw[thin,double distance=2pt, fill=white] (cm) --node[above, font=\scriptsize, yshift=-1pt]{$\ast$} (client);
  \draw[thin,double distance=2pt, fill=white] (monitor) -- node(lchannels)[above, font=\scriptsize, yshift=-1pt]{\lchannels} (server);
  \node(cpspc) [draw, above=0.5cm of lchannels, rounded corners, font=\footnotesize] {$\textsf{CPSPc}$};
  \node(synth) [draw, left=0.6cm of cpspc, rounded corners, font=\footnotesize] {$\textsf{synth}$};
  \node(sauth) [decorate, draw, left=0.7cm of synth] {$S_\textit{auth}$};

  \node(cpspc) [draw, above=0.5cm of lchannels, rounded corners, font=\footnotesize] {$\textsf{CPSPc}$};

  \draw[-] (cm) -- (monitor);

  \draw[-angle 90,dashed] (sauth) edge (synth);
  \draw[-angle 90,dashed,bend left=20] (synth) edge (monitor);
  \draw[-angle 90,dashed] (synth) edge (cpspc);

\end{tikzpicture}
   \vspace{-0.5cm}%
  \caption[The composite system interacting with the environment.]{}\label{fig:implementation-design}
\end{wrapfigure}

The generated monitor (\textit{mon}) intercepts all messages between \client and \server. 
The communication between \textit{mon} and $\server$ occurs via \lchannels; instead, the communication between the monitor and the client is handled by a \emph{Connection Manager} (\CM):
a user-supplied Scala class, input~\ref{item:synth:cm}, which acts as a \emph{translator} and \emph{gatekeeper}, by transforming each messages from the monitor-client transport protocol into a corresponding CPSP class, and \emph{vice versa}.
With this design, the code generated for the monitor is abstracted from the low-level details of the protocols used by both the $\client$ and $\server$.

There is a tight correspondence between the monitors generated by our tool,
and our formal monitor synthesis. %
This increases our confidence that the results in \Cref{s:formal-results} carry over to our implementation and that our tool is indeed correct. 
In the sequel, we illustrate the generated monitoring code for \Cref{eg:session-types} above, showing the monitoring of a selection type (\Cref{lst:srv-receive}) and branching type (\Cref{lst:srv-send}).

\begin{figure}[h]
  \begin{subfigure}[l]{0.4\textwidth}\small
    \begin{flalign*}
      &\lsem \sauthA\rsem\ \equal\ &\\
      &\quad\pmu{Y}.\triangleright \big\{ \texttt{Auth}(\textit{uname}:\textsf{Str},\textit{pwd}:\textsf{Str}).&\\
      &\quad\textsf{if }\isValueB{\textsf{Str}}{uname} \wedge \isValueB{\textsf{Str}}{pwd}&\\
      &\;\quad\textsf{then if }\textit{validUname}(\textit{uname})&\\
      &\quad\quad\textsf{then }\envsndOP \texttt{Auth}(\textit{uname},\textit{pwd}).\lsem S_\textit{res}\rsem&\\
      &\quad\quad\textsf{else }\nopdA&\\
      &\;\quad\textsf{else }\no{}{P}\big\}&
    \end{flalign*}
    \vspace{0.4cm}
  \end{subfigure}
  \hfill
  \begin{subfigure}{0.58\textwidth}
    \lstinputlisting[label={lst:srv-receive},captionpos=b,aboveskip=3mm,numbers=right,frame=tbl,rulecolor=\color{black},firstnumber=1, xrightmargin=1.5ex]{listings/scala/srv-receive.scala}
  \end{subfigure}
  \vspace{-3mm}
  \caption{\centering Comparison between the formal and implementation synthesis of the internal choice.}
\end{figure}

The internal receive operator of the monitor calculus ($\triangleright$) corresponds to line 2 in \Cref{lst:srv-receive}, where the monitor invokes the \texttt{receive} method of the \CM. %
Depending on the type of message received, the monitor performs a series of checks. By default, a catch-all case (line 12) handles any messages violating the protocol:
this is similar to rule \ltsrule{mIV} of the formal monitor (\Cref{fig:monitor-calculus}),
which flags the violation $\no{}{P}$.
If \texttt{Auth} is received, the monitor initially invokes the function \texttt{validUname()} with argument \texttt{uname}; 
such a function is user-supplied (see input \ref{item:synth:func} above). 
If the function returns \texttt{false}, the monitor flags the violation and halts (line 10):
this corresponds to the external assertion violation $\nopdA$ in $\lsem \sauthA\rsem$.
Otherwise, if \texttt{validUname()} returns \texttt{true}, the message is forwarded to the \server (line 5). 
The function used to forward the message ({\small\texttt{!!}}), which is part of \lchannels, corresponds to the external output operator $\envsndOP$ of $\lsem \sauthA\rsem$;
it returns a continuation channel that is stored in \texttt{cont}. 
To associate the payload identifiers of $\sauthA$ to their current values, the monitors maintain a mapping, called \texttt{payloads}. 
In this case, the value of \texttt{uname} is stored (line 7) since it is used later on in $\sauth$. 
Finally, the monitor moves to the next state %
\texttt{sendChoice1} (\Cref{lst:srv-send}), passing the channel stored in \texttt{cont} to continue the protocol (line 8).
\begin{figure}[h]
  \begin{subfigure}[l]{0.37\textwidth}\small
    \vspace{-0.3cm}
      \begin{flalign*}
      &\lsem \sresA \rsem \equal&\\
      &\quad\envrcvOP \big\{ \texttt{Succ}(\textit{tok}:\textsf{Str}).\textsf{if }\isValueB{\textsf{Str}}{tok}&\\
      &\quad\qquad\textsf{then if }\textit{validTok}(\textit{tok},\textit{uname})&\\
      &\;\quad\qquad\textsf{then }\triangleleft \texttt{Succ}(\textit{tok}).\lsem S_\textit{succ} \rsem&\\
      &\;\quad\qquad\textsf{else } \noedA \textsf{ else }\no{}{E},&\\
      &\qquad\texttt{Fail}(\textit{code}:\textsf{Int}).\textsf{if }\isValueB{\textsf{Int}}{\textit{code}}&\\
      &\;\qquad\textsf{then if tt then }\triangleleft \texttt{Fail}(\textit{code}).Y&\\
      &\;\qquad\textsf{else } \noedA \textsf{ else }\no{}{E}\big\}&
    \end{flalign*}
  \end{subfigure}
  \hfill
  \begin{subfigure}{0.6\textwidth}
    \lstinputlisting[label={lst:srv-send},captionpos=b,aboveskip=3mm,numbers=right,frame=tbl,rulecolor=\color{black},firstnumber=1, xrightmargin=1.5ex]{listings/scala/srv-send-choice1.scala}
  \end{subfigure}
  \vspace{-3mm}
  \caption{\centering Comparison between the formal and implementation synthesis of the external choice.}
\end{figure}

According to $\sresA$, the server can choose to send either \texttt{Succ} or \texttt{Fail}. 
The monitor waits to receive either of the options from the \server, using the method {\small\texttt{?}} from \lchannels (line 2).
This corresponds to the external input operator of the monitor calculus ($\envrcvOP$) used in $\lsem \sresA \rsem$, which can also receive both options from the \server. 
\begin{itemize}
  \item If the \server sends $\texttt{Succ}(\mathit{toc})$, %
  the first case is selected (line 3). 
  The monitor evaluates the assertion \texttt{validTok} on $\textit{tok}$ and $\textit{uname}$ (stored in \Cref{lst:srv-receive}, and now retrieved from the \texttt{payloads} mapping). 
  If it is satisfied, the message is forwarded to the client (line 5) via \CM's \texttt{send} method, %
  which corresponds to the internal send operator ($\triangleleft$) in the monitor calculus. 
  The monitor then proceeds according to the monitor $\lsem S_\textit{succ} \rsem$. 
  Otherwise, the monitor logs a violation and halts (line 8);
  similarly, $\lsem\sresA\rsem$ flags the violation $\noedA$ indicating an external assertion violation. 

  \item Instead, if the \server sends \texttt{Fail} (line 10), the monitor forwards it to the client; %
  there are no assertion checks here, as the assertion after $\texttt{Fail}$ in $\lsem\sresA\rsem$ is \textsf{tt}. 
  Then, following the recursion in $\lsem\sresA\rsem$,
  the monitor (on line 12) loops to \texttt{receiveAuth} (\Cref{lst:srv-receive}). 
\end{itemize}
Unlike the synthesised code of \texttt{receiveAuth} (that handles the previous external choice, in \Cref{lst:srv-receive}), there is no catch-all case for unexpected messages from the \server. This is by design. As explained above we use \lchannels to interact with the ``trusted'' external side, hence the interaction with the server is typed, and a catch-all case would be unreachable code.
Still, \lchannels throws an exception (crashing the monitor) if it receives an invalid message
--- which corresponds to the monitor $\lsem\sresA\rsem$ flagging an external violation via rule \ltsrule{mEV}.
\section{Empirical Evaluation}
\label{s:implementation}

We evaluate the feasibility of our implementation by measuring the overheads induced by the run-time checks of our synthesised monitors (\Cref{s:realisabilty}). %
We consider 3 application protocols, modelled as session types, as our benchmarks:
\begin{enumerate}
  \item A ping-pong protocol, based on a request-response over HTTP (a style of protocol that is typical, \eg in applications based on web services).
  Although it is a fairly simple protocol, our implementation uses HTTP to carry ping/pong messages, highlighting the fact that our generated monitors are independent from the message transport in use;
  \item A fragment of the Simple Mail Transfer Protocol (SMTP) \cite{SMTP}. This benchmark represents  a more complex protocol featuring nested internal/external choices;
  \item A fragment of the HTTP protocol, also featuring sequences of nested internal/external choices.
\end{enumerate}

\subparagraph{Ping-pong over HTTP.} 
In this protocol, a client is expected to recursively send messages with label $\labk{Ping}$ to the server which, in turn, replies with $\labk{Pong}$.
The protocol proceeds until the client sends $\labk{Quit}$. The client-side protocol is shown below (the server-side is dual).

\smallskip\centerline{\(
  S_\textit{pong}\ \equal\ \strec{X}.(\oplus \big\{ !\labk{Ping}().?\labk{Pong}().X,\ !\labk{Quit}()\big\})\nonumber
\)}\smallskip

Notice that the protocol has no explicit reference to HTTP. In fact, we use HTTP as a mere message transport, by providing a suitable Connection Manager to the synthesised monitor (which is transport-agnostic). Concretely, the ping-pong is implemented with the server handling requests on an URL like\; {\small\texttt{http://127.0.0.1/ping}}, \;and the client performing a {\small\texttt{GET}} request on that URL, and reading the response.
For this benchmark, the setup is:
\begin{itemize}%
  \item the client is on the internal side of the generated monitor, hence subject to scrutiny;
  \item the server is on the external side of the generated monitor.
\end{itemize}
As untrusted client we use a standard, unmodified load testing tool: Apache JMeter ({\small\url{https://jmeter.apache.org/}})
configured to send HTTP requests at an increasing rate.%

\subparagraph{SMTP.}
We model a fragment of the SMTP protocol (server-side) as the session type $S_\textit{smtp}$:
{\small
\begin{align}
  S_\textit{smtp}\ \equal\ & !\labk{M220}(\textit{msg}: \textsf{Str}).\&\big\{?\labk{Helo}(\textit{host}: \textsf{Str}).!\labk{M250}(\textit{msg}: \textsf{Str}).S_\textit{mail}, ?\labk{Quit}().!\labk{M221}(\textit{msg}: \textsf{Str}) \big\}\nonumber\\
  \label{s-mail-1}S_\textit{mail}\ \equal\ & \strec{X}.(\&\big\{?\labk{MailFrom}(\textit{addr}: \textsf{Str}).!\labk{M250}(\textit{msg}: \textsf{Str}). \strec{Y}.(\&\big\{\\
  \label{s-mail-2}&\qquad?\labk{RcptTo}(\textit{addr}: \textsf{Str}).!\labk{M250}(\textit{msg}: \textsf{Str}).Y,\\
  \label{s-mail-3}&\qquad?\labk{Data}().!\labk{M354}(\textit{msg}: \textsf{Str}).?\labk{Content}(\textit{txt}: \textsf{Str}).!\labk{M250}(\textit{msg}: \textsf{Str}).X,\\
  &\qquad?\labk{Quit}().!\labk{M221}(\textit{msg}: \textsf{Str})\big\}), ?\labk{Quit}().!\labk{M221}(\textit{msg}: \textsf{Str})\big\})\nonumber
\end{align}
}%
When a client establishes a connection, the server sends a welcome message (\texttt{M220}), and waits for the client to identify itself (\texttt{Helo}). 
Then, the client can recursively send emails by specifying the sender and recipient address(es), followed by the mail contents. 
The client can send multiple emails by repeating the loop on ``$X$'' between lines \eqref{s-mail-1} and \eqref{s-mail-3}.

The SMTP protocol runs over TCP/IP. The specification above (and the synthesised monitors) are again transport-agnostic: we handle TCP/IP sockets by providing a suitable Connection Manager to the synthesised monitor.

For this benchmark, the setup used is ``dual'' to that of the HTTP ping-pong benchmark above:
\begin{itemize}%
  \item the server is on the internal side of the generated monitor, hence subject to scrutiny;
  \item the client is on the external side of the generated monitor.
\end{itemize}
For this experiment, we implement an SMTP client that sends emails to the server, and measures the response time.
We take such measurements against two (untrusted and monitored) servers, both configured
to accept incoming emails and discard them:
\begin{enumerate}
  \item a default instance of \texttt{smtpd} from the Python standard library;\footnote{\url{https://docs.python.org/3/library/smtpd.html}}
  \item a default instance of Postfix,\footnote{\url{http://www.postfix.org/}} one of the most used SMTP servers \cite{SMTPSurvey}.
\end{enumerate}

\subparagraph{HTTP.}
In this benchmark,  we do \emph{not} use HTTP as a mere message transport (unlike the ping-pong benchmark above). 
Rather, we model HTTP headers, requests, and responses with a session type, which we use to synthesise a monitor that checks the interactions between a trusted server and an untrusted client. 
We focus on a fragment of HTTP that is sufficient for supporting typical client-server interactions (\eg when the client is the Mozilla Firefox browser). 
The HTTP session type (here omitted due space reasons) and its (trusted) server implementation are adapted from the \lchannels examples \cite{DBLP:journals/darts/ScalasDHY17}.
For benchmarking, we use Apache JMeter ({\small\url{https://jmeter.apache.org/}}) as untrusted client.

\subparagraph{Benchmarking setups and measurements.}
In all of our benchmarks, we study the overhead of our synthesised monitors by comparing:
\begin{itemize}%
  \item an \emph{unsafe} setup: the client and server interact directly;
  \item a \emph{monitored} setup: communication between the trusted and untrusted components is mediated by our synthesised monitors, which halts when it detects a violation --- as described earlier in \Cref{fig:implementation-design}.
\end{itemize}

We follow a multi-faceted approach, as advocated by \cite{DBLP:conf/fase/AcetoAFI21}, and base our study on three measurements: average response time, average CPU utilisation, and maximum memory consumption.
The response time is arguably the most important measurement, since slower response times can be immediately perceived when interacting with a monitored system.
We measure them by running experiments of increasing length: 
for ping-pong and HTTP, we perform an increasing number of request-response loops, whereas
for SMTP, we send an increasing number of emails.
The general expectation is: for longer experiments, the average response time and CPU usage should decrease, while the maximum memory consumption should increase.
We repeat each experiment 30 times, and we plot the average of all results.

In our benchmarks, overheads can have two forms:

\begin{enumerate}[{Overhead} 1:]%
  \item\label{item:overhead:msg} the translation and duplication of messages being forwarded between client and server;
  \item\label{item:overhead:checks} the run-time checks needed to ensure that the desired session type is being respected.
\end{enumerate}

Overhead \ref{item:overhead:msg} is unavoidable for the most part. 
By their own nature, partial identity monitors (like ours) must receive and forward all messages. 
This overhead can only be minimised by using more efficient message transports.
By contrast, overhead \ref{item:overhead:checks} is specifically caused by our monitor synthesis.
Our benchmarks were specifically designed to accurately capture this latter form of overhead.
In order to better distinguish overhead \ref{item:overhead:msg} from overhead \ref{item:overhead:checks}, our benchmarks run the trusted side (client or server) and the synthesised monitors on a same JVM instance, where they interact in the most efficient way (\ie through the \texttt{LocalChannel} transport provided by the \lchannels library). 
This minimises overhead \ref{item:overhead:msg}, and allows us to better observe the impact of overhead \ref{item:overhead:checks}. Clearly, the untrusted side of each benchmark (\ie the black-box client or server being monitored) always runs as a separate process.

Despite this, our synthesised monitors can still be deployed independently of the trusted side (\ie on their own JVM, possibly across a network) because they are agnostic to the message transports in use; 
this is made possible by the use of connection managers and \lchannels. We demonstrate this capability by also taking measurements for
  a \emph{detached} setup, where the trusted component and monitor run on separate JVMs (on a same host), and interact via TCP/IP (through a suitable message transport for \lchannels).
  This setup is more flexible, but the slower message transport increases overhead \ref{item:overhead:msg}. We implemented this setup for ping-pong and SMTP, measuring their response times.

\subparagraph{Results and analysis.}

The benchmark results are reported \Cref{fig:benchmarks}.
\begin{figure}
  \newcommand{\plotScale}{0.65}
  \centering
  \begin{subfigure}{\linewidth}
    \centering
    \scalebox{\plotScale}{
\begingroup%
\makeatletter%
%
\makeatother%
\endgroup%
     }
    \caption{\centering HTTP session (trusted server, untrusted client). Monitored response time overhead: $4.81\%$.}
    \label{fig:http-benchmarks}
  \end{subfigure}

  \caption{Benchmark results: average CPU usage, maximum memory consumption, and average response time.
    {\footnotesize(30 runs, 2 CPUs (Intel Pentium Gold G5400 @ 3.70GHz), 8 GB RAM, Ubuntu 20.04)}}
  \label{fig:benchmarks}
\end{figure}
For the ping-pong benchmark (\Cref{fig:pingpong-benchmarks}), the impact of monitors is noticeable but limited: for the ``monitored'' setup (which highlights overhead \ref{item:overhead:checks}), the response times are less than 14\% slower;
the ``detached'' monitor setup is unsurprisingly slower, due to its slower message transport (which increases overhead \ref{item:overhead:msg}).
For the SMTP benchmark (\Cref{fig:smtp-benchmarks,fig:smtp-postfix-benchmarks}), we can observe different behaviours:
\begin{itemize}
\item the Python \texttt{smtpd} server (\Cref{fig:smtp-benchmarks}) has extremely fast response times: it is essentially
  a dummy server that receives emails and does nothing with them. This is also evident from the CPU usage: it constantly increases, because the SMTP client receives immediate responses, no matter how many emails it sends, with or without a monitor.
  Consequently, our monitors cause a relatively high impact on such fast response times (almost 34\%);
\item the Postfix SMTP server (\Cref{fig:smtp-postfix-benchmarks}) is more realistic: unlike Python \texttt{smtpd}, it takes some time (with fluctuations) to process each email and respond to the client. Consequently, our monitors have a relatively small impact on the response times (less than 7\%).
\end{itemize}
As in the case of ping-pong, the ``detached'' monitor setup for both SMTP benchmarks is slower, as it uses a slower message transport (which increases overhead \ref{item:overhead:msg}).
Finally, the HTTP benchmark (\Cref{fig:http-benchmarks}) shows a response time overhead that is below 5\%.
By and large, these overhead levels are tolerable for many applications that are not mission critical, and are comparable to the overhead experienced when running state-of-the-art RV tools~\cite{DBLP:series/lncs/BartocciFFR18}.

\section{Conclusion}
\label{s:conclusion}

We presented a formal analysis for the \emph{monitorability limits} of (binary) session types \wrt a partial-identity monitor model; to wit, this is the first monitorability assessment of session types.
We couple this study %
with an implementation of session monitor synthesis.%

More in detail, our contributions are the following.
On the the theoretical side, we provide the first treatment of the \emph{monitorability} of session types,
and \emph{detection-soundness} and \emph{detection-completeness} properties of session monitors,
and we prove that our autogenerated session monitors enjoy both the former and (to a lesser extent) the latter.
We also present an impossibility result of completeness for our black-box monitoring setup --- which is a novel result to the area of session type monitoring. 
On the practical side, we evaluate the viability of our implementation (called \stmonitor) via benchmarks. 
The results show that our monitor synthesis procedure only introduces limited overheads.%

\subsection{Related Work}

Several papers address the monitoring of session-types-based protocols
--- but no previous work studies the formal problem of session monitorability;
furthermore, their approaches differ from ours in various ways, as we now discuss.

The work \cite{DBLP:journals/tcs/BocchiCDHY17} %
formalises a theory of process networks including monitors generated from (multiparty) session types. 
The main differences with our work are:
\begin{enumerate}%
\item%
  the design of \cite{DBLP:journals/tcs/BocchiCDHY17} is based on a global, centralised router %
  providing a \emph{safe transport network} %
  that dispatches messages between participant processes; %
  correspondingly, its implementation \cite{DBLP:journals/fmsd/DemangeonHHNY15,DBLP:conf/rv/NeykovaYH13,DBLP:conf/rv/HuNYDH13}
  includes a Python library for monitored processes to access the safe transport network.
  By contrast, we do not assume a specific message routing system,
  and our theory and implementation address the monitoring of black-box components;
\item%
  the results in \cite{DBLP:journals/tcs/BocchiCDHY17}
  do not consider limits related to session monitorability.
  Their results (\eg \emph{transparency}) are analogous to our \emph{detection soundness}
  (\Cref{thrm:sound}), \ie synthesised monitors do not disrupt communications of well-typed processes;
  they do not address \emph{completeness}
  (\Cref{lem:partial-monitor-completeness} and \Cref{thm:impossibility}), \ie to what extent can a monitor detect ill-typed processes.
\end{enumerate}
Furthermore, our work and \cite{DBLP:journals/tcs/BocchiCDHY17} differ in a fundamental design choice:
when our monitors detect an invalid message, %
they flag a violation and halt --- %
whereas monitors in \cite{DBLP:journals/tcs/BocchiCDHY17} drop invalid messages, %
and keep forwarding the rest.
The latter is akin to \emph{runtime enforcement via suppressions}~\cite{DBLP:conf/concur/AcetoCFI18};
studying this design with our theory is interesting future work.

Our protocol assertions (\Cref{sec:session-type-assertions}) are reminiscent of %
\emph{interaction refinements} in \cite{Neykova18CC}, %
that are also statically generated (by an F\# type provider), %
and dynamically enforced when messages are sent/received. 
However, our approach and design are different from \cite{Neykova18CC}: 
we synthesise session monitoring processes %
that can be deployed over a network, to instrument black-box processes %
--- whereas \cite{Neykova18CC} %
expects the runtime-verified code to be written with a specific language and framework,
and injects dynamic checks in the program executable. 
Furthermore, the work \cite{Neykova18CC} %
does not address session monitorability limits.

The work \cite{DBLP:conf/cc/NeykovaY17} proposes a methodology to supervise (multiparty) session protocols, and recover them in case of failure of some component; it also includes an implementation in Erlang.
Similarly to this work, in \cite{DBLP:conf/cc/NeykovaY17} each component is observed by a session monitor; unlike this work, \cite{DBLP:conf/cc/NeykovaY17} does not address any aspect of session monitorability, and focuses on proving that its recovery strategy does not deadlock.

The work by Gommerstadt \etal \cite{DBLP:conf/esop/GommerstadtJP18} considers a partial identity monitor model for session types that is close to the one discussed in \Cref{s:designing-hybrid-methodology}.
They however do not provide any synthesis function and assume that monitors are constructed by hand.
To complement this, they define a dedicated type system to prove that the monitor code behaves as a partial identity, \eg it forwards messages in the correct order, without dropping them. 
They do not study session monitorability. 
To our knowledge, their approach has not been implemented as a tool nor has it been assessed empirically either.

Melgratti and Padovani~\cite{DBLP:journals/pacmpl/MelgrattiP17} propose monitors that act as wrappers around a session library. 
This technique effectively \emph{inlines} the monitors  in the monitored process code. 
In fact, their implementation assumes that the processes under scrutiny are written in OCaml using the FuSE library. 
In contrast, we synthesise \emph{outline} monitors as independent processes that observe black-box implementations written in \emph{any} language/library.
The work proves a series of results that are akin to our notion of monitoring soundness, without addressing completeness.

In separate work, Waye \etal\cite{DBLP:journals/pacmpl/WayeCD17} monitor black-box services, focusing exclusively on \emph{request-response} protocols. 
Unlike our session-type monitors, they do not support protocols with prescribed sequences of internal/external choices and recursion. 
In fact, their contracts are analogous to enhanced assertions on transmitted/received values (reminiscent of the assertion introduced in \Cref{sec:session-type-assertions}).
Although they provide soundness results for their monitoring framework, they do not consider any further monitorability issues.

The recent work \cite{DBLP:conf/tacas/HamersJ20} presents a runtime verification framework for communication protocols (based on multiparty session types) in Clojure. Unlike this work, \cite{DBLP:conf/tacas/HamersJ20} expects monitored applications to be written in a specific language and framework --- whereas we address the monitoring of black-box processes.
Again, \cite{DBLP:conf/tacas/HamersJ20} does not study session monitorability.%

\subsection{Future Work}

This work is our first step along a new line of research on the relative power of static versus run-time verification methods.
In general terms, given a calculus $C$ with a type system $T$ and run-time monitoring system $M$,
\emph{monitoring soundness} tells us whether $M$ is flagging ``real'' errors according to $T$.
Dually, \emph{monitoring completeness} tells us whether $T$ is too restrictive \wrt $M$
(\ie whether $T$ is rejecting too many processes that $M$ deems well-behaved).
In this work, we demonstrate a rather tight connection
between the chosen process calculus ($C$) and session type system ($T$), and our session monitors ($M$):
our synthesised monitors are sound (\Cref{thrm:sound}), and most processes rejected by the type system behave incorrectly (\Cref{lem:partial-monitor-completeness}).
Our plan is to study more instances of $C$, $T$ and $M$ --- both in theory, and in practice.

One avenue worth exploring is that of increasing the observational powers of the monitoring setup considered, in order to extend session monitorability.
The work by Aceto \etal~\cite{DBLP:conf/fossacs/AcetoAFI18} is a systematic study that considers a variety of extensions to the traditional monitoring setup (consisting of one monitor observing events describing the computation effected by the process under scrutiny). 
The extensions considered include traces that report process termination and events that could not have been produced at different stages of the computation (\ie refusals~\cite{DBLP:journals/tcs/Phillips87}).
They also consider monitoring setups where a process is monitored over multiple runs.
In each case, they show the maximal properties that can be monitored for in a sound and complete manner, characterised a syntactic fragments of the modal $\mu$-calculus.
We intend to consider how any of the proposed extensions would affect our monitorability results and the extent to which they are implementable in practice.
Other bodies of work take a slightly different approach to monitorability, by weakening the completeness requirement from their notion of adequate monitoring~\cite{DBLP:conf/csl/AcetoAFIL21,AcetoAFIL21:SOSYM}.
It would be worthwhile exploring the effect of having such weakened completeness requirements on the monitorability of session types.  

Although we have limited ourselves to binary session, we plan to extend the framework above
to the static and run-time verification of multiparty and asynchronous sessions~\cite{HYC08,HYC16}. 
This will most likely require us to consider communicating monitors, that cooperate to aggregate observations made from analysing communications on distinct channels. 
For multiparty sessions, we can benefit from previous work~\cite{DBLP:conf/ecoop/ScalasDHY17,DBLP:journals/darts/ScalasDHY17} %
where \lchannels is used to implement multiparty protocols written in Scribble %
\cite{scribble,YHNN2013}.
Our implementations should also benefit from insights gained from numerous work on decentralised runtime verification~\cite{DBLP:journals/fmsd/0002F16,DBLP:conf/sefm/AttardF17,DBLP:journals/fmsd/FinkbeinerHST19}.
For both multiparty and asynchronous sessions, we can benefit from the research on precise session subtyping~\cite{GhilezanPPSY21,GhilezanJPSY19,DBLP:journals/lmcs/ChenDSY17}.

In this work, our session monitors adhere to the \emph{``fail-fast''} design methodology: %
if a protocol violation occurs, the monitor flags the violation and halts.  
In the practice of distributed systems, %
``fail-fast'' is advocated as an alternative to defensive programming~\cite{erlangCesarini2009}; %
it is also in line with existing literature on runtime verification~\cite{DBLP:series/lncs/BartocciFFR18}. %
As mentioned above, an interesting research direction is to adapt our session monitorability framework
to \emph{suppressions}~\cite{DBLP:conf/concur/AcetoCFI18}, \ie by dropping invalid messages without halting the monitor,
as in \cite{DBLP:journals/tcs/BocchiCDHY17}.
Finally, we plan to investigate how to handle violations by adding \emph{compensations} to our session types
--- \ie by formalising how the protocol should proceed if a violation is detected at a certain stage.
In this setting, the monitors would play a more active role in handling violations, and their synthesis would need to be more sophisticated;
this new research could be related to the work on session recovery \cite{DBLP:conf/cc/NeykovaY17}.
 
\bibliography{refs-condensed}

\iftoggle{techreport}{%
\appendix

\section{Process Calculus and Types --- Addendum to \Cref{sec:calculus-types}}
\label{app:calculus-types}

\subsection{Process Calculus}

We use standard substitution functions both for value variables and process variables. 
They operate on value variables and process variables respectively in the following way:

\medskip
\centerline{
  $x[\nicefrac{v}{y}] \equal 
  \left\{\begin{array}{ll}
    v & \text{if }x\equal y;\\
    x & \text{otherwise}.
  \end{array}\right.\qquad
  X[\nicefrac{\mu_Y.Q}{Y}] \equal
  \left\{\begin{array}{ll}
    \mu_{X}.P & \text{if }X \equal Y; \\
    X & \text{otherwise}.
  \end{array}\right.$
}
\medskip

\noindent and are homomorphic for all the other cases. 
We also use standard functions to determine the \emph{free} value variables and process variables within processes. 
These follow the standard inductive definitions found in the literature such as \cite{Pierce02} which give the set of variables appearing free in a process.

For the process semantics, we use the set $\textsc{Act}$ to contain all \textbf{visible actions} (\ie inputs and outputs);
hence, an action $\alpha \in \textsc{Act}$ %
cannot be $\tau$. 

\subsection{Binary Session Types with Assertions}
\label{sec:session-types-assertions}

In this section we present the full syntax and semantics of session types
augmented with assertions: the are formalised in \Cref{fig:session-types-with-assertions}.

The assertions $A$ have the same syntax of the predicates of the process calculus (\Cref{fig:process-calc-syntax-semantics}).
Unlike typical formulations, our process types have \emph{payload variables} $x$ that enable assertions to place predicates over communicated values. We often omit trivial assertions (\ie \textsf{tt}), and payload variables that are not referenced in any assertion: these omissions produce the syntax shown in \Cref{fig:session-types-noa}.
We assume a standard substitution definition for the session types. 

\begin{figure}
  \small
  \textbf{Syntax}\\
  \centerline{\(
    \begin{array}{r@{\hskip 2mm}r@{\hskip 2mm}c@{\hskip 2mm}l}
      \text{Session types}& R,S & \Coloneqq & \stsel{i}{I} \quad\vert\quad \stbra{i}{I}\\
      &&& \;\vert\; \strec{X}.S \;\vert\; X \;\vert\; \textsf{end}
      \hspace{6mm}
      \text{\footnotesize(with $I \!\neq\! \emptyset$,\;$\texttt{l}_i$ pairwise distinct)} 
    \end{array}
    \)
  }
  \smallskip
  \textbf{Transitions}\\
  \smallskip
  \centerline{
    \inference[\text{[\textsc{sRec}]}]{}{\strec{X}.S \xrightarrow{\tau} S[\nicefrac{\strec{X}.S}{X}]}
    \hspace{0.6cm}
    \inference[\text{[\textsc{sSel}]}]{}{\stsel{i}{I} \xrightarrow{\selactk{j}} S_j} $j \in I$
  }\smallskip\\
  \centerline{
    \inference[\text{[\textsc{sBra}]}]{}{\stbra{i}{I} \xrightarrow{\braactk{j}} S_j} $j \in I$
  }\smallskip\\

  \textbf{Dual types}\\[2mm]
  \centerline{\(
    \begin{array}{r@{\;\;}c@{\;\;}l@{\qquad}c}
    \overline{\stbra{i}{I}} &=& \oplus \big\{!\texttt{l}_{i}(x_{i}:\textsf{B}_{i})[A_{i}].\overline{S_{i}}\big\}_{i\in I} &
    \overline{\textsf{end}} \,=\, \textsf{end} \qquad {\overline{X}} \,=\, X\\[2mm]
    \overline{\stsel{i}{I}} &=& \& \big\{?\texttt{l}_{i}(x_{i}:\textsf{B}_{i})[A_{i}].\overline{S_{i}}\big\}_{i\in I}&
    \overline{\strec{X}.S} \;=\; \strec{X}.\overline{S}
  \end{array}
  \)}
  \caption[Session Types Syntax Augmented with Assertions, their Transitions and Dual Types]{Session Types Augmented with Assertions: Syntax, Transitions and Duality}\label{fig:session-types}\label{fig:session-types-with-assertions}
\end{figure} 
These session types come equipped with transition rules (also in \Cref{fig:session-types}) that show how a type evolves with the particular actions. 
Similarly to the actions for processes in Section \ref{s:process-calculus}, the actions $\delta\in \Act_{ST} \cup \{\tau\}$ range over \emph{output} actions $\selact{l}$, \emph{input} actions $\braact{l}$ and a \emph{silent} action $\tau$. 
The rule \ltsrule{sRec} unfolds the type $\strec{X}.S$ to $S[\nicefrac{\strec{X}.S}{X}]$ by the silent transition $\tau$. 
By \ltsrule{sSel} the selection type proceeds according to the continuation type $S_j$ with the input action $\selactk{j}$, also given that $j \in I$. 
Similarly, by \ltsrule{sBra} the branching type may transition to the continuation type $S_j$ with the input action $\braactk{j}$, given that $j\in I$. 
It is worth noting that within our model, the branching and selection types transition irrespective of the outcome of the assertions. 

\subsection{Session Type System with Assertion Well-Formedness Checks}\label{s:session-type-system-assertions}

The typing rules in \Cref{fig:session-typing-rules} augment those in \Cref{fig:session-typing-rules-noa} with additional checks on the well-formedness of assertions.  When assertions are trivial ($\true$), we obtain the rules in \Cref{fig:session-typing-rules-noa}.
As it may be noted from the augmented rules \ltsrule{tBra} and \ltsrule{tSel}, assertions can only be statically checked for their type and cannot be evaluated since some of the variables only become bound to a value at runtime. 

\begin{figure}
  \small
  \textbf{Process Typing}
  \bigskip\\
  \centerline{
    \inference[\text{[\textsc{tBra}]}]{\forall i \in I, \exists j \in J \cdot \texttt{l}_j \equal \texttt{l}_i,\quad \textsf{B}_j\equal \textsf{B}_i,\quad\Theta\cdot\Gamma, x_j:\textsf{B}_j \rightvdash P_j:S_i,\quad \Gamma, x_j:\textsf{B}_j \rightvdash A_i:\textsf{Bool}}{\Theta\cdot\Gamma \rightvdash \prcv{j}{J}: \stbra{i}{I}}
  }\bigskip\\
  \medskip\centerline{
    \inference[\text{[\textsc{tSel}]}]{\exists i \in I \cdot \texttt{l} \equal \texttt{l}_i,\quad \Gamma\rightvdash a:\textsf{B}_i,\quad\Theta\cdot\Gamma \rightvdash P:S_i,\quad \Gamma,x_i:\basetype_i \rightvdash A_i:\textsf{Bool}}{\Theta\cdot\Gamma \rightvdash \psndk{l}{a}.P: \stsel{i}{I}}
  }\medskip\\
  \centerline{
    \inference[\text{[\textsc{tRec}]}]{\Theta, X:S\cdot\Gamma\rightvdash P:S}{\Theta\cdot\Gamma \rightvdash \pmu{X}.P:S}
    \hspace{1cm}
    \inference[\text{[\textsc{tPVar}]}]{\Theta(X)\equal S}{\Theta\cdot\Gamma \rightvdash X :S}
  }\bigskip\\
  \medskip\centerline{
    \inference[\text{[\textsc{tIf}]}]{\Gamma\rightvdash A:\textsf{Bool}&\Theta\cdot\Gamma\rightvdash P:S & \Theta\cdot\Gamma\rightvdash Q:S}{\Theta\cdot\Gamma \rightvdash \pif{A}{P}{Q}: S}
    \hspace{1cm}
    \inference[\text{[\textsc{tNil}]}]{}{\Theta\cdot\Gamma \rightvdash \pnil: \textsf{end}}
  }
  \caption[Session Typing Rules with Assertion Well-Formedness Checks]{Session Typing Rules with Assertion Well-Formedness Checks.}\label{fig:session-typing-rules}
\end{figure}
 
We assume that predicates can be type-checked with standard rules having the following standard properties:
\begin{enumerate}
  \item \label{assertion-lemma-1}(\textit{Weakening lemma}). If $\Gamma\rightvdash A:\textsf{Bool}$, then $\Gamma,x:\basetype\rightvdash A:\textsf{Bool}$.
  \item \label{assertion-lemma-2}(\textit{Strengthening lemma}). If $\Gamma,x:\basetype\rightvdash A:\textsf{Bool}$ and $x \not\in \fv{A}$, then $\Gamma\rightvdash A:\textsf{Bool}$.
  \item \label{assertion-lemma-3}(\textit{Variable instantiation in assertions lemma}). If $\Gamma,x:\basetype\rightvdash A:\textsf{Bool}$ then $\Gamma\rightvdash A[\nicefrac{v}{x}]:\textsf{Bool}$.
  \item \label{assertion-lemma-4}(\textit{Substitution lemma for variables in assertions}). If $\Gamma \rightvdash v:\basetype$ and $\Gamma,y:\basetype \rightvdash A:\textsf{Bool}$, then $\Gamma \rightvdash A[\nicefrac{v}{y}]:\textsf{Bool}$.
\end{enumerate}

\section{Proof of \Cref{thrm:sound}}
\label{sec:proof-mon-soundness}

\subsection{Properties of the Session Typing System}\label{s:properties-of-type-system}

The proof for subject reduction depends on an important technical property known as the \emph{substitution} lemma. 
This depends on two minor, yet crucial, properties of the type system that are the \emph{weakening} (Lemma \ref{lemma:weakening}) and \emph{strengthening} (Lemma \ref{lemma:strengthening}) lemmata.
These lemmata respectively show that the addition of new mappings and removal of mappings from the typing environments (with some restrictions) do not affect a type-checked process or identifier. 

We start with the weakening lemma which consists of two statements: for the identifiers $a$ and processes $P$. 
The latter consists of two further cases: for the weakening of the environment $\Gamma$ with the addition of variables $x$ and similarly for $\Theta$ with the addition of process variables $X$. 

\begin{lemma}[\textit{Weakening}]~\label{lemma:weakening}
  \begin{enumerate}
    \item\label{weak-i} $\Gamma\rightvdash a:\basetype$ implies $\Gamma,x:\basetype'\rightvdash a:\basetype$
    \item $\Theta\cdot\Gamma\rightvdash P:S$ implies: 
    \begin{enumerate}
      \item\label{weak-ii-a} $\Theta\cdot\Gamma,x:\basetype\rightvdash P:S$
      \item\label{weak-ii-b} $\Theta,X:S'\cdot\Gamma\rightvdash P:S$
    \end{enumerate}
  \end{enumerate} 
\end{lemma}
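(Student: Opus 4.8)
The plan is to prove the Weakening Lemma by straightforward induction on the relevant typing derivations, treating the three statements in turn. The only mild subtlety is a freshness/$\alpha$-conversion bookkeeping issue around binders, which I flag below as the main obstacle.

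For statement~\ref{weak-i}, I would proceed by case analysis on the last rule used to derive $\Gamma \rightvdash a : \basetype$, which must be either \ltsrule{tVar} or \ltsrule{tVal}. In the \ltsrule{tVal} case, $a = v$ with $v \in \basetype$, and this side condition is independent of the environment, so $\Gamma, x:\basetype' \rightvdash v : \basetype$ holds immediately by \ltsrule{tVal}. In the \ltsrule{tVar} case, $a$ is a variable $y$ with $\Gamma(y) = \basetype$; since extending $\Gamma$ with a mapping $x:\basetype'$ cannot remove the mapping for $y$ (and we may assume $x \neq y$ by the usual convention that bound/fresh variables are chosen distinct from those already in scope, or else observe that shadowing is not an issue here because $\Gamma$ is treated as a partial map being extended consistently), we get $(\Gamma, x:\basetype')(y) = \basetype$, hence $\Gamma, x:\basetype' \rightvdash y : \basetype$ by \ltsrule{tVar}.

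For statements~\ref{weak-ii-a} and \ref{weak-ii-b}, I would do a single induction on the derivation of $\Theta \cdot \Gamma \rightvdash P : S$, carrying both extensions (adding $x:\basetype$ to $\Gamma$, and adding $X:S'$ to $\Theta$) simultaneously. The axiom cases \ltsrule{tNil} and \ltsrule{tPVar} are immediate: \ltsrule{tNil} has no premises and its conclusion $\pnil : \textsf{end}$ is environment-insensitive, and for \ltsrule{tPVar} the reasoning mirrors the \ltsrule{tVar} case above on $\Theta$. The inductive cases \ltsrule{tBra}, \ltsrule{tSel}, \ltsrule{tRec}, and \ltsrule{tIf} all follow by applying the induction hypothesis to each premise and re-applying the same rule; in the cases with premises about identifiers (the $\Gamma \rightvdash a : \textsf{B}_i$ premise of \ltsrule{tSel}) or about assertions (the $\Gamma, x_i:\textsf{B}_i \rightvdash A_i : \textsf{Bool}$ premises), I would invoke statement~\ref{weak-i} and the assumed weakening property of predicates (item~\ref{assertion-lemma-1} in the list following \Cref{fig:session-typing-rules}) respectively. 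The key point to get right is the cases \ltsrule{tBra} and \ltsrule{tRec}, where the premises are typed in an extended environment ($\Gamma, x_j:\textsf{B}_j$ for \ltsrule{tBra}, $\Theta, X:S$ for \ltsrule{tRec}): here one needs that the newly-weakened-in variable is distinct from the bound variable, which is exactly the standard Barendregt convention, so after a possible $\alpha$-renaming the environments commute and the induction hypothesis applies cleanly.

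The main obstacle, such as it is, is purely administrative: ensuring that the extension variable ($x$ or $X$) does not clash with binders encountered during the induction, so that environment extension and the binder's environment extension genuinely commute. This is resolved, as usual, by assuming all bound names are chosen fresh (alternatively, by noting that any clash can be removed by $\alpha$-conversion of $P$, which preserves typing). I do not anticipate any genuine mathematical difficulty here; the lemma is routine and serves only as a stepping stone to the Strengthening Lemma and, ultimately, the Substitution Lemma needed for subject reduction.
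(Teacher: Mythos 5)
Your proposal is correct and follows exactly the same route as the paper, which simply proves each part by induction on the respective typing derivation (the paper gives no further detail). Your added remarks on the axiom versus inductive cases and on the Barendregt-convention freshness bookkeeping for the binder cases (\ltsrule{tBra}, \ltsrule{tRec}) are the standard fleshing-out of that argument and introduce no divergence.
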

\begin{proof}
  Case \ref{weak-i} follows by induction on the derivation $\Gamma\rightvdash a:\basetype$. The proofs for Cases \ref{weak-ii-a} and \ref{weak-ii-b} follow by induction on the derivation $\Theta\cdot\Gamma\rightvdash P:S$. 
\end{proof}

Similarly to the weakening lemma, the strengthening lemma stated below consists of three statements. 
The first is for identifiers $a$ and the other two are for processes $P$. 
In all of the cases, one of the type environments $\Gamma$ or $\Theta$ used to type-check $a$ or $P$ is strengthened with the removal of a mapping. 
The variable $x$ or process variable $X$ being removed from the environments must not be equal to $a$ nor in the free variables of $P$.

\begin{lemma}[\textit{Strengthening}]~\label{lemma:strengthening}
  \begin{enumerate}
    \item\label{strength-i} If $\Gamma,x:\basetype' \rightvdash a:\basetype$ and $x \not\equal a$, then $\Gamma \rightvdash a:\basetype$.
    \item\label{strength-ii} If $\Theta\cdot\Gamma,x:\basetype \rightvdash P:S$ where $x$ is not in $\textbf{fv}(P)$, then $\Theta\cdot\Gamma\rightvdash P:S$.
    \item\label{strength-iii} If $\Theta,X:S'\cdot\Gamma \rightvdash P:S$ where $X$ is not in $\fpv{P}$, then $\Theta\cdot\Gamma\rightvdash P:S$.
  \end{enumerate}
\end{lemma}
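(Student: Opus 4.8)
The final statement to prove is the Strengthening Lemma (Lemma~\ref{lemma:strengthening}), which has three parts: one for identifier typing judgements and two for process typing judgements (strengthening $\Gamma$ by removing an unused value variable, and strengthening $\Theta$ by removing an unused process variable).

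The plan is to prove each of the three parts by rule induction on the given typing derivation, exploiting the fact that the session typing rules (\Cref{fig:session-typing-rules-noa}, and in the appendix \Cref{fig:session-typing-rules}) are entirely syntax-directed. For part~\ref{strength-i}, I would induct on the derivation of $\Gamma, x{:}\basetype' \rightvdash a : \basetype$. There are only two applicable rules: \ltsrule{tVal}, where $a = v$ is a value, so $v \in \basetype$ holds independently of the environment and we re-apply \ltsrule{tVal}; and \ltsrule{tVar}, where $a$ is a variable with $(\Gamma, x{:}\basetype')(a) = \basetype$, and since the hypothesis gives $x \neq a$, we have $\Gamma(a) = \basetype$, so \ltsrule{tVar} applies directly with the smaller environment. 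For part~\ref{strength-ii}, I would induct on the derivation of $\Theta \cdot \Gamma, x{:}\basetype \rightvdash P : S$, case-splitting on the last rule used: in \ltsrule{tNil} and \ltsrule{tPVar} the value environment plays no role, so the conclusion follows trivially; in \ltsrule{tSel}, \ltsrule{tBra}, \ltsrule{tIf} and \ltsrule{tRec} the key observation is that $x \notin \fv{P}$ ensures $x$ is not free in any premise subprocess nor in any identifier/predicate being typed, so we may apply the induction hypothesis to each premise (using part~\ref{strength-i} for identifier premises, and for the assertion-aware system, the assertion strengthening property, item~2 of the predicate assumptions) and then re-assemble via the same rule. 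In the binding cases \ltsrule{tBra} and \ltsrule{tRec} one must track how the bound variable interacts with $x$: since bound variables are assumed fresh (alpha-renaming), $x$ differs from the bound $x_j$, so $\Gamma, x{:}\basetype, x_j{:}\textsf{B}_j = \Gamma, x_j{:}\textsf{B}_j, x{:}\basetype$ up to reordering and the IH applies cleanly. Part~\ref{strength-iii} is dual: induct on $\Theta, X{:}S' \cdot \Gamma \rightvdash P : S$, and in the \ltsrule{tPVar} case use $X \notin \fpv{P}$ to conclude the looked-up variable is not $X$; in \ltsrule{tRec} the bound process variable is fresh hence distinct from $X$; all other rules pass the IH through their premises.

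The step I expect to require the most care is the bookkeeping around binders in the inductive cases of parts~\ref{strength-ii} and~\ref{strength-iii} --- specifically, making precise that ``$x \notin \fv{P}$'' (resp. ``$X \notin \fpv{P}$'') is preserved when descending into the continuation subprocesses under an input prefix or recursion, and that the environment extensions commute so the induction hypothesis is literally applicable. This is routine but must be stated carefully to avoid capture issues; I would rely on the standard convention that all bound variables are chosen distinct from the relevant free variables. The remaining obstacle, in the assertion-enriched system of \Cref{fig:session-typing-rules}, is handling the well-formedness premises $\Gamma, x_j{:}\textsf{B}_j \rightvdash A_i : \textsf{Bool}$: here I would invoke the assumed strengthening property for predicates (stated as item~2 just below \Cref{fig:session-typing-rules}, and used analogously in the weakening lemma), noting that $x \notin \fv{P}$ implies $x$ does not occur in the relevant assertions either, since assertions only mention payload variables bound in the type and free value variables of the surrounding process.
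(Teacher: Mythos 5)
Your proposal is correct and follows essentially the same route as the paper, whose entire proof of Lemma~\ref{lemma:strengthening} is a one-line appeal to induction on the respective typing derivations; your case analysis (using part~\ref{strength-i} for identifier premises, the assumed predicate-strengthening property for assertion premises, and the freshness convention for binders in \ltsrule{tBra} and \ltsrule{tRec}) is simply a more detailed elaboration of that same induction.
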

\begin{proof}
  The proofs for the three cases in Lemma \ref{lemma:strengthening} follow by induction on the derivation $\Gamma,x:\basetype' \rightvdash a:\basetype$ for Case \ref{strength-i}, $\Theta\cdot\Gamma,x:\basetype \rightvdash P:S$ for Case \ref{strength-ii} and $\Theta,X:S'\cdot\Gamma \rightvdash P:S$ for Case \ref{strength-iii}. 
\end{proof}

For a type system to satisfy subject reduction, it must ensure that a typed process remain typed after it transitions. 
As a process transitions, in particular with the rules \ltsrule{pRcv} and \ltsrule{pRec}, variables are substituted with values, and process variables with processes. 
Substitution lemmata show that typing is preserved by these substitutions, provided that the types assigned to the variables coincide with those of the values and processes. 

Similarly to the previous lemmata, we require three statements, two of which are in Lemma \ref{lemma:val-subs} and concern values being substituted with identifiers $a$ or within processes $P$. 
Whereas Lemma \ref{lemma:process-subs} is concerned with the substitution of process variables with processes. 
The strengthening and weakening lemmata are used within these proofs, where it is either required to remove unused mappings from the type environments, or to introduce fresh mappings to allow for the induction to go through. 

\begin{lemma}[\textit{Value Substitution lemma}]~\label{lemma:val-subs}
  \begin{enumerate}
    \item\label{lemma:val-subs-1} If $\Gamma\rightvdash v:\basetype$ and $\Gamma,y:\basetype\rightvdash a:\basetype'$ then $\Gamma\rightvdash a[\nicefrac{v}{y}]:\basetype'$.
    \item\label{lemma:val-subs-2} If $\Gamma\rightvdash v:\basetype$ and $\Theta\cdot\Gamma,y:\basetype\rightvdash P:S$ then $\Theta\cdot\Gamma\rightvdash P[\nicefrac{v}{y}]:S$.
  \end{enumerate}
\end{lemma}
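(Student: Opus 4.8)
The plan is to prove the two statements in order, using part~\ref{lemma:val-subs-1} inside the proof of part~\ref{lemma:val-subs-2}. Part~\ref{lemma:val-subs-1} is established by case analysis on the derivation of $\Gamma,y:\basetype\rightvdash a:\basetype'$ (which consists of a single rule), and part~\ref{lemma:val-subs-2} by rule induction on the derivation of $\Theta\cdot\Gamma,y:\basetype\rightvdash P:S$, with a case split on the last rule applied. Both are bread-and-butter inductions; the only delicate points are the binder cases and the assertion side-conditions.

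For part~\ref{lemma:val-subs-1}: the derivation ends in either \ltsrule{tVal} or \ltsrule{tVar}. In the \ltsrule{tVal} case $a$ is a closed value $w$, so $w[\nicefrac{v}{y}] = w$ and the same instance of \ltsrule{tVal} gives $\Gamma\rightvdash w:\basetype'$. In the \ltsrule{tVar} case $a$ is a variable $x$ with $(\Gamma,y:\basetype)(x) = \basetype'$; if $x = y$ then $\basetype' = \basetype$ and $x[\nicefrac{v}{y}] = v$, so the hypothesis $\Gamma\rightvdash v:\basetype$ is exactly what is needed, and if $x \neq y$ then $\Gamma(x) = \basetype'$ and $x[\nicefrac{v}{y}] = x$, so \ltsrule{tVar} applies over $\Gamma$ directly.

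For part~\ref{lemma:val-subs-2}: the axioms \ltsrule{tNil} and \ltsrule{tPVar} are immediate, since $\pnil[\nicefrac{v}{y}] = \pnil$, $X[\nicefrac{v}{y}] = X$, and the value environment is irrelevant to both rules. For \ltsrule{tSel}, with $P = \psndk{l}{a}.P'$: part~\ref{lemma:val-subs-1} turns the premise $\Gamma,y:\basetype\rightvdash a:\textsf{B}_i$ into $\Gamma\rightvdash a[\nicefrac{v}{y}]:\textsf{B}_i$, the induction hypothesis handles the continuation, and re-applying \ltsrule{tSel} closes the case since substitution distributes over the output prefix. For \ltsrule{tIf} the induction hypothesis is applied to both branches, and the predicate's well-formedness premise is preserved by the assumed substitution property for variables in assertions. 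For \ltsrule{tRec}, with $P = \pmu{X}.P'$: the construct binds a process variable, so it does not interact with the value substitution; the induction hypothesis (with $\Theta$ extended by $X$) together with \ltsrule{tRec} concludes. For \ltsrule{tBra}, with $P = \prcv{i}{I\cup J}$: I would first $\alpha$-rename so that each bound $x_i$ differs from $y$ and does not occur free in $v$, then reorder the value environment in each premise to read $\Theta\cdot(\Gamma,x_i:\textsf{B}_i),\,y:\basetype\rightvdash P_i:S_i$, apply \Cref{lemma:weakening} to obtain $\Gamma,x_i:\textsf{B}_i\rightvdash v:\basetype$, invoke the induction hypothesis on each $P_i$, and re-apply \ltsrule{tBra}, noting that under the $\alpha$-renaming the substitution commutes with the input binder.

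The non-binder cases are routine; the main obstacle is the bookkeeping in the binder cases (\ltsrule{tBra}, and to a lesser extent \ltsrule{tRec}): making the $\alpha$-conversion and capture-avoidance explicit, justifying the reordering of entries in $\Gamma$ (which is sound because $\Gamma$ is a partial map), and --- if the assertion-annotated rules of \Cref{fig:session-typing-rules} are used rather than those of \Cref{fig:session-typing-rules-noa} --- threading the assertion well-formedness side-conditions through by means of the assumed weakening, strengthening and substitution properties for assertions. None of this is conceptually hard, but it is the part that has to be written with care.
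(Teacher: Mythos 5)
Your proposal is correct and follows exactly the route the paper takes: the paper's proof of this lemma is simply ``by induction on the derivations $\Gamma,y:\basetype\rightvdash a:\basetype'$ and $\Theta\cdot\Gamma,y:\basetype\rightvdash P:S$ respectively,'' and your case analysis (splitting on $x=y$ in \ltsrule{tVar}, using part~1 inside the \ltsrule{tSel} case, and invoking weakening plus $\alpha$-renaming in the \ltsrule{tBra} binder case) fills in those details in the standard, expected way.
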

\begin{proof}
  The proofs for Cases \ref{lemma:val-subs-1} and \ref{lemma:val-subs-2} follow by induction on the derivations $\Gamma,y:\basetype\rightvdash a:\basetype'$ and $\Theta\cdot\Gamma,y:\basetype\rightvdash P:S$ respectively. 
\end{proof}

\begin{lemma}[\textit{Process Substitution lemma}]\label{lemma:process-subs}
  If $\Theta\cdot\Gamma\rightvdash \mu_X.P':S'$ and $\Theta,X:S'\cdot\Gamma\rightvdash P:S$ then $\Theta\cdot\Gamma\rightvdash P[\nicefrac{\mu_X.P'}{X}]:S$.
\end{lemma}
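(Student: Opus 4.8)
The plan is to proceed by induction on the derivation of the second hypothesis, \;$\Theta,X{:}S'\cdot\Gamma\rightvdash P:S$, doing a case analysis on the last typing rule applied (the rules of \Cref{fig:session-typing-rules-noa}). Throughout, I would adopt the usual Barendregt variable convention, so that the bound process variable $X$ does not clash with any process variable bound inside $P$, and process variables bound in $P$ do not occur free in $\mu_X.P'$; this makes the substitution $P[\nicefrac{\mu_X.P'}{X}]$ commute with the process-level binders in the expected way.

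The two interesting base-case/analysis points are the axioms and the recursion rule. For \ltsrule{tPVar}, $P$ is a process variable $Y$ with $(\Theta,X{:}S')(Y) \equal S$: if $Y \equal X$ then $S \equal S'$ and $X[\nicefrac{\mu_X.P'}{X}] \equal \mu_X.P'$, so the goal is exactly the first hypothesis $\Theta\cdot\Gamma\rightvdash\mu_X.P':S'$; if $Y \neq X$ then $\Theta(Y) \equal S$, $Y[\nicefrac{\mu_X.P'}{X}] \equal Y$, and we close by \ltsrule{tPVar} again. For \ltsrule{tRec}, $P \equal \mu_Z.Q$ with premise $\Theta,X{:}S',Z{:}S\cdot\Gamma\rightvdash Q:S$; by the variable convention $Z \neq X$, so the environment is $(\Theta,Z{:}S),X{:}S'$ up to reordering. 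To invoke the induction hypothesis on $Q$ I first need $\Theta,Z{:}S\cdot\Gamma\rightvdash\mu_X.P':S'$, which follows from the first hypothesis by the Weakening lemma (\Cref{lemma:weakening}, case~\ref{weak-ii-b}). The induction hypothesis then gives $\Theta,Z{:}S\cdot\Gamma\rightvdash Q[\nicefrac{\mu_X.P'}{X}]:S$, and \ltsrule{tRec} yields $\Theta\cdot\Gamma\rightvdash \mu_Z.(Q[\nicefrac{\mu_X.P'}{X}]):S$, which equals $\Theta\cdot\Gamma\rightvdash (\mu_Z.Q)[\nicefrac{\mu_X.P'}{X}]:S$ since $Z$ is not free in $\mu_X.P'$.

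The remaining rules are routine. For \ltsrule{tBra} the premises type the continuations under $\Gamma$ extended with a payload variable $x_j{:}\textsf{B}_j$, so before applying the induction hypothesis I use Weakening on the value environment (\Cref{lemma:weakening}, case~\ref{weak-ii-a}) to obtain $\Theta,X{:}S'\cdot\Gamma,x_j{:}\textsf{B}_j\rightvdash\mu_X.P':S'$; then apply the induction hypothesis to each branch and reassemble with \ltsrule{tBra}, using that substitution is homomorphic and (by the variable convention) does not touch the $x_j$. The case \ltsrule{tSel} is analogous (substitution on the value-identifier $a$ is the identity since $a$ is a value or value variable, not a process variable), \ltsrule{tIf} applies the induction hypothesis to $P$, $Q$, and the conditional's guard, and \ltsrule{tNil} is immediate since $\pnil[\nicefrac{\mu_X.P'}{X}] \equal \pnil$.

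\textbf{Main obstacle.} The only real subtlety is the \ltsrule{tRec} case: one must be careful that the freshness assumptions on bound process variables are genuinely justified (via $\alpha$-renaming of $P$ if necessary) so that (i) $Z \neq X$, (ii) the substitution distributes under $\mu_Z{.}$, and (iii) the Weakening lemma is applicable to lift the typing of $\mu_X.P'$ into the extended environment $\Theta,Z{:}S$. Everything else is a mechanical unfolding of the typing rules together with the Weakening lemma.
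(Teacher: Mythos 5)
Your proposal is correct and takes essentially the same route as the paper: the paper's proof is simply ``by induction on the derivation of $\Theta,X{:}S'\cdot\Gamma\rightvdash P:S$'', which is precisely the case analysis you carry out, with the \ltsrule{tPVar} and \ltsrule{tRec} cases being the only non-routine ones and handled as you describe (via the first hypothesis and Weakening, respectively). The only nitpick is in the \ltsrule{tIf} case: the guard $A$ contains no process variables, so the substitution leaves it untouched rather than requiring an appeal to the induction hypothesis.
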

\begin{proof}
  The proof follows by induction on the derivation $\Theta,X:S'\cdot\Gamma\rightvdash P:S$.
\end{proof}

Using Lemma \ref{lemma:val-subs} and Lemma \ref{lemma:process-subs} we can prove subject reduction. 
However, for the soundness proof in Section \ref{s:soundness-proof}, we require another property of the type system. 
This is proved in Lemma \ref{lemma:type-substitution} which states that a process that type-checks with a session type $S$ also type-checks with the same $S$ in which variables are substituted for values, \ie $S[\nicefrac{v}{x}]$. 
Consequently, the mappings in the environment $\Theta$ must also be updated accordingly. 
This is achieved through $\Theta[\nicefrac{v}{x}]$ in which substitution is carried out on every session type that the process variables map to, defined as: 
\begin{math}
  \Theta[\nicefrac{v}{x}] \equal \big\{ X:S[\nicefrac{v}{x}]\ \big\vert\ \Theta(X)\equal S\big\}.
\end{math}

\begin{lemma}[\textit{Type Variable Instantiation}]\label{lemma:type-substitution}For all processes $P \in \textsc{Proc}$ and session types $S$, $\Theta\cdot\Gamma \rightvdash P:S$ implies $\Theta[\nicefrac{v}{x}]\cdot\Gamma \rightvdash P:S[\nicefrac{v}{x}]$.
\end{lemma}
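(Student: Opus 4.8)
The plan is to prove the statement by induction on the derivation of $\Theta\cdot\Gamma \rightvdash P:S$, case-splitting on the last typing rule applied (\Cref{fig:session-typing-rules}). The observation that organises the argument is that the substitution $[\nicefrac{v}{x}]$ acts only on the process-variable environment $\Theta$ --- pointwise, on the session types in its codomain --- and on the carried type $S$, while the value-variable environment $\Gamma$ and the term $P$ are left untouched. Two routine facts are used throughout: the pointwise substitution commutes with environment extension, $(\Theta, X:S')[\nicefrac{v}{x}] = \Theta[\nicefrac{v}{x}], X:S'[\nicefrac{v}{x}]$; and session-type substitution is homomorphic on the non-binding constructors while being capture-avoiding on the payload binders $x_i$ of selection/branching types.

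First I would dispatch the axioms and the structural rules. For \ltsrule{tNil}, $\textsf{end}[\nicefrac{v}{x}] = \textsf{end}$, so \ltsrule{tNil} re-applies. For \ltsrule{tPVar}, from $\Theta(X)=S$ we get $\Theta[\nicefrac{v}{x}](X) = S[\nicefrac{v}{x}]$ by the definition of $\Theta[\nicefrac{v}{x}]$, so \ltsrule{tPVar} re-applies. For \ltsrule{tRec}, the induction hypothesis on the premise $\Theta, X:S\cdot\Gamma\rightvdash P:S$ yields $\Theta[\nicefrac{v}{x}], X:S[\nicefrac{v}{x}] \cdot \Gamma\rightvdash P:S[\nicefrac{v}{x}]$, and re-applying \ltsrule{tRec} closes the case. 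For \ltsrule{tIf}, the two process premises are handled by the induction hypothesis, while the premise $\Gamma\rightvdash A:\textsf{Bool}$ is unaffected (neither $\Gamma$ nor $A$ is substituted).

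The substantive cases are \ltsrule{tSel} and \ltsrule{tBra}, the rules that introduce choice types carrying the payload binders $x_i$ together with the assertion side-conditions $\Gamma, x_i:\textsf{B}_i \rightvdash A_i:\textsf{Bool}$. Taking \ltsrule{tSel} (the \ltsrule{tBra} case is symmetric): up to $\alpha$-renaming assume $x\neq x_i$ and $x_i\notin\fv{v}$, so the substituted type is $\oplus\big\{!\texttt{l}_i(x_i:\textsf{B}_i)[A_i[\nicefrac{v}{x}]].S_i[\nicefrac{v}{x}]\big\}_{i\in I}$. The label premise and $\Gamma\rightvdash a:\textsf{B}_i$ survive verbatim, since base types are not substituted; the continuation premise $\Theta\cdot\Gamma\rightvdash P:S_i$ becomes $\Theta[\nicefrac{v}{x}]\cdot\Gamma\rightvdash P:S_i[\nicefrac{v}{x}]$ by the induction hypothesis; and the assertion premise must be re-established as $\Gamma, x_i:\textsf{B}_i \rightvdash A_i[\nicefrac{v}{x}]:\textsf{Bool}$. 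Re-assembling these with \ltsrule{tSel} gives $\Theta[\nicefrac{v}{x}]\cdot\Gamma\rightvdash \psndk{l}{a}.P:S[\nicefrac{v}{x}]$.

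I expect re-establishing the assertion premise to be the only friction point. In the situation where this lemma is actually invoked, $x$ is a payload variable bound by an outer constructor of $S$ and does not occur in $\dom{\Gamma}$; then well-typedness of $A_i$ under $\Gamma, x_i:\textsf{B}_i$ forces $x\notin\fv{A_i}$, so $A_i[\nicefrac{v}{x}] = A_i$ and the premise is literally unchanged. In full generality one appeals to the assumed substitution/variable-instantiation properties for assertions listed in \Cref{s:session-type-system-assertions} (plus weakening, to cover the extension by $x_i:\textsf{B}_i$). The only other care needed is the standard capture-avoidance bookkeeping for the binders $x_i$ when $v$ has free variables, which is discharged by the $\alpha$-conversion convention already assumed for session types.
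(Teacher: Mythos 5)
Your proof is correct and is exactly the standard induction on the typing derivation that the paper implicitly relies on (the paper in fact states Lemma~\ref{lemma:type-substitution} without any proof). You also correctly isolate the only non-trivial point --- re-establishing the assertion premise $\Gamma, x_i{:}\textsf{B}_i \rightvdash A_i[\nicefrac{v}{x}]:\textsf{Bool}$ in the \ltsrule{tSel}/\ltsrule{tBra} cases --- and discharge it via the assumed weakening/instantiation properties of predicate typing from \Cref{s:session-type-system-assertions}.
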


\subsection{Transitions and Typing}\label{s:transitions-and-typing}

At this stage, we can proceed with the proof for subject reduction. 
In its simplest form, subject reduction is the property which ensures that a well-typed process remains well-typed when it transitions to a new state. 
This is a very important property of any type-checking system which if not satisfied, would indicate that the semantics are not in harmony with the static analysis. 

We first require some additional definitions to reason about the behaviour of processes with respect to a session type, starting with Definition \ref{def:accepted-actions}. 
This defines the criteria for an action or list of actions (\ie a \emph{trace}) committed either by a process, to be \emph{accepted} by a type. 
In the subject reduction property, this definition is used to restrict the actions of a process to only those that conform with the session type it is expected to follow.

\begin{definition}[\textit{Accepted Actions}]\label{def:accepted-actions}
  An action $\mu$ performed by a process or a monitor is accepted by a session type $S$, denoted as $\textbf{accepted}(\mu,S)$, in the cases:
  \begin{align*}
    \accepted{\tau}{S} &\quad \text{always}\\
    \left.
    \begin{array}{rl}
      \accepted{\triangleleft \texttt{l}(v)}{\stsel{i}{I}}\\
      \accepted{\envsndOP \texttt{l}(v)}{\stsel{i}{I}}\\
      \accepted{\triangleright \texttt{l}(v)}{\stbra{i}{I}}\\
      \accepted{\envrcvOP \texttt{l}(v)}{\stbra{i}{I}}
    \end{array} \right\} &\quad \text{if } \texttt{l}\equal\texttt{l}_j \text{ and } \emptyset \rightvdash v:\textsf{B}_j \text{ s.t. } j\in I
  \end{align*}
\end{definition}

As it may be noted, Definition \ref{def:accepted-actions} also considers monitor actions, this is because this definition will be used later on in the soundness proof.

The second definition that is required \emph{matches} actions by processes with the transitions of session types. 
In particular, it ensures that the label of the message in the input and output transitions of processes and session types are the same. 
This allows us to ensure that the process and the type that it follows, transition with the same label.

\begin{definition}[\textit{Matching Processes and Session Type Transitions}]
  An action $\mu$ performed by a process $P$ matches a transition $\delta$ of a session type $S$, denoted as $\match{\mu}{\delta}$, in the cases:
  \vspace{-3mm}
  \begin{align*}
    \match{\tau}{\tau}& \quad\text{always}\\
    \match{\triangleleft \texttt{l}(v)}{\triangleleft \texttt{l}}& \quad\text{for any }v\\
    \match{\triangleright \texttt{l}(v)}{\triangleright \texttt{l}}& \quad\text{for any }v
  \end{align*} %
\end{definition}

The subject reduction property is stated in Proposition \ref{prop:sub-red}. 
Intuitively, this proposition states that when a \emph{closed} process $P$ type-checks with a session type $S$ and transitions to $P'$ with an action \emph{accepted} by $S$, either $S$ also transitions with a \emph{matching} action that types $P'$, or $P$ performed a silent action and it still type-checks with $S$. 

\begin{proposition}[\textit{Subject Reduction}]\label{prop:sub-red}
  For any process $P$ and session type $S$, $\emptyset\cdot\emptyset\rightvdash P:S$ and $P\xrightarrow{\mu}P'$ where $\textbf{accepted}(\mu, S)$, implies:
  \begin{enumerate}
    \item\label{sub-red:implication-1} $S\xrightarrow{\delta}S'$ s.t.~$\textbf{match}(\mu,\delta)$ and $\emptyset\cdot\emptyset\rightvdash P':S'$; or
    \item\label{sub-red:implication-2} $\mu \equal \tau$ and $\emptyset\cdot\emptyset\rightvdash P':S$. \hfill\envqed
  \end{enumerate} 
\end{proposition}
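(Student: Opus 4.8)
The plan is to prove \Cref{prop:sub-red} by a \emph{case analysis on the rule used to derive} $P \xrightarrow{\mu} P'$. No genuine induction is needed: every process transition rule in \Cref{fig:process-calc-syntax-semantics} fires in a single step, its only hypotheses being the evaluation side-conditions $A\Downarrow\true$ and $A\Downarrow\false$ of \ltsrule{pTru} and \ltsrule{pFls}. In each case I will \emph{invert} the typing derivation $\emptyset\cdot\emptyset\rightvdash P:S$ to recover how the residual $P'$ is typed, and then exhibit either the matching type transition (\Cref{sub-red:implication-1}) or the preservation of $S$ (\Cref{sub-red:implication-2}). The two substitution lemmas do the real work: \Cref{lemma:process-subs} handles the recursion-unfolding case, and \Cref{lemma:val-subs} handles the message-reception case, with \Cref{lemma:type-substitution} additionally invoked when assertions cause payload variables to occur in continuation types.

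Concretely: for \ltsrule{pTru} and \ltsrule{pFls} we have $\mu = \tau$, and inverting \ltsrule{tIf} immediately gives $\emptyset\cdot\emptyset\rightvdash P':S$, i.e.\ \Cref{sub-red:implication-2}. For \ltsrule{pRec}, $P = \pmu{X}.Q$, $\mu = \tau$ and $P' = Q[\nicefrac{\pmu{X}.Q}{X}]$; inverting \ltsrule{tRec} yields $X:S\cdot\emptyset \rightvdash Q:S$, and feeding this together with $\emptyset\cdot\emptyset\rightvdash\pmu{X}.Q:S$ into \Cref{lemma:process-subs} gives $\emptyset\cdot\emptyset\rightvdash P':S$, again \Cref{sub-red:implication-2}. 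For \ltsrule{pSnd}, $P = \triangleleft\texttt{l}(v).Q$ (the payload is a value, since $P$ is closed), $\mu = \triangleleft\texttt{l}(v)$ and $P' = Q$; from $\accepted{\mu}{S}$ the type $S$ is, after equi-recursive unfolding, a selection $\stselNA{i}{I}$ with $\texttt{l}=\texttt{l}_j$ and $\emptyset\rightvdash v:\textsf{B}_j$ for some $j\in I$, and inverting \ltsrule{tSel} gives $\emptyset\cdot\emptyset\rightvdash Q:S_j$ for exactly that $j$ (uniqueness because the labels $\texttt{l}_i$ are pairwise distinct); taking $\delta = \selactk{j}$ and $S' = S_j$, rule \ltsrule{sSel} provides $S\xrightarrow{\delta}S'$ with $\match{\mu}{\delta}$ and $\emptyset\cdot\emptyset\rightvdash P':S'$, which is \Cref{sub-red:implication-1}. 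The case \ltsrule{pRcv} is the mirror image: $P = \prcv{i}{J}$, $\mu = \triangleright\texttt{l}_j(v)$ where, by $\accepted{\mu}{S}$, $\texttt{l}_j$ is one of the type's branch labels and $\emptyset\rightvdash v:\textsf{B}_j$; inverting \ltsrule{tBra} gives $\emptyset\cdot\emptyset, x_j:\textsf{B}_j \rightvdash P_j:S_j$, so \Cref{lemma:val-subs} delivers $\emptyset\cdot\emptyset\rightvdash P_j[\nicefrac{v}{x_j}]:S_j$, and \ltsrule{sBra} closes the case with $\delta=\braactk{j}$, $S'=S_j$.

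I expect the main obstacle to be the \emph{inversion of the typing judgement under equi-recursion}. Since $\strec{X}.S$ is identified with its unfolding, a type that is syntactically a recursion must first be unfolded before the shape-specific rules \ltsrule{tSel}, \ltsrule{tBra}, \ltsrule{tIf} become applicable; I would therefore first record, as part of the equi-recursive discipline, that the typing relation is closed under the equi-recursive equivalence, and phrase each inversion step as a generation lemma modulo unfolding. A second, lesser complication is reconciling the branch index picked out by the $\accepted{\cdot}{\cdot}$ predicate with the one chosen inside the typing derivation, and accommodating the surplus process-only branches that \ltsrule{tBra} allows; both are dispatched by the pairwise-distinctness of the labels in a choice, which makes the matching branch unique. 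Once these are in place, all five cases reduce to the short calculations sketched above.
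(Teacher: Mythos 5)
Your proposal is correct and follows essentially the same route as the paper's proof: a case analysis (the paper phrases it as rule induction, which degenerates to the same thing here) on the last rule of $P\xrightarrow{\mu}P'$, with inversion of the typing judgement and the same two substitution lemmas doing the work in the \ltsrule{pRec} and \ltsrule{pRcv} cases. Your extra attention to inversion modulo equi-recursive unfolding is a point the paper glosses over, but it does not change the structure of the argument.
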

\bigskip

We also assume that the action $\mu$ performed by the process is accepted by the session type $S$ process $P$ type-checks with and unless it is a silent transition, $\mu$ is required to match the transition of $S$ (\ie $\delta$). 
The following example highlights why we require such assumptions and conditions for our subject reduction property.  

\begin{example}\label{eg:subject-reduction}
  \begin{subequations}
    Consider the session type $\sab$ and process $\pabc$ defined as:
    \begin{align}
      \sab&\ \equal\ \&\big\{?\labk{A}().S_a,\ ?\labk{B}().S_b\big\}\\
      \pabc&\ \equal\ \triangleright\big\{\labk{A}().P_a,\ \labk{B}().P_b,\ \labk{C}().P_c \big\}
    \end{align}

    \noindent If we assume that $P_a$ and $P_b$ type-check with the types $S_a$ and $S_b$ respectively, it can be shown that $\pabc$ type-checks with the session type $\sab$ by applying the rule \ltsrule{tBra} (from Section \ref{s:session-type-system}).
    This holds since every possible branch stated in $\sab$ is present as a choice in $\pabc$.

    If $\pabc$ performs the transition $\pabc \xrightarrow{\mu}P_a$ where $\mu \equal \triangleright\labk{A}()$ with the rule \ltsrule{pRcv}, it no longer type-checks with the type $\sab$. 
    Rather, it now type-checks with the type $S_a$. 
    As stated in Case \ref{sub-red:implication-1} of Proposition \ref{prop:sub-red}, $\sab$ can perform the transition $\sab \xrightarrow{\delta} S_a$ where $\delta\equal \triangleright\labk{A}$. 
    If $\sab$ performs $\sab \xrightarrow{\triangleright\labk{B}} S_b$ rather than $\sab \xrightarrow{\delta} S_a$, the process $P_a$ does not type-check with the type $S_b$ and consequently, $\pabc$ would not remain well-typed as it transitions.
    Therefore, contradicting the subject reduction property in which a well-typed process must remain well-typed after it transitions. 
    This is the reason why we require the actions to match and in this case, $\match{\mu}{\delta}$ holds. 

    Moreover, if $\pabc$ performs the transition $\pabc \xrightarrow{\triangleright\labk{C}()} P_c$ with the rule \ltsrule{pRcv}, it will end up in a state which $\sab$ cannot transition to and hence $\pabc$ is no longer well-typed. 
    Once again, this goes against the subject reduction property, and therefore we require that the action performed by the process is accepted by the type.
    In this case, by Definition \ref{def:accepted-actions}, $\accepted{\triangleright\labk{C}()}{\sab}$ does not hold. 
    To rule out such actions, in Proposition \ref{prop:sub-red}, we assume that processes only perform accepted actions.
  
    With Case \ref{sub-red:implication-2} we handle those cases in which processes perform silent actions ($\tau$ transitions) and remain well-typed according to the session type they were type-checked with before they transitioned.
  \end{subequations}
\end{example}

The following is the proof for Proposition \ref{prop:sub-red}. 

\begin{proof}
  The proof for Proposition \ref{prop:sub-red} follows by rule induction on the derivation $P\xrightarrow{\mu}P'$: 
  \begin{description}
    \item[\ltsrule{pSnd}] \begin{subequations} From the rule we know $P\equal\triangleleft \lab_j(v_j).P_j$, $P'\equal P_j$ and $\mu \equal \triangleleft \lab_j(v_j)$. We know $\emptyset\cdot\emptyset\rightvdash\triangleleft \lab_j(v_j).P_j:S$ could have been derived using only the rule \ltsrule{tSel} which means that $S\equal\stsel{i}{I}$ and: 
    \begin{gather}
      \label{psnd-1}\exists i \in I \cdot \lab_j \equal \lab_i,\quad\emptyset\rightvdash v_j:\basetype_i,\quad\Theta\cdot\Gamma \rightvdash P_j:S_i,\quad \emptyset\cdot\emptyset \rightvdash A_i:\textsf{Bool}
    \end{gather}
    Since by \ref{psnd-1} we know that $j \in I$, by the rule \ltsrule{sSel}, $S \xrightarrow{\triangleleft \lab_j} S_i$ and by definition, $\textbf{match}(\mu, \triangleleft \lab_j)$ holds. Moreover by \ref{psnd-1} we know that $\emptyset\cdot\emptyset\rightvdash P_j:S_i$ as required in the implication \ref{sub-red:implication-1}.
    \end{subequations}

    \item[\ltsrule{pRcv}] \begin{subequations} From the rule we know $P \equal \prcv{j}{J}$, $\mu\equal \triangleright\,l_k(v_k)$ for a $k \in J$ and $P'\equal P_k[\nicefrac{v_k}{x_k}]$. We know $\emptyset\cdot\emptyset\rightvdash P:S$ could have been derived using only the rule \ltsrule{tBra} which means that $S \equal \stbra{i}{I}$ and:
      \begin{gather}
        \label{prcv-1}\forall i \in I, \exists j \in J \cdot \lab_j \equal \lab_i,\quad \basetype_j\equal \basetype_i,\quad \emptyset\cdot x_j:\basetype_j \rightvdash P_j:S_i,\quad x_j:\basetype_j \rightvdash A_i:\textsf{Bool}
      \end{gather}
      Since from the assumption we know that $\textbf{accepted}(\triangleright\,l_k(v_k), S)$ then by definition:\begin{gather}
        \label{sub-red-prcv-1}k \in I\\
        \label{sub-red-prcv-2}\emptyset\rightvdash v_k:B_k
      \end{gather}
      From \ref{sub-red-prcv-1} we can apply the rule \ltsrule{sBra} to obtain $S \xrightarrow{\triangleright\,l_k}S_k$ and by definition we know that $\textbf{match}(\mu,\triangleright\,l_k)$ holds. From \ref{prcv-1} we know $\emptyset\cdot x_k:B_k\rightvdash P_k:S_k$ and since we know \ref{sub-red-prcv-2} we can apply the value substitution lemma (Lemma \ref{lemma:val-subs}) to obtain $\emptyset\cdot\emptyset\rightvdash P_k[\nicefrac{v_k}{x_k}]:S_k$ as required in the implication \ref{sub-red:implication-1}.
    \end{subequations}

    \item[\ltsrule{pRec}] \begin{subequations} From the rule we know $P\equal \mu_X.Q$, $\mu\equal\tau$ and $P'\equal Q[\nicefrac{\mu_X.Q}{X}]$. We know $\emptyset\cdot\emptyset\rightvdash \mu_X.Q:S$ could have been derived using only the rule \ltsrule{tRec} which means that the process $P$ type-checks with any session type $S$ such that $X:S\cdot\emptyset\rightvdash Q:S$. Since we know $\emptyset\cdot\emptyset\rightvdash \mu_X.Q:S$ and $X:S\cdot\emptyset\rightvdash Q:S$ by applying the process substitution lemma (Lemma \ref{lemma:process-subs}) we obtain $\emptyset\cdot\emptyset\rightvdash Q[\nicefrac{\mu_X.Q}{X}]:S$ as required in the implication \ref{sub-red:implication-2} since $\mu\equal\tau$.
    \end{subequations}

    \item[\ltsrule{pTru}] From the rule we know $P \equal \textsf{if } A \textsf{ then }P_1\textsf{ else }Q_1$, $P'\equal P_1$ and $\mu\equal\tau$. We know $\emptyset\cdot\emptyset\rightvdash P:S$ could have been derived using only the rule \ltsrule{tIf} which means that it type-checks with any session type $S$ and $\emptyset\rightvdash A:\textsf{Bool}$, $\emptyset\cdot\emptyset\rightvdash P_1:S$ and $\emptyset\cdot\emptyset\rightvdash Q_1:S$. Implication \ref{sub-red:implication-2} holds since $\mu\equal\tau$ and $\emptyset\cdot\emptyset\rightvdash P_1:S$.
      
    \item[\ltsrule{pFls}] This case is analogous to the previous case for \ltsrule{pTru}.
  \end{description}
\end{proof}

\subsection{Synthesis Soundness}\label{s:soundness-proof}

In the following we distinguish two variants of the violation $\textsf{no}_P$:
\begin{itemize}
  \item $\nops$ when the violation is caused by a label violation (\Cref{fig:monitor-calculus});
  \item $\nopd$ when the violation is flagged by the synthesised monitor (\Cref{def:synth-function}).
\end{itemize}

We also distinguish two variants of the violation $\textsf{no}_E$:
\begin{itemize}
  \item $\noes$ when the violation is caused by a label violation (\Cref{fig:monitor-calculus});
  \item $\noed$ when the violation is flagged by the synthesised monitor (\Cref{def:synth-function}).
\end{itemize}

Correspondingly, we use the following version of the monitor synthesis function (\Cref{def:synth-function}), that supports the session type assertions in \Cref{fig:session-types}. This makes our soundness proof more general.

\begin{definition}\label{def:app:synth-function}
  The \textit{synthesis function} $\lsem \minus \rsem:S\mapsto M$ takes as input a session type $S$ and returns a monitor $M$. It is defined inductively, on the structure of the session type $S$ as follows: 
  
  \smallskip\centerline{
    $\lsem\stselAi{i}{I}\rsem\,\triangleq\,\triangleright\big\{\texttt{l}_i(x_i:\textsf{B}_i).\pif{(\isValueB{i}{x_i} \mathbin{\textsf{\&\&}}A'_i)}{\envsndOP \texttt{l}_i(x_i).\lsem S_i\rsem}{\nopd}\big\}_{i\in I}$
  }\medskip\centerline{
    $\lsem\stbraAi{i}{I}\rsem\,\triangleq\,\envrcvOP\big\{\texttt{l}_i(x_i:\textsf{B}_i).\pif{(\isValueB{i}{x_i} \mathbin{\textsf{\&\&}}A'_i)}{\triangleleft \texttt{l}_i(x_i).\lsem S_i\rsem}{\noed}\big\}_{i\in I}$
  }\medskip\centerline{
    $\lsem\textsf{rec } X.S\rsem\triangleq \mu_X.\lsem S\rsem \qquad\quad \lsem X\rsem \triangleq X \qquad\quad \lsem\textsf{end}\rsem\triangleq \textbf{0}$}
\end{definition}

Before we prove soundness, we require another lemma, stated in Lemma \ref{lemma:synthesis-subs}.
This is mostly used in conjunction with Lemma \ref{lemma:type-substitution} in the soundness proof. 
Lemma \ref{lemma:synthesis-subs} states that the synthesis of a session type with substitution is identical to the synthesis of the session type with the same substitution applied after the synthesis. 

\begin{lemma}[\textit{Synthesis Substitution Commutativity}]\label{lemma:synthesis-subs}~
  \begin{enumerate}
    \item\label{synth-subs-i} (\textit{Variable}) For all session types $S$, $\lsem S[\nicefrac{v}{x}]\rsem$ is equivalent to $\lsem S \rsem[\nicefrac{v}{x}]$.
    \item\label{synth-subs-ii} (\textit{Process Variable}) For all session types $S$, $\lsem S[\nicefrac{\textsf{rec }X.S'}{X}]\rsem$ is equivalent to $\lsem S \rsem[\nicefrac{\mu_X.\lsem S'\rsem}{X}]$.
  \end{enumerate}
\end{lemma}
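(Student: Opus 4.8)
The plan is to prove statements~\ref{synth-subs-i} and~\ref{synth-subs-ii} by structural induction on the session type $S$, following the cases of \Cref{def:app:synth-function}. The guiding observation is that both the synthesis function $\lsem-\rsem$ and each substitution operation are defined \emph{homomorphically} on the structure of session types (resp.\ monitors), so the two commute case by case. Throughout I assume the standard convention on bound variables, so the substituted variable $x$ (resp.\ $X$) is chosen distinct from every bound payload variable $x_i$ and every bound recursion variable occurring in $S$; this lets substitution push freely under binders without capture. Write $\sigma$ for the relevant substitution ($[\nicefrac{v}{x}]$ in part~\ref{synth-subs-i}, and $[\nicefrac{\strec{X}.S'}{X}]$ on types paired with $[\nicefrac{\monmu{X}.\lsem S'\rsem}{X}]$ on monitors in part~\ref{synth-subs-ii}).

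First I would dispatch the base cases. For part~\ref{synth-subs-i}: if $S = \textsf{end}$ both sides reduce to $\pnil$; if $S = X$ then value substitution is inert on process variables, so $\lsem X[\nicefrac{v}{x}]\rsem = \lsem X\rsem = X = X[\nicefrac{v}{x}] = \lsem X\rsem[\nicefrac{v}{x}]$. For part~\ref{synth-subs-ii} the key base case is $S = X$: here $\lsem X[\nicefrac{\strec{X}.S'}{X}]\rsem = \lsem \strec{X}.S'\rsem = \monmu{X}.\lsem S'\rsem$, which is exactly $X[\nicefrac{\monmu{X}.\lsem S'\rsem}{X}] = \lsem X\rsem[\nicefrac{\monmu{X}.\lsem S'\rsem}{X}]$; for $S = Y$ with $Y \neq X$ both sides are $Y$, and $S = \textsf{end}$ again gives $\pnil$. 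This is where the one-to-one correspondence between session-type recursion variables and monitor process variables (built into \Cref{def:app:synth-function}) does its work.

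Next I would treat the recursion case $S = \strec{Y}.S_0$ (with $Y$ fresh for the substitution) and the two choice cases, which are the only substantive ones. For $\strec{Y}.S_0$: $\lsem(\strec{Y}.S_0)\sigma\rsem = \monmu{Y}.\lsem S_0\sigma\rsem = \monmu{Y}.(\lsem S_0\rsem\,\sigma) = (\monmu{Y}.\lsem S_0\rsem)\sigma = \lsem\strec{Y}.S_0\rsem\,\sigma$, using the induction hypothesis in the middle step. For $S = \stselAi{i}{I}$, unfolding \Cref{def:app:synth-function} yields a monitor of the form $\triangleright\{\texttt{l}_i(x_i:\textsf{B}_i).\pif{(\isValueB{i}{x_i}\mathbin{\textsf{\&\&}}A'_i)}{\envsndOP\texttt{l}_i(x_i).\lsem S_i\rsem}{\nopd}\}_{i\in I}$; applying $\sigma$ distributes over the input and output prefixes and over the conditional, commuting with the conjunction in the guard so that it acts on the assertion $A'_i$ identically on both sides while leaving the bound payload variables $x_i$ (and $\isValueB{i}{x_i}$) untouched, and since $\nopd$ contains no free variables the only remaining work is on each continuation, where the induction hypothesis gives $\lsem S_i\rsem\,\sigma = \lsem S_i\sigma\rsem$. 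The branching case $S = \stbraAi{i}{I}$ is symmetric, with $\triangleleft$, $\envrcvOP$, and $\noed$ in place of $\triangleright$, $\envsndOP$, $\nopd$.

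I expect the main (and essentially only) obstacle to be the bureaucracy of binders: ensuring substitution does not capture the payload variables $x_i$ bound in the synthesised prefixes nor the recursion variable in the $\strec{Y}.S_0$ case, and observing that value substitution and process substitution on synthesised monitors do not interfere (the free process variables of a synthesised monitor are exactly those of its session-type continuations, since none of $\nopd$, $\noed$, $\isValueB{i}{x_i}$, or $A'_i$ contains a process variable). Both points are discharged by the standard freshness convention together with the restricted shape of synthesised terms, so no genuinely hard step remains.
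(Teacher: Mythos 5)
Your proof is correct and takes essentially the same approach as the paper, which simply states that both parts follow by induction on the structure of $S$; your case analysis fills in the details of exactly that induction, with the homomorphic nature of $\lsem-\rsem$ and of substitution doing the work in each case. The attention to binder freshness and to the inertness of value substitution on process variables (and vice versa) is the right bookkeeping and matches what the paper leaves implicit.
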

\begin{proof}
  The proof for Cases \ref{synth-subs-i} and \ref{synth-subs-ii} follow by induction on the structure of $S$. 
\end{proof}

As the composite system transitions, the monitor M does not always transition in harmony with the session type S it was synthesised from. 
This is due to the way monitors are synthesised by the function in \Cref{def:app:synth-function}. 
Consequently, we require additional statements that weaken the contrapositive defined in \Cref{sec:monitor-soundness-maintext} for these particular cases when the monitor is out of synchrony with the session type. 
To better understand the necessity of these statements consider the examples
below which highlight these instances. 

\begin{example}\label{eg:comp-system-sound-cases-ii-iii}
  \begin{subequations}
    Consider the session type $\sef$, process $\pef$ and monitor $\mef$ defined as:
    \begin{align}
      \sef&\ \equal\ ?\labk{E}().S_f\qquad\text{where }S_f \equal\ !\labk{F}().\textsf{end}\\
      \pef&\ \equal\ \triangleright\labk{E}().\triangleleft\labk{F}().\bm{0}\\
      \mef&\ \equal\ \synth{\sef}\ \equal\ \envrcvOP \labk{E}().\pif{\true}{\triangleleft \labk{E}().\synth{S_f}}{\noed}
    \end{align}

    \noindent When the process $\pef$ is instrumented together with the monitor $\mef$ to form the composite system $\instr{\pef}{\synth{\sef}}$, we can observe the following behaviour:
    \begin{align}
      \instr{\pef}{\mef} & &&\\
      \label{transitions-ii-iii-1}&\xrightarrow{\envrcvOP \labk{E}()} \instr{\pef}{\pif{\true}{\triangleleft \labk{E}().\synth{S_f}}{\noed}} \qquad && \text{using \ltsrule{iIn}}\\
      \label{transitions-ii-iii-2}&\quad \xrightarrow{\tau} \instr{\pef}{\triangleleft \labk{E}().\synth{S_f}} \qquad && \text{using \ltsrule{iMon}}\\
      \label{pab-mab-cont}&\quad\quad \xrightarrow{\tau} \instr{\triangleleft\labk{F}().\bm{0}}{\synth{S_f}} \qquad && \text{using \ltsrule{iRcv}}
    \end{align}
    If we consider the sequence of transitions taken by the monitor as the system transitions in \ref{transitions-ii-iii-1}, \ref{transitions-ii-iii-2} and \ref{pab-mab-cont} respectively, it would be as shown below: 
    \begin{align}
      \label{mon-out-of-sync-0}\mef\equal\synth{\sef}& &&\\
      \label{mon-out-of-sync-1}&\xrightarrow{\envrcvOP \labk{E}()} \pif{\true}{\triangleleft \labk{E}().\synth{S_f}}{\noed}\qquad &&\text{using \ltsrule{mIn}}\\
      \label{mon-out-of-sync-2}&\quad\xrightarrow{\tau} \triangleleft \labk{E}().\synth{S_f}\qquad &&\text{using \ltsrule{mTru}}\\
      \label{mon-out-of-sync-3}&\quad\quad\xrightarrow{\triangleleft \labk{E}()} \synth{S_f}\qquad && \text{using \ltsrule{mSnd}} 
    \end{align}
    In \ref{mon-out-of-sync-0}, the monitor $\mef$ starts from the synthesis of the session type $\sef$ (\ie $\synth{\sef}$). 
    However, once it performs the first transition in \ref{mon-out-of-sync-1} (using the rule \ltsrule{mIn}), it comes in a form which does not directly correspond to the synthesis of any session type. 
    Consequently, the composite system is no longer of the form $\instr{P}{\synth{S}}$, rather it is in the state shown in \ref{transitions-ii-iii-1}. 
    Similarly, once it performs the second transition in \ref{mon-out-of-sync-2} (using the rule \ltsrule{mTru}), it is not in such a form either. 
    It only comes in a synthesised form after it performs the third transition in \ref{mon-out-of-sync-3} (using the rule \ltsrule{mSnd}). \exqed
  \end{subequations}
\end{example}

As shown in Example \ref{eg:comp-system-sound-cases-ii-iii}, when the composite system transitions, the monitor is not always in a form which directly corresponds to a synthesis of a session type. 
Consequently, once we get to prove the inductive step of the contrapositive, where the system $\instr{P}{\synth{S}}$ performs a transition, we will end up with a system of the form in \ref{transitions-ii-iii-1}, on which we cannot apply the inductive hypothesis. 
Therefore, we can strengthen the contrapositive for this intermediary state by adding another statement to it which would be similar to: $\lsem S \rsem \xrightarrow{\envrcvOP \lab(v)} M$ implies $M' \not\equal \textsf{no}^S_P$. 
However, this is still not enough since once we get to prove this statement, we will end up in a similar scenario (such as in \ref{pab-mab-cont}). 
Consequently, to cater for these two instances we strengthen the contrapositive with the two cases as shown below. 

\begin{proposition}\label{stmt:contrapstv-1}
  For all processes $P \in \textsc{Proc}$ and session types $S$, $\emptyset\cdot\emptyset\rightvdash P:S$ and $\instr{P}{M} \xRightarrow{e} \instr{P'}{M'}$ where:
  \begin{enumerate}
    \item $M\equal\synth{S}$ implies $M'\not\equal \textsf{no}^S_P$; and
    \item $\lsem S \rsem \xrightarrow{\envrcvOP \lab(v)} M$ s.t.~$\textbf{accepted}(\envrcvOP \lab(v), S)$ implies $M' \not\equal \textsf{no}^S_P$; and
    \item $\lsem S \rsem \xrightarrow{\envrcvOP \lab(v)} M_1 \xrightarrow{\tau} M$ s.t.~$\textbf{accepted}(\envrcvOP \lab(v), S)$ implies $M' \not\equal \textsf{no}^S_P$. 
  \end{enumerate}
\end{proposition}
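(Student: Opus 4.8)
The plan is to prove all three statements of \Cref{stmt:contrapstv-1} simultaneously by \emph{lexicographical induction} on the pair consisting of the derivation of $\emptyset\cdot\emptyset\rightvdash P:S$ and the number of transitions in $\instr{P}{M} \xRightarrow{e} \instr{P'}{M'}$, exactly as announced in the proof sketch of \Cref{thrm:sound}. The three statements must be bundled together because, as \Cref{eg:comp-system-sound-cases-ii-iii} illustrates, a single step of the composite system may leave the monitor in one of three shapes: a synthesised monitor $\synth{S}$, a monitor $M$ obtained from $\synth{S'}$ after an external-input transition (statement~2), or a monitor obtained after an additional $\tau$-step evaluating the synthesised condition (statement~3). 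The induction hypothesis therefore has to cover all three cases so that, whichever shape we land in after a step, we can appeal to it.

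First I would handle the base case where the trace $e$ is empty: then $\instr{P'}{M'} = \instr{P}{M}$, and in each of the three cases $M$ is visibly not $\textsf{no}^S_P$ (it is a synthesised branching/selection/recursion/\textbf{0} monitor, or one of the intermediate \textsf{if}/output forms), so the conclusion holds trivially. For the inductive step, the system performs a first transition $\instr{P}{M} \xrightarrow{\mu} \instr{P_1}{M_1} \xRightarrow{e'} \instr{P'}{M'}$ and I would case-split on the rule used for the first step (\ltsrule{iSnd}, \ltsrule{iRcv}, \ltsrule{iOut}, \ltsrule{iIn}, \ltsrule{iProc}, \ltsrule{iMon}) and, within that, on the structure of $S$ (selection, branching, $\strec{X}.S$, $X$, $\textsf{end}$) as dictated by the shape of $M$ in the current statement. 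In each subcase the strategy is the same: use \Cref{prop:sub-red} (subject reduction) to see that $P_1$ is typed by an appropriate continuation type $S_1$ (possibly $S$ itself, for $\tau$-steps), use \Cref{lemma:synthesis-subs} to rewrite $\synth{S[\nicefrac{\ldots}{X}]}$ as $\synth{S}[\nicefrac{\ldots}{X}]$ and \Cref{lemma:type-substitution} to re-derive the typing after type-variable instantiation, observe that $M_1$ has one of the three admissible shapes with respect to $S_1$ (or an intermediate shape that $\tau$-reduces to one), and then invoke the induction hypothesis on $\instr{P_1}{M_1} \xRightarrow{e'} \instr{P'}{M'}$ — which is legitimate because either the typing derivation shrank or the trace length strictly decreased. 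The crucial point to check in the selection case is that, since $\emptyset\cdot\emptyset\rightvdash P:\stselAi{i}{I}$ forces $P$ to send a message with a label $\texttt{l}_j$ with $j\in I$ and payload of type $\textsf{B}_j$, the monitor $\synth{S}$ never reaches the $\nopd$ branch on receiving from $P$; a symmetric observation is not needed for $\noed$ since we only care about $\textsf{no}_P$ here.

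The main obstacle I anticipate is the bookkeeping around the \emph{intermediate monitor states} in statements~2 and~3: after the composite system takes a step from, say, the $M$ in statement~2 (which is a conditional $\pif{\isValueB{j}{v} \mathbin{\textsf{\&\&}}A'_j}{\triangleleft\texttt{l}_j(v).\synth{S_j}}{\noed}$ after the external input has fixed the payload), the possible next monitor states and the corresponding continuation types must be lined up precisely so that the induction hypothesis applies — in particular one must argue that the ``then'' branch is taken whenever $P$ is well-typed (using the assumption $\textbf{accepted}(\envrcvOP\texttt{l}(v),S)$ together with the semantics of $\isValueB{j}{\cdot}$), and that the ``else'' branch yields $\noed$, which is a variant of $\textsf{no}_E$ and hence not $\textsf{no}_P$, so the conclusion still holds. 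Handling the $\tau$-step that discharges the conditional (linking statement~3 back to statement~2, and statement~2 back to statement~1) is where the lexicographic measure earns its keep: the trace length drops by one even though the typing derivation is unchanged, so the recursion is well-founded. Once \Cref{stmt:contrapstv-1} is established, the first clause with the empty-substitution instantiation gives the contrapositive of \Cref{thrm:sound}, and hence the theorem, as a corollary.
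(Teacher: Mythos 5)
There is a genuine gap: the three statements you propose to prove simultaneously do not form a strong enough induction hypothesis, and the paper itself does not prove \Cref{stmt:contrapstv-1} directly by induction on those three cases --- it first strengthens it further to \Cref{prop:soundness} with two additional cases (\ref{sound-case-4} and \ref{sound-case-5}) and proves that instead. The problem arises precisely in the selection case that you flag as ``the crucial point to check''. Suppose $S = \stselNA{i}{I}$ and $M = \synth{S}$, and the first step of the composite system is by \ltsrule{iSnd}: the process sends $\texttt{l}_j(v)$ and the monitor performs an \emph{internal} receive $\triangleright\texttt{l}_j(v)$, landing in $M_1 = \pif{\isValueB{j}{v}}{\envsndOP \texttt{l}_j(v).\lsem S_j\rsem[\nicefrac{v}{x_j}]}{\no{}{P}}$ while $P_1$ is now typed by $S_j$ (by \Cref{prop:sub-red}). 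This configuration $\instr{P_1}{M_1}$ matches none of your three cases: it is not $\lsem S_j\rsem$, and statements~2 and~3 only cover monitors reached from a synthesis via an \emph{external} input $\envrcvOP$ (the branching-type intermediate states), not via $\triangleright$. Your parenthetical ``or an intermediate shape that $\tau$-reduces to one'' does not rescue this, because the induction is over the \emph{given} trace, which may interleave process moves (\ltsrule{iProc}) before the monitor resolves its conditional and forwards the message; you cannot fast-forward the monitor to a synthesised form before applying the hypothesis.

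The fix is exactly what the paper does: add two further clauses of the shape ``$\exists S' \cdot S' \xrightarrow{\triangleleft \lab} S$ and $\lsem S' \rsem \xrightarrow{\triangleright\lab(v)} M$ (resp.\ $\lsem S' \rsem \xrightarrow{\triangleright\lab(v)} M_1 \xrightarrow{\tau} M$) such that $\textbf{accepted}(\triangleright\lab(v), \overline{S'})$ implies $M' \not\equal \textsf{no}^S_P$'', where $S$ is the continuation type already typing the process while the monitor is still discharging the conditional and the $\envsndOP$ forward inherited from $S'$. With these five cases the lexicographic induction closes, and your observation that well-typedness of $P$ forces $j \in I$ and $\emptyset \rightvdash v : \textsf{B}_j$ (so the $\nopd$ branch is never taken) is then used inside cases~4 and~5 rather than in case~1. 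The rest of your outline --- the base cases, the use of \Cref{prop:sub-red}, \Cref{lemma:synthesis-subs} and \Cref{lemma:type-substitution}, and the treatment of $\noed$ as harmless for a $\textsf{no}_P$-only conclusion --- matches the paper's argument.
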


Note that we add the requirement for the transitions to be accepted by the session type. 
This is because we assume that the monitor communicates with a statically type-checked component, and therefore all messages received from it (via $\envrcvOP$), would adhere to the session type.

Similarly to the previous cases, we once again strengthen Proposition \ref{stmt:contrapstv-1} with two additional cases as shown in Proposition \ref{prop:soundness} (Case \ref{sound-case-4} and Case \ref{sound-case-5}). 
Case \ref{sound-case-4} caters for the instance when the session type $S$ in Proposition \ref{stmt:contrapstv-1} is a selection type and the synthesised monitor performs an internal receive (via $\triangleright$) rather than an external input (via $\envrcvOP$). 
Similarly to Case \ref{sound-case-3}, we also add Case \ref{sound-case-5} to cater for the inductive step of Case \ref{sound-case-4} to make the inductive hypothesis to go through. 
The requirement for the monitor actions to be accepted by the dual of the session type (\ie $\textbf{accepted}(\triangleright\lab(v), \overline{S'})$), ensures that the monitor only performs accepted actions as it transitions independently of the process. 

\begin{proposition}\label{prop:soundness}
  For all $P \in$ Processes, and session types $S$, $\emptyset\cdot\emptyset\rightvdash P:S$ and $\instr{P}{M} \xRightarrow{e} \instr{P'}{M'}$ then:
  \begin{enumerate}
    \item\label{sound-case-1} $M\equal\lsem S\rsem$ implies $M' \not\equal \textsf{no}^S_P$; and

    \item\label{sound-case-2} $\lsem S \rsem \xrightarrow{\envrcvOP \lab(v)} M$ s.t.~$\textbf{accepted}(\envrcvOP \lab(v), S)$ implies $M' \not\equal \textsf{no}^S_P$; and

    \item\label{sound-case-3} $\lsem S \rsem \xrightarrow{\envrcvOP \lab(v)} M_1 \xrightarrow{\tau} M$ s.t.~$\textbf{accepted}(\envrcvOP \lab(v), S)$ implies $M' \not\equal \textsf{no}^S_P$; and

    \item\label{sound-case-4} $\exists S' \cdot S' \xrightarrow{\triangleleft \lab} S$ and $\lsem S' \rsem \xrightarrow{\triangleright\lab(v)} M$ s.t.~$\textbf{accepted}(\triangleright\lab(v), \overline{S'})$ implies $M' \not\equal \textsf{no}^S_P$; and

    \item\label{sound-case-5} $\exists S' \cdot S' \xrightarrow{\triangleleft \lab} S$ and $\lsem S' \rsem \xrightarrow{\triangleright\lab(v)} M_1 \xrightarrow{\tau} M$ s.t.~$\textbf{accepted}(\triangleright\lab(v), \overline{S'})$ implies $M' \not\equal \textsf{no}^S_P$.
  \end{enumerate}
\end{proposition}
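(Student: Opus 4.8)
The plan is to prove all five statements simultaneously by a \emph{lexicographic induction} on (i) the derivation of $\emptyset\cdot\emptyset\rightvdash P:S$ and (ii) the number of transitions in $\instr{P}{M}\xRightarrow{e}\instr{P'}{M'}$, exactly as announced for \Cref{thrm:sound}. The intuition behind the five cases is that, as the instrumented system $\instr{P}{\synth{S}}$ evolves, the monitor $\synth{S}$ (as produced by \Cref{def:app:synth-function}) may fall out of sync with the session type: after receiving a message it must evaluate an \textsf{if} guard of the form $\isValueB{i}{x_i}\mathbin{\textsf{\&\&}}A'_i$ before it can forward. Cases \ref{sound-case-1}, \ref{sound-case-2}, \ref{sound-case-3} track a monitor synthesised from a \emph{branching} type at, respectively, zero, one, and two steps past the $\envrcvOP$-input; cases \ref{sound-case-4}, \ref{sound-case-5} track a monitor synthesised from a \emph{selection} type at one and two steps past the $\triangleright$-internal-receive. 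The proof knits these together: the inductive step of \ref{sound-case-1} (when $S$ is a branching or selection type and the system takes its first step) lands precisely in the situation covered by \ref{sound-case-2} or \ref{sound-case-4}; the inductive step of \ref{sound-case-2}/\ref{sound-case-4} lands in \ref{sound-case-3}/\ref{sound-case-5}; and the inductive step of \ref{sound-case-3}/\ref{sound-case-5} — after the monitor fires its $\triangleleft$ or $\envsndOP$ output — returns to the shape required by \ref{sound-case-1} at a continuation type $S_i$, now with a \emph{smaller} typing derivation so the induction hypothesis applies. The auxiliary ingredients are all already available: Subject Reduction (\Cref{prop:sub-red}) to advance $P$'s type in lockstep, Lemma \ref{lemma:type-substitution} and Lemma \ref{lemma:synthesis-subs} to commute the recursion-unfolding substitutions past the synthesis function (so that $\synth{S[\nicefrac{\strec X.S}{X}]}$ is again of synthesised shape), and the Weakening/Strengthening/Substitution lemmata.

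Concretely, I would proceed as follows. First, I would fix the composite-system trace and do a case analysis on the \emph{first} transition $\instr{P}{M}\to\cdots$, using the rules of \Cref{fig:composite-system}. If the first transition is \ltsrule{iProc} or \ltsrule{iMon} with a $\tau$-move of $P$ or $M$, one either reduces the typing derivation (via \Cref{prop:sub-red} case \ref{sub-red:implication-2} or via \ltsrule{pRec}/\ltsrule{pTru}/\ltsrule{pFls}, applying Lemmas \ref{lemma:type-substitution}, \ref{lemma:synthesis-subs}) and reapplies the IH, or — crucially — one recognises that a $\tau$-move of a synthesised monitor is exactly the evaluation of an \textsf{if}-guard, moving from case \ref{sound-case-2} to \ref{sound-case-3} (resp.\ \ref{sound-case-4} to \ref{sound-case-5}) with one fewer trace step. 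The key verification here is that the guard $\isValueB{i}{x_i}\mathbin{\textsf{\&\&}}A'_i$, evaluated on the value $v$ just received, cannot take the \emph{false} branch: because $P$ is well-typed for $S$, the payload it emits (in the selection case) has base type $\textsf{B}_i$ so $\isValueB{i}{v}\Downarrow\true$, and the well-formedness of $A'_i$ plus the precise-typing assumption rule out $A'_i\not\Downarrow\true$ — hence the monitor never reaches $\nopd$. In the branching case, the reception is from the trusted side, so the $\textbf{accepted}(\cdot,S)$ hypothesis plays the analogous role. If the first transition is \ltsrule{iSnd}/\ltsrule{iRcv} (a synchronisation between $P$ and $M$) or \ltsrule{iIn}/\ltsrule{iOut} (interaction with the environment), I match the monitor's input/output rule from \Cref{fig:monitor-calculus} against the structure dictated by \Cref{def:app:synth-function}: the label must match one of the $\texttt{l}_i$ (so \ltsrule{mIV}/\ltsrule{mEV} do not fire to produce $\nops$; and the relevant verdict here is $\nopd$ anyway, which is what the statement forbids), and the resulting monitor is the \textsf{if}-wrapped continuation, i.e.\ exactly the premise of the next case in the chain. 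After each such step I invoke \Cref{prop:sub-red} to obtain $S\xrightarrow{\delta}S_i$ with $\emptyset\cdot\emptyset\rightvdash P_i:S_i$, which both shrinks the typing derivation and gives the $S_i$ needed to close the loop.

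I expect the main obstacle to be \emph{bookkeeping the five-way mutual induction cleanly}: one must be careful that in every branch of the case analysis the lexicographic measure strictly decreases — either the typing derivation shrinks (when $P$ does a real communication step and \Cref{prop:sub-red} applies) or, when the typing derivation is unchanged, the trace length strictly decreases (the monitor's internal $\tau$ steps and the re-synchronisation steps). The delicate points are (a) the \ltsrule{pRec} case, where $P$'s typing derivation for $\strec X.S$ is \emph{not} literally smaller after unfolding, so one must argue via the structure of \ltsrule{tRec} together with Lemma \ref{lemma:process-subs} and the commutation Lemma \ref{lemma:synthesis-subs} that the unfolded configuration is still within case \ref{sound-case-1} at the same-or-equivalent type, letting the trace-length component carry the induction; and (b) ensuring the side-conditions $\textbf{accepted}(\envrcvOP\lab(v),S)$ and $\textbf{accepted}(\triangleright\lab(v),\overline{S'})$ are genuinely preserved/established at every point where cases \ref{sound-case-2}--\ref{sound-case-5} are invoked recursively — this is where the ``interaction with a statically type-checked peer'' assumption is used, and it must be threaded through consistently. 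Once the measure and the side-conditions are pinned down, each individual subcase is a short, mechanical match between a monitor-calculus rule and the corresponding clause of \Cref{def:app:synth-function}. Finally, \Cref{prop:soundness} case \ref{sound-case-1} instantiated with $M=\synth{S}$ yields the contrapositive of \Cref{thrm:sound}, completing the soundness proof.
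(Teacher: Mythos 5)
Your overall strategy is the same as the paper's: a lexicographic induction on the typing derivation and on the trace length, proving the five statements simultaneously, with exactly the chaining you describe (the inductive step of case~1 lands in case~2 or case~4 depending on whether $S$ is a branching or a selection type; those feed into cases~3 and~5 after the monitor's internal $\tau$; and the forwarding step returns to case~1 at the continuation type, using subject reduction together with Lemmas~\ref{lemma:type-substitution} and~\ref{lemma:synthesis-subs} to restore the shape $\instr{P_i}{\lsem S_i\rsem}$). The handling of recursion via equi-recursive unfolding and the commutation lemma also matches the paper.

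There is, however, one step in your plan that is both wrong and unnecessary, and it stems from conflating the two process-side verdicts. The proposition forbids only $\nops$, the verdict produced by rule \ltsrule{mIV} when the process sends a message whose \emph{label} is not among the monitor's expected inputs; it does \emph{not} forbid $\nopd$, the verdict reached when the synthesised $\textsf{if}$-guard $\isValueB{i}{x_i}\mathbin{\textsf{\&\&}}A'_i$ evaluates to $\false$. Your claim that this guard ``cannot take the false branch'' because of well-typedness and the precise-typing assumption does not hold: the typing rules only check that assertions have type $\textsf{Bool}$, and a well-typed assertion can perfectly well evaluate to $\false$ (this is precisely why the paper says blaming is weakened once assertions are added). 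If your argument for $M'\not\equal\nops$ in the \ltsrule{mFls} branches rests on that claim, it fails. The correct (and simpler) handling, which is what the paper does, is to observe that in those branches the monitor reaches $\nopd$ or $\noed$, both of which are syntactically distinct from $\nops$, and the composite system is then stuck, so the implication holds trivially. What actually rules out $\nops$ is only the label discipline: rule \textsc{tSel} guarantees that a well-typed process's output label is one of the $\texttt{l}_i$ expected by $\lsem S\rsem$, so \ltsrule{mIV} never fires. With that correction, the rest of your plan goes through.
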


\begin{proof} 
  We can state $\instr{P}{M} \xRightarrow{e} \instr{P'}{M'}$ as the statement $\instr{P}{M} \xtwoheadrightarrow{t} \instr{P'}{M'}$ where $t$ is equal to $e$ when we filter out all the $\tau$-transitions from $t$.
  The proof proceeds by lexicographical induction on (\textit{1}) the derivation $\emptyset\cdot\emptyset\rightvdash P:S$ and (\textit{2}) numerical induction on the length $n$ of $t$ (\ie $n\equal\vert\; t\;\vert$) in $\instr{P}{M} \xtwoheadrightarrow{t} \instr{P'}{M'}$.
  When the derivation is by the rule:
  \begin{description}
    \item[\ltsrule{tBra}] \begin{subequations} Therefore from the rule we know: 
    \begin{gather}
      \label{tbra-sound-p}P \equal \prcv{j}{J}\\
      \label{tbra-sound-s}S \equal \stbra{i}{I}\\
      \label{tbra-sound-premise}\forall i \in I, \exists j \in J \cdot \lab_j \equal \lab_i,\ \basetype_j\equal \basetype_i,\ \emptyset\cdot x_j:\basetype_j \rightvdash P_j:S_i,\ x_j:\basetype_j\rightvdash A_i:\textsf{Bool}\\
      \label{tbra-sound-synthesis}\lsem S \rsem \equal \synthbra
    \end{gather}
    \textbf{[$n\equal0$]} Therefore $P\equal P'$ and $M\equal M'$.
    \begin{enumerate}[label=\roman*.,ref=\roman*]
      \item If $M\equal\lsem S\rsem$, since $M\equal M'$, by \ref{tbra-sound-synthesis} we know that $M' \not\equal \textsf{no}^S_P$.
        
      \item\label{sound-tbra-bc-ii}\begin{subequations} 
        We assume that:
        \begin{gather}
          \label{sound-tbra-bc-1}\synth{S} \xrightarrow{\envrcvOP \lab(v)} M \text{ s.t. }\textbf{accepted}(\envrcvOP \lab(v), S)
        \end{gather}
        Since \ref{sound-tbra-bc-1} could only be derived with \ltsrule{mIn} we know from \ref{tbra-sound-synthesis} that $\lab(v)\equal\lab_m(v_m)$ for some $m \in I$ where: 
        \begin{gather}
          M\equal\condstmtinternal{m}
        \end{gather}
        Since $M\equal M'$ we know that $M' \not\equal \textsf{no}^S_P$. 
      \end{subequations}

      \item\label{sound-tbra-bc-iii}\begin{subequations} We assume that:
        \begin{gather}
          \synth{S} \xrightarrow{\envrcvOP \lab(v)} M_1 \xrightarrow{\tau} M  \text{ s.t. } \textbf{accepted}(\envrcvOP \lab(v), S)
        \end{gather}
        Similar to the previous case, since $\lsem S \rsem \xrightarrow{\envrcvOP \lab(v)} M_1$ could only be derived with \ltsrule{mIn} we know from \ref{tbra-sound-synthesis} that $\lab(v)\equal\lab_m(v_m)$ for some $m\in I$ where:
        \begin{gather}
          M_1 \equal \condstmtinternal{m}
        \end{gather}
        There are two possible transitions for $M_1 \xrightarrow{\tau} M$ with the rules:
        \begin{enumerate}[label=\arabic*.]
          \item \ltsrule{mTru} where $M \equal \triangleleft \lab_m(x_m).\lsem S_m \rsem[\nicefrac{v_m}{x_m}]$ and since $M\equal M'$ then $M' \not\equal \textsf{no}^S_P$.

          \item \ltsrule{MFls} where $M \equal \textsf{no}^D_E$ and since $M\equal M'$ then $M' \not\equal \textsf{no}^S_P$.
        \end{enumerate}
      \end{subequations}

      \item\label{sound-tbra-bc-iv}\begin{subequations} 
        We assume that: 
        \begin{gather}
          \label{sound-tbra-bc-iv-1}\exists S' \cdot S' \xrightarrow{\triangleleft \lab} S\\
          \label{sound-tbra-bc-iv-4}\lsem S' \rsem \xrightarrow{\triangleright\lab(v)} M\\
          \label{sound-tbra-bc-iv-7}\textbf{accepted}(\triangleright\lab(v), \overline{S'})
        \end{gather}
        Since \ref{sound-tbra-bc-iv-1} can only be derived with \ltsrule{sSel}, we know: 
        \begin{gather}
          \label{sound-tbra-bc-iv-2}S' \equal \stsel{k}{K}\\
          \label{sound-tbra-bc-iv-5}\lab\equal \lab_m \text{ for some } m \in K
        \end{gather}
        From \ref{sound-tbra-bc-iv-2} and by Definition \ref{def:synth-function}, we know that: 
        \begin{gather}
          \label{sound-tbra-bc-iv-3}\lsem S'\rsem\equal \synthsel{k}{K}
        \end{gather}
        Since we know \ref{sound-tbra-bc-iv-3}, \ref{sound-tbra-bc-iv-5} and \ref{sound-tbra-bc-iv-7}, we know that \ref{sound-tbra-bc-iv-4} can only be derived from the rule \ltsrule{mRcv} and therefore:
        \begin{gather}
          \label{sound-tbra-bc-iv-6}M \equal \pif{A_m[\nicefrac{v_m}{x_m}]}{\triangleleft \lab_m(v_m).\lsem S_m\rsem[\nicefrac{v_m}{x_m}]}{\noed}\\
          \emptyset\rightvdash v_m:B_m
        \end{gather}
        Since $M\equal M'$ and we know \ref{sound-tbra-bc-iv-6} then $M' \not\equal \textsf{no}^S_P$.
      \end{subequations}
  
      \item\label{sound-tbra-bc-v}\begin{subequations} 
        We assume that:
        \begin{gather}
          \label{sound-tbra-bc-v-1}\exists S' \cdot S' \xrightarrow{\triangleleft \lab} S\\
          \label{sound-tbra-bc-v-2}\lsem S' \rsem \xrightarrow{\triangleright\lab(v)} M_1 \xrightarrow{\tau} M \\
          \label{sound-tbra-bc-v-8}\textbf{accepted}(\triangleright\lab(v), \overline{S'})
        \end{gather}
        Similar to the previous case, since \ref{sound-tbra-bc-v-1} can only be derived with \ltsrule{sSel}, we know:
        \begin{gather}
          \label{sound-tbra-bc-v-3}S' \equal \stsel{k}{K}\\
          \label{sound-tbra-bc-v-4}\lab \equal \lab_m \text{ for some } m \in K
        \end{gather}
        From \ref{sound-tbra-bc-v-4} and by Definition \ref{def:synth-function}, we know that:
        \begin{gather}
          \label{sound-tbra-bc-v-5}\lsem S'\rsem\equal \synthsel{k}{K}
        \end{gather}
        Since we know \ref{sound-tbra-bc-v-5}, \ref{sound-tbra-bc-v-4} and \ref{sound-tbra-bc-v-8} we know that $\lsem S' \rsem \xrightarrow{\triangleright\lab(v)} M_1$ from \ref{sound-tbra-bc-v-2} could only be derived using \ltsrule{mRcv} and therefore:
        \begin{gather}
          \label{sound-tbra-bc-v-6}M_1 \equal \pif{A_m[\nicefrac{v_m}{x_m}]}{\triangleleft \lab_m(v_m).\lsem S_m\rsem[\nicefrac{v_m}{x_m}]}{\noed}\\
          \emptyset\rightvdash v_m:B_m
        \end{gather}
        From \ref{sound-tbra-bc-v-6}, we know that $M_1$ can only perform either of two actions by the rules:
        \begin{enumerate}[label=\alph*.]
            \item \ltsrule{mTru} where $M \equal \envsndOP \lab_m(v_m).\lsem S_m \rsem[\nicefrac{v_m}{x_m}]$ and since $M\equal M'$ then $M' \not\equal \textsf{no}^S_P$.

            \item \ltsrule{MFls} where $M \equal \textsf{no}^D_P$ and since $M\equal M'$ then $M' \not\equal \textsf{no}^S_P$.
        \end{enumerate}
      \end{subequations}
    \end{enumerate}

    \textbf{[$n\equal k+1$]} Therefore $\instr{P}{M} \xrightarrow{\mu} \instr{P_1}{M_1}\xtwoheadrightarrow{t'}\instr{P'}{M'}$ where $\vert\; t'\;\vert\equal k$.
      \begin{enumerate}[label=\roman*.]
        \item \begin{subequations}
          When $M\equal\lsem S\rsem$, from \ref{tbra-sound-p} and \ref{tbra-sound-synthesis} we know that the only possible action of $\instr{P}{M}$ is by the rule \ltsrule{iIn}, where:
            \begin{gather}
              \label{1-ic-i-1}M\equal\lsem S\rsem\xrightarrow{\envrcvOP \lab_m(v_m)}M_1\\
              \label{1-ic-i-2}\instr{P}{M} \xrightarrow{\envrcvOP \lab_m(v_m)}\instr{P}{M_1}
            \end{gather}
            Depending on $m$, \ref{1-ic-i-1} may be derived with two rules:
            \begin{enumerate}[label=\alph*.]
              \item \ltsrule{mIn} when $m \in I$ such that:
              \begin{gather}
                \label{sound-tbra-i-ic-1}M_1 \equal \condstmtinternal{m}\\
                \label{sound-tbra-i-ic-2}\emptyset\rightvdash v_m:B_m
              \end{gather}
              Since we know \ref{tbra-sound-s}, \ref{sound-tbra-i-ic-2} and $m \in I$, by Definition \ref{def:accepted-actions}, we know that: 
              \begin{gather}
                \label{1-ic-i-3}\textbf{accepted}(\envrcvOP \lab_m(v_m), S)
              \end{gather} 
              Therefore, since from the assumption we know that $\emptyset\cdot\emptyset\rightvdash P:S$ and we know \ref{1-ic-i-1}, \ref{1-ic-i-3}, and $\instr{P}{M_1} \xtwoheadrightarrow{t'} \instr{P'}{M'}$, using the inductive hypothesis for Case \ref{sound-case-2} we obtain $M'\not\equal \textsf{no}^S_P$.

            \item \ltsrule{mEV} when $m \not\in I$ or $\emptyset\not\rightvdash v_k:B_k$ such that $M_1 \equal \textsf{no}^S_E$. However, $\instr{P}{\textsf{no}^S_E} \not{\xtwoheadrightarrow{t'}}$ and therefore $M'$ must be equal to $M_1$, which means that $M'\not\equal \textsf{no}^S_P$.
          \end{enumerate}
        \end{subequations}

        \item \begin{subequations} 
          We assume that: 
          \begin{gather}
            \label{sound-tbra-ic-ii-1}\synth{S} \xrightarrow{\envrcvOP \lab(v)} M \text{ s.t. }\textbf{accepted}(\envrcvOP \lab(v), S)
          \end{gather}
          As we show in the base case of this case (Case \ref{sound-tbra-bc-ii}) we know that \ref{sound-tbra-ic-ii-1} can only be derived by the rule \ltsrule{mIn} where for some $m \in I$: 
          \begin{gather}
            \label{tbra-labl-equal-ic-ii}\lab(v) \equal \lab_m(v_m)\\
            \label{tbra-m-structure-1}M\equal \condstmtinternal{m}\\
            \label{tbra-vm-bm}\emptyset\cdot\emptyset\rightvdash v_m:B_m
          \end{gather}
          Due to the structure of $P$ in \ref{tbra-sound-p} and the structure of $M$ in \ref{tbra-m-structure-1}, the only possible action that $\instr{P}{M}$ can perform is by the rule \ltsrule{iMon}, where:
          \begin{gather}
            P\equal P_1\\
            \label{tbra-m-tau-transition-ic-ii}M \xrightarrow{\tau} M_1\\
            \label{tbra-pm-transition-ic-7}\langle P;M\rangle \xrightarrow{\tau} \langle P;M_1 \rangle
          \end{gather}
          Due to the structure of $M$ in \ref{tbra-m-structure-1}, the transition \ref{tbra-m-tau-transition-ic-ii} could only be derived with either of the following two rules: 
          \begin{enumerate}[label=\alph*.]
            \item \ltsrule{mTru} and therefore we know that: 
            \begin{gather}
              \label{tbra-ic-ii-6} M_1 \equal \triangleleft \lab_m(v_m).\lsem S_m \rsem[\nicefrac{v_m}{x_m}]
            \end{gather}
            Since we know \ref{sound-tbra-ic-ii-1} and \ref{tbra-m-tau-transition-ic-ii}, we therefore know that:
            \begin{gather}
              \label{1-ic-ii-1}\lsem S \rsem \xrightarrow{\envrcvOP \lab_m(v_m)} M \xrightarrow{\tau} M_1
            \end{gather}
            Moreover, from \ref{tbra-pm-transition-ic-7} we know:
            \begin{gather}
              \label{1-ic-ii-2}\langle P;M_1 \rangle \xtwoheadrightarrow{t'} \langle P';M'\rangle
            \end{gather}
            From \ref{sound-tbra-ic-ii-1} and \ref{tbra-labl-equal-ic-ii} we know that:
            \begin{gather}
              \label{tbra-accepted-ic-ii}\textbf{accepted}(\envrcvOP \lab_m(v_m), S)
            \end{gather}
            Therefore, since from the assumption we know $\emptyset\cdot\emptyset\rightvdash P:S$ and we also know \ref{1-ic-ii-2}, \ref{tbra-accepted-ic-ii} and \ref{1-ic-ii-1} we can apply the inductive hypothesis for Case \ref{sound-case-3} to obtain $M' \not\equal \textsf{no}^S_P$ as required.

            \item \ltsrule{mFls} and therefore we know that $M_1 \equal \textsf{no}^D_E$. Since we know that $\langle P;M \rangle \xrightarrow{\tau} \langle P;\textsf{no}^D_E \rangle$ and since $\langle P;\textsf{no}^D_E \rangle \not{\xtwoheadrightarrow{}}$ the implication holds trivially.
          \end{enumerate}
        \end{subequations}

        \item \begin{subequations} 
          We assume that:
          \begin{gather}
            \label{sound-tbra-ic-iii-1}\lsem S \rsem \xrightarrow{\envrcvOP \lab(v)} M'_1 \xrightarrow{\tau} M \text{ s.t. }\textbf{accepted}(\envrcvOP \lab(v), S)
          \end{gather}
          As shown in the previous case, $\lsem S \rsem \xrightarrow{\envrcvOP \lab(v)} M'_1$ could only be derived by rule \ltsrule{mIn} (\ref{tbra-labl-equal-ic-ii}, \ref{tbra-m-structure-1}, \ref{tbra-vm-bm}) where for some $m \in I$: 
          \begin{gather}
            \lab(v)\equal\lab_m(v_m)\\
            \label{tba-ic-iii-m1prime}M'_1\equal \condstmtinternal{m}\\
            \label{tbra-ic-iii-vmbm}\emptyset\cdot\emptyset\rightvdash v_m:B_m
          \end{gather}
          Due to the structure of $M'_1$ in \ref{tba-ic-iii-m1prime}, the only actions it can perform to match \ref{sound-tbra-ic-iii-1} are by the rules:
          \begin{enumerate}[label=\alph*.]
            \item \ltsrule{mTru} where $M'_1 \xrightarrow{\tau} M$ and $M\equal \triangleleft \lab_m(v_m).\lsem S_k \rsem[\nicefrac{v_m}{x_m}]$. Due to the structure of $P$ in \ref{tbra-sound-p} and the structure of $M$, the only action $\langle P;M\rangle$ can perform is by the rule \ltsrule{iRcv}, where:
            \begin{gather}
              \label{1-ic-iii-1}P\xrightarrow{\triangleright\,\lab_m(v_m)} P_1 \text{ by \ltsrule{pRcv} where }P_1\equal P_m[\nicefrac{v_m}{x_m}]\\
              \label{1-ic-iii-2}M\xrightarrow{\triangleleft \lab_m(v_m)} M_1 \text{ by \ltsrule{mSnd} where } M_1\equal\lsem S_m\rsem[\nicefrac{v_m}{x_m}]\\
              \label{1-ic-iii-3}\langle P;M \rangle \xrightarrow{\tau} \langle P_1;M_1 \rangle
            \end{gather}
            Since we know that $m \in I$, $\emptyset\cdot\emptyset\rightvdash v_m:B_m$ from \ref{tbra-ic-iii-vmbm} and the structure of $S$ in \ref{tbra-sound-s}, through Definition \ref{def:accepted-actions}, we know that:
            \begin{gather}
              \label{tbra-accepted-iii-ic}\accepted{\triangleright\,\lab_m(v_m)}{S}
            \end{gather}
            From our assumption we know that $\emptyset\cdot\emptyset\rightvdash P:S$ and together with \ref{1-ic-iii-1} and \ref{tbra-accepted-iii-ic} by applying subject reduction (Proposition \ref{prop:sub-red}) we obtain:
            \begin{gather}
              S \xrightarrow{\triangleright\,\lab_m}S_m \text{ s.t. } \match{\triangleright\lab_m()}{\triangleright\lab_m}\\
              \label{1-ic-iii-4}\emptyset\cdot\emptyset\rightvdash P_1:S_m
            \end{gather}
            We can apply the lemma for type variable instantiation (Lemma \ref{lemma:type-substitution}) on \ref{1-ic-iii-4} to obtain:
            \begin{gather}
              \label{1-ic-iii-5}\emptyset\cdot\emptyset\rightvdash P_1:S_k[\nicefrac{v_k}{x_k}] \ (\text{since } \emptyset[\nicefrac{v_k}{x_k}])\equal\emptyset
            \end{gather}
            Moreover, from \ref{1-ic-iii-2} we know $M_1\equal\lsem S_k\rsem[\nicefrac{v_k}{x_k}]$ and by the synthesis substitution lemma (Lemma \ref{lemma:synthesis-subs}) we obtain that:
            \begin{gather}
              \label{1-ic-iii-6}M_1\equal\lsem S_k\rsem[\nicefrac{v_k}{x_k}]\equal\lsem S_k[\nicefrac{v_k}{x_k}] \rsem
            \end{gather}
            Therefore, since we know \ref{1-ic-iii-5} and \ref{1-ic-iii-6} and we know that $\langle P_1;M_1 \rangle \xtwoheadrightarrow{t'} \langle P';M'\rangle$, we can apply the inductive hypothesis for Case \ref{sound-case-1} to obtain $M' \not\equal \textsf{no}^S_P$.

            \item \ltsrule{mFls} where $M'_1 \xrightarrow{\tau} \textsf{no}^D_E$. However, $\langle P;\textsf{no}^D_E \rangle \not{\rightarrow}$ and implication holds trivially.
          \end{enumerate}
        \end{subequations}

        \item\begin{subequations}\label{sound-tbra-ic-iv} 
          We assume that:
          \begin{gather}
            \label{sound-tbra-ic-iv-1}\exists S' \cdot S' \xrightarrow{\triangleleft \lab} S\\
            \label{sound-tbra-ic-iv-4}\lsem S' \rsem \xrightarrow{\triangleright\lab(v)} M\\
            \label{sound-tbra-ic-iv-6}\textbf{accepted}(\triangleright\lab(v), \overline{S'})
          \end{gather}
          As we show in the base case of this case (Case \ref{sound-tbra-bc-iv}) we know that \ref{sound-tbra-ic-iv-1} could only be derived using \ltsrule{sSel} and therefore: 
          \begin{gather}
            \label{sound-tbra-ic-iv-7}\lab\equal \lab_m \text{ for some } m \in K\\
            \label{sound-tbra-ic-iv-5}S' \equal \stsel{k}{K}
          \end{gather}
          From \ref{sound-tbra-ic-iv-5} and Definition \ref{def:synth-function}, we know that:
          \begin{gather}
            \label{sound-tbra-ic-iv-8}\lsem S'\rsem\equal \synthsel{k}{K}
          \end{gather}
          Since we know \ref{sound-tbra-ic-iv-8}, \ref{sound-tbra-ic-iv-6} and \ref{sound-tbra-ic-iv-7}, we know that \ref{sound-tbra-ic-iv-4} could only be derived using \ltsrule{mRcv} and therefore:
          \begin{gather}
            \label{sound-tbra-ic-iv-9}M \equal \pif{A_m[\nicefrac{v_m}{x_m}]}{\triangleleft \lab_m(v_m).\lsem S_m\rsem[\nicefrac{v_m}{x_m}]}{\noed}\\
            \emptyset\rightvdash v_m:B_m
          \end{gather}
          From the structure of $P$ in \ref{tbra-sound-p} and the structure of $M$ in \ref{sound-tbra-ic-iv-9} we know that $\langle P;M \rangle \xrightarrow{\mu} \langle P_1;M_1\rangle$ can only be derived with the rule \ltsrule{iMon}, where $\mu \equal \tau$, $P \equal P_1$ and with the rule:
          \begin{enumerate}[label=\bfseries\alph*.]
            \item \ltsrule{mTru} $M_1\equal \envsndOP \lab_m(v_m).\lsem S_m \rsem[\nicefrac{v_m}{x_m}]$. Since from the assumption we know that $\emptyset\cdot\emptyset\rightvdash P:S$ and we know \ref{sound-tbra-ic-iv-1}, \ref{sound-tbra-ic-iv-7}, \ref{sound-tbra-ic-iv-6}, $\langle P;M_1\rangle \xtwoheadrightarrow{t'} \langle P';M'\rangle$, $\lsem S' \rsem \xrightarrow{\triangleright\,\lab_m(v_m)} M \xrightarrow{\tau} M_1$, we can apply the inductive hypothesis for Case \ref{sound-case-5} to obtain $M' \not\equal \textsf{no}^S_P$.

            \item \ltsrule{mFls} $M_1\equal \textsf{no}^D_P$. However, $\langle P;M_1\rangle \not{\rightarrow}$ and implication holds trivially.
          \end{enumerate}
        \end{subequations}

        \item\label{sound-tbra-ic-v}
        \begin{subequations}  
          We assume that:
          \begin{gather}
            \label{sound-tbra-ic-v-1}\exists S' \cdot S' \xrightarrow{\triangleleft \lab} S\\
            \label{sound-tbra-ic-v-2}\lsem S' \rsem \xrightarrow{\triangleright\lab(v)} M_1 \xrightarrow{\tau} M \\
            \label{sound-tbra-ic-v-3}\textbf{accepted}(\triangleright\lab(v), \overline{S'})
          \end{gather}
          As shown in the previous case, we can infer that:
          \begin{gather}
            \label{sound-tbra-ic-v-4}\lab\equal \lab_m \text{ for some } m \in K\\
            \label{sound-tbra-ic-v-5}S' \equal \stsel{k}{K}\\
            \label{sound-tbra-ic-v-6}\lsem S'\rsem\equal \synthsel{k}{K}\\
            \label{sound-tbra-ic-v-7}M_1 \equal \pif{A_m[\nicefrac{v_m}{x_m}]}{\triangleleft \lab_m(v_m).\lsem S_m\rsem[\nicefrac{v_m}{x_m}]}{\noed}\\
            \label{sound-tbra-ic-v-8}\emptyset\rightvdash v_m:B_m
          \end{gather}
          $M_1$ in \ref{sound-tbra-ic-v-7} can perform either of the following actions with the rules: 
          \begin{enumerate}[label=\alph*.]
            \item \ltsrule{mTru} where $M \equal \envsndOP \lab_m(v_m).\lsem S_m \rsem[\nicefrac{v_m}{x_m}]$ for an $m \in K$. For $\langle P;M \rangle \xrightarrow{\mu} \langle P_1;M_1\rangle$, only the rule \ltsrule{iOut} may be applied, where $P\equal P_1$ and:
            \begin{gather}
              \label{1-ic-v-1}M_1\equal \lsem S_m \rsem[\nicefrac{v_m}{x_m}]\\
              \label{1-ic-v-2}\langle P;M \rangle \xrightarrow{\envsndOP \lab_m(v_m)} \langle P;M_1\rangle \xtwoheadrightarrow{t'} \langle P';M'\rangle
            \end{gather}
            Since we know \ref{sound-tbra-ic-v-1} and \ref{sound-tbra-ic-v-4} we know that $S \equal S_m$ by the rule \ltsrule{sSel} and therefore from \ref{1-ic-v-1} we know that:
            \begin{gather}
              \label{1-ic-v-3}M_1\equal \lsem S \rsem[\nicefrac{v_m}{x_m}]
            \end{gather}
            From the assumption we know $\emptyset\cdot\emptyset\rightvdash P:S$ to which we can apply the type instantiation lemma (Lemma \ref{lemma:type-substitution}) to obtain:
            \begin{gather}
              \label{1-ic-v-4}\emptyset\cdot\emptyset\rightvdash P:S[\nicefrac{v_m}{x_m}]\text{  since } \emptyset[\nicefrac{v_k}{x_k}]\equal\emptyset
            \end{gather}
            Moreover, from \ref{1-ic-v-3} we know $M_1\equal \lsem S \rsem[\nicefrac{v_m}{x_m}]$ and by the synthesis substitution lemma (Lemma \ref{lemma:synthesis-subs}) we know that:
            \begin{gather}
              \label{1-ic-v-5}M_1\equal \lsem S\rsem[\nicefrac{v_m}{x_m}] \equal \lsem S[\nicefrac{v_m}{x_m}]\rsem
            \end{gather}
            Therefore, since we know \ref{1-ic-v-4} and \ref{1-ic-v-5} and from \ref{1-ic-v-2} we know $\langle P;M_1\rangle \xtwoheadrightarrow{t'} \langle P';M'\rangle$, we can apply the inductive hypothesis for Case \ref{sound-case-1} to obtain $M' \not\equal \textsf{no}^S_P$.

            \item \ltsrule{mFls} we know $M \equal \textsf{no}^D_P$ and since $\langle P;M\rangle \not{\rightarrow}$, implication holds trivially.
          \end{enumerate}
        \end{subequations}
      \end{enumerate}
    \end{subequations}

    \item[\ltsrule{tSel}] \begin{subequations} Therefore we know that $P \equal \pwrtk{j}.P_j$ and $S \equal \stsel{i}{I}$. Moreover:
      \begin{gather}
          \label{tsel-1}\exists i \in I \cdot \lab_j \equal \lab_i,\quad \emptyset\rightvdash v_j:\basetype_i,\quad\emptyset\cdot\emptyset \rightvdash P_j:S_i,\quad x_i:\basetype_i \rightvdash A_i:\textsf{Bool}\\
          \label{tsel-2}\lsem S \rsem \equal \synthsel{i}{I}
      \end{gather}
      \textbf{[$n\equal0$]} Therefore $P\equal P'$ and $M\equal M'$.
      \begin{enumerate}[label=\roman*.]
          \item We know $M\equal\lsem S\rsem$, and since $M\equal M'$, by \ref{tsel-2} $M' \not\equal \textsf{no}^S_P$.
      
          \item Since $\lsem S \rsem \not{\xrightarrow{\envrcvOP \lab(v)}}$ the implication holds trivially.

          \item Analogous to the previous case since $\lsem S \rsem \not{\xrightarrow{\envrcvOP \lab(v)}}$.

          \item The base case for $S' \xrightarrow{\triangleleft \lab_m} S$ and $\lsem S' \rsem \xrightarrow{\triangleright\,\lab_m(v_m)} M$ is identical to the base case \ref{sound-tbra-bc-iv} for the rule \ltsrule{tBra}.

          \item Similarly, the base case for $S' \xrightarrow{\triangleleft \lab_m} S$ and $\lsem S' \rsem \xrightarrow{\triangleright\,\lab_m(v_m)} M_1 \xrightarrow{\tau} M$ is identical to the base case \ref{sound-tbra-bc-v} for the rule \ltsrule{tBra}.
      \end{enumerate}
      \textbf{[$n\equal k+1$]} Therefore $\instr{P}{M} \xrightarrow{\mu} \instr{P_1}{M_1}\xtwoheadrightarrow{t'}\instr{P'}{M'}$ where $\vert\; t'\;\vert\equal k$.
      \begin{enumerate}[label=\roman*.,ref=\roman*]
        \item \label{tsel-ic-i} When $M\equal\lsem S\rsem$, the only possible action the composite system $\langle P;M\rangle$ can perform is by the rule \ltsrule{iSnd}, where:
        \begin{gather}
          \label{2-ic-1-1}P \xrightarrow{\triangleleft \lab_j(v_j)} P_1 \quad\text{where }P_1\equal P_j \text{ by \ltsrule{pSnd}}\\
          \label{2-ic-1-2}M \equal \lsem S \rsem\xrightarrow{\triangleright\,\lab_j(v_j)} M_1\\
          \label{2-ic-1-6}\langle P;M\rangle \xrightarrow{\tau} \langle P_1;M_1\rangle \xtwoheadrightarrow{t'} \langle P';M'\rangle
        \end{gather}
        where \ref{2-ic-1-2} can only be derived by the rule \ltsrule{mRcv} since by \ref{tsel-1}, $j \in I$ and $\emptyset\rightvdash v_j:\basetype_j$ and therefore $M_1 \equal\condstmtexternal{j}$. 
        Moreover, since $\overline{S} \equal \&\{?\lab_i(x_i:\basetype_i)[A_i].\overline{S_i}\}_{i \in I}$ and we know that $j \in I$ and $\emptyset\rightvdash v_j:\basetype_j$, by Definition \ref{def:accepted-actions}, we know that:
        \begin{gather}
          \label{2-ic-1-5}\textbf{accepted}(\triangleright\,\lab_j(v_j), \overline{S})
        \end{gather}
        Since we know \ref{2-ic-1-5}, by Definition \ref{def:accepted-actions} and the dual of the session type $\overline{S}$ we know that:
        \begin{gather}
          \label{3-ic-accepted-8}\textbf{accepted}(\triangleleft \lab_j(v_j), S)
        \end{gather}
        Since from the assumption we know that $\emptyset\cdot\emptyset\rightvdash P:S$, and we know \ref{2-ic-1-1} such that \ref{3-ic-accepted-8}, we can apply subject reduction (Proposition \ref{prop:sub-red}), to obtain:
        \begin{gather}
          \label{2-ic-1-3}S \xrightarrow{\triangleleft \lab_j} S_j \text{ for } j \in I\\
          \label{2-ic-1-4}\emptyset\cdot\emptyset\rightvdash P_j:S_j
        \end{gather}
        Since we know \ref{2-ic-1-4}, \ref{2-ic-1-3} and by \ref{2-ic-1-2} we know that $\lsem S \rsem \xrightarrow{\triangleright\,\lab_j(v_j)} M_1$ such that \ref{2-ic-1-5}, together with $\langle P_1;M_1\rangle \xtwoheadrightarrow{t'} \langle P';M'\rangle$ from \ref{2-ic-1-6}, we can apply the inductive hypothesis of Case \ref{sound-case-4} to obtain $M'\not\equal \textsf{no}^S_P$.

        \item Since $\lsem S \rsem \not{\xrightarrow{\envrcvOP \lab(v)}}$ the implication holds trivially.

        \item Analogous to the previous case since $\lsem S \rsem \not{\xrightarrow{\envrcvOP \lab(v)}}$.
        
        \item The inductive case for $S' \xrightarrow{\triangleleft \lab_m} S$ and $\lsem S' \rsem \xrightarrow{\triangleright\,\lab_m(v_m)} M$ is identical to the inductive case of Case \ref{sound-tbra-ic-iv} for the rule \ltsrule{tBra}.

        \item Similarly, the inductive case for $S' \xrightarrow{\triangleleft \lab_m} S$ and $\lsem S' \rsem \xrightarrow{\triangleright\,\lab_m(v_m)} M_1 \xrightarrow{\tau} M$ is identical to the inductive case of Case \ref{sound-tbra-ic-v} for the rule \ltsrule{tBra}.
      \end{enumerate}
  \end{subequations}

  \item[\ltsrule{tRec}] \begin{subequations} Therefore we know that $P \equal \mu_X.Q$ that type-checks with any $S$. Moreover:
      \begin{gather}
          \label{tRec-1}X:S\cdot\emptyset\rightvdash Q:S
      \end{gather}
      \textbf{[$n\equal0$]} Therefore $P\equal P'$ and $M\equal M'$.\\
      All base cases for \ltsrule{tRec} are analogous to the base cases for the rule \ltsrule{tBra}.

      \textbf{[$n\equal k+1$]} Therefore $\instr{P}{M} \xrightarrow{\mu} \instr{P_1}{M_1}\xtwoheadrightarrow{t'}\instr{P'}{M'}$ where $\vert\; t'\;\vert\equal k$.
      \begin{enumerate}[label=\roman*.]
        \item When $M\equal\lsem S\rsem$, the only possible actions for $\langle P;M\rangle \xrightarrow{\mu} \langle P_1;M_1\rangle$ due to the structure of $P$ and the strucutre of $M$, are by the rules:
        \begin{enumerate}[label=\alph*.]
          \item \ltsrule{iProc} where $P \xrightarrow{\tau} P_1$ s.t.~$P_1\equal Q[\nicefrac{\mu_X.Q}{X}]$ by \ltsrule{pRec} and $M \equal M_1 \equal \lsem S \rsem$. From the assumption we know that $\emptyset\cdot\emptyset\rightvdash P:S$ and since $\textbf{accepted}(\tau,S)$ always holds, by applying subject reduction (Proposition \ref{prop:sub-red}) we obtain $\emptyset\cdot\emptyset\rightvdash P_1:S$. We now have $\langle P_1;\lsem S\rsem \rangle \xtwoheadrightarrow{t'} \langle P';M' \rangle$ and by the inductive hypothesis of Case \ref{sound-case-1} we obtain that $M'\not\equal \textsf{no}^S_P$.

          \item \ltsrule{iMon} where $S\equal \textsf{rec }Y.S_1$ since:
          \begin{gather}
            \label{3-ic-1-3}M \equal \lsem S \rsem \equal \mu_Y.\lsem S_1\rsem \text{ and }P\equal P_1\\
            \label{3-ic-1-4}M \equal \lsem S \rsem \xrightarrow{\tau}M_1 \text{ by \ltsrule{mRec} where }  M_1\equal\lsem S_1 \rsem[\nicefrac{\mu_Y.\lsem S_1\rsem}{Y}]\\
            \label{3-ic-1-5}\langle P;M \rangle \xrightarrow{\tau} \langle P;M_1 \rangle \xtwoheadrightarrow{t'} \langle P';M'\rangle
          \end{gather}
          From the assumption we know $\emptyset\cdot\emptyset\rightvdash P:\textsf{rec }Y.S_1$, which since we consider $\textsf{rec }Y.S_1$ to be definitionally equal to $S_1[\nicefrac{\textsf{rec }Y.S_1}{Y}]$ we know that:
          \begin{gather}
            \label{3-ic-1-6}\emptyset\cdot\emptyset\rightvdash P:S_1[\nicefrac{\textsf{rec }Y.S_1}{Y}]
          \end{gather}
          From \ref{3-ic-1-4} and the synthesis substitution lemma (Lemma \ref{lemma:synthesis-subs}) we know that:
          \begin{gather}
            \label{3-ic-1-6-1}M_1\equal\lsem S_1 \rsem[\nicefrac{\mu_Y.\lsem S_1\rsem}{Y}]\equal\lsem S_1[\nicefrac{\textsf{rec }Y.S_1}{Y}] \rsem
          \end{gather}
          By \ref{3-ic-1-6}, \ref{3-ic-1-6-1} and since from \ref{3-ic-1-5} we know $\langle P;M_1 \rangle \xtwoheadrightarrow{t'} \langle P';M'\rangle$ by applying the inductive hypothesis of Case \ref{sound-case-1} we obtain $M' \not\equal \textsf{no}^S_P$.

          \item \ltsrule{iIn} where $S \equal \stbra{i}{I}$ since: 
          \begin{gather}
            \label{3-ic-1-7}\lsem S \rsem \equal \synthbra\text{ and }P\equal P_1\\
            \label{3-ic-1-9}M\equal\lsem S \rsem \xrightarrow{\envrcvOP \lab_j(v_j)}M_1\\
            \label{3-ic-1-8}\langle P;M \rangle \xrightarrow{\envrcvOP \lab_j(v_j)} \langle P;M_1 \rangle \xtwoheadrightarrow{t'} \langle P';M'\rangle
          \end{gather}
          Depending on $j$, \ref{3-ic-1-9} can be derived with the rules:
          \begin{enumerate}[label=\arabic*.]
            \item \ltsrule{mIn} where $M_1 \equal \condstmtinternal{j}$ for $j \in I$ and $\emptyset\rightvdash v_j:\basetype_j$ and hence from the definition we can infer:
            \begin{gather}
              \label{3-ic-1-10}\textbf{accepted}(\envrcvOP \lab_j(v_j), S)
            \end{gather}
            Since we know \ref{3-ic-1-9}, \ref{3-ic-1-10}, from the assumption we know that $\emptyset\cdot\emptyset\rightvdash P:S$ and from \ref{3-ic-1-8} we know $\langle P;M_1 \rangle \xtwoheadrightarrow{t'} \langle P';M'\rangle$, we can apply the inductive hypothesis of Case \ref{sound-case-2} to obtain $M' \not\equal \textsf{no}^S_P$.

            \item \ltsrule{mEV} where $M_1 \equal \textsf{no}^S_E$ for $j \not\in I$ or $\emptyset\cdot\emptyset\not\rightvdash v_j:\basetype_j$. Since we know that $M_1 \not{\rightarrow}$, then $M'\equal M_1$ and hence $M' \not\equal \textsf{no}^S_P$.
          \end{enumerate}
        \end{enumerate}

        \item For the derivation $\lsem S \rsem \xrightarrow{\envrcvOP \lab(v)} M$ s.t.~$\textbf{accepted}(\envrcvOP \lab(v), S)$ we know that:
        \begin{gather}
          \label{trec-ic-ii-1}\lsem S \rsem \equal \synthbra \text{ where:}\\
          \label{trec-ic-ii-2}S \equal \stbra{i}{I}
        \end{gather}
        Moreover since $\lsem S \rsem \xrightarrow{\envrcvOP \lab(v)} M$ s.t.~$\textbf{accepted}(\envrcvOP \lab(v), S)$ can only be derived with the rule \ltsrule{mIn} we know:
        \begin{gather}
          \label{trec-ic-ii-4}M \equal \condstmtinternal{j} \text{ for some } j \in I \text{ and } \emptyset\rightvdash v_j:\basetype_j
        \end{gather}
        To derive $\langle P;M\rangle \xrightarrow{\mu} \langle P_1;M_1 \rangle$, due to the structures of $P$ and $M$, either of these rules must be applied:
        \begin{enumerate}[label=\alph*.]
          \item\label{trec-ic-ii-a} \ltsrule{iProc} where $M \equal M_1$ and:
          \begin{gather}
            \label{trec-ic-ii-5}P \xrightarrow{\tau} P_1 \text{ by \ltsrule{pRec} where }P_1 \equal Q[\nicefrac{\mu_X.Q}{X}]\\
            \label{trec-ic-ii-6}\langle P;M\rangle \xrightarrow{\mu}\langle P_1;M \rangle \xtwoheadrightarrow{t'} \langle P';M'\rangle
          \end{gather}
          From the assumption we know that $\emptyset\cdot\emptyset\rightvdash P:S$ and since we know \ref{trec-ic-ii-5} and $\textbf{accepted}(\tau, S)$ always holds, by applying subject reduction (Proposition \ref{prop:sub-red}) we obtain:
          \begin{gather}
            \label{thrm:soundness:trec-ic-iii-1}\emptyset\cdot\emptyset\rightvdash P_1:S \text{  since  } \mu\equal\tau
          \end{gather}
          Since we know \ref{thrm:soundness:trec-ic-iii-1} and from the assumption we know $\lsem S \rsem \xrightarrow{\envrcvOP \lab(v)} M$ s.t.~$\textbf{accepted}(\envrcvOP \lab(v), S)$, and also from \ref{trec-ic-ii-6} we know $\langle P_1;M \rangle \xtwoheadrightarrow{t'} \langle P';M'\rangle$, we can apply the inductive hypothesis of Case \ref{sound-case-2} to obtain $M' \not\equal \textsf{no}^S_P$.\label{iproc-1}
          
          \item \ltsrule{iMon} where $M \xrightarrow{\tau} M_1$ which can be derived from the rules \ltsrule{mTru} or \ltsrule{mFls}, and $P\equal P_1$. In both cases, since from the assumption we know that $\emptyset\cdot\emptyset\rightvdash P:S$, and we also know that $\langle P;M_1 \rangle \xtwoheadrightarrow{t'} \langle P';M' \rangle$ where $\lsem S \rsem \xrightarrow{\envrcvOP \lab(v)} M \xrightarrow{\tau} M_1$, the inductive hypothesis for Case \textbf{iii} may be applied, which implies that $M'\not\equal \textsf{no}^S_P$.
        \end{enumerate}

        \item For the derivation $\lsem S \rsem \xrightarrow{\envrcvOP \lab(v)} M'_1 \xrightarrow{\tau} M$ s.t.~$\textbf{accepted}(\envrcvOP \lab(v), S)$, we know, from the previous case, that it can only be derived by rule \ltsrule{mIn} where $M'_1\equal \condstmtinternal{j}$ for some $j \in I$ and $\emptyset\rightvdash v_k:B_k$ (\ref{trec-ic-ii-4}). The actions $M'_1$ can perform are by the rules:
        \begin{enumerate}[label=\alph*.]
            \item \ltsrule{mTru} where $M'_1 \xrightarrow{\tau} M$ and $M\equal \triangleleft \lab_j(v_j).\lsem S_j \rsem[\nicefrac{v_j}{x_j}]$. However, since $M \not{\xrightarrow{\triangleleft \lab_k(v_j)}}$ due to $P\not{\xrightarrow{\triangleright\,\lab_j(v_j)}}$, the only rule that can be applied for the derivation $\langle P;M\rangle \xrightarrow{\mu} \langle P_1;M_1 \rangle$ is \ltsrule{iProc} (similarly to Case ii.\ref{trec-ic-ii-a} of the rule \ltsrule{tRec}), where $M\equal M_1$ and:
            \begin{gather}
              \label{trec-ic-iii-1}P \xrightarrow{\tau} P_1 \text{ by \ltsrule{pRec} where } P_1\equal Q[\nicefrac{\mu_X.Q}{X}]\\
              \label{trec-ic-iii-3}\langle P;M\rangle \xrightarrow{\tau} \langle P_1;M \rangle \xtwoheadrightarrow{t'} \langle P';M'\rangle
            \end{gather}
            Since we know that \ref{trec-ic-iii-1} and $\textbf{accepted}(\tau,S)$ always holds, and from the assumption we know $\emptyset\cdot\emptyset\rightvdash P:S$, we can apply subject reduction (Proposition \ref{prop:sub-red}) to obtain:
            \begin{gather}
              \label{trec-ic-iii-4}\emptyset\cdot\emptyset\rightvdash P_1:S \text{ since } P \xrightarrow{\tau} P_1
            \end{gather}
            Since from the assumption we know $\lsem S \rsem \xrightarrow{\envrcvOP \lab(v)} M'_1 \xrightarrow{\tau} M$ s.t.~$\textbf{accepted}(\envrcvOP \lab(v), S)$ and we also know \ref{trec-ic-iii-4} and $\langle P_1;M\rangle \xtwoheadrightarrow{t'} \langle P';M'\rangle$ from \ref{trec-ic-iii-3} we can apply the inductive hypothesis of Case \ref{sound-case-3} to obtain $M' \not\equal \textsf{no}^S_P$.

            \item \ltsrule{mFls} where $M'_1 \xrightarrow{\tau} \textsf{no}^D_E$. However, $\textsf{no}^D_E \not{\rightarrow}$ and only \ltsrule{iProc} can be applied, which becomes identical to the previous case.
        \end{enumerate}

        \item For the derivation $S' \xrightarrow{\triangleleft \lab_m} S$ and $\lsem S' \rsem \xrightarrow{\triangleright\,\lab_m(v_m)} M$ s.t.~$\textbf{accepted}(\triangleright\,\lab_m(v_m), \overline{S'})$, from the base case of the case \textbf{iv}, $S'\equal \stsel{k}{K}$ and:
        \begin{gather}
          \label{trec-ic-iv-1}M \equal \condstmtexternal{m} \text{ for some } m \in K
        \end{gather}
        Due to the structures of $P$ and $M$, $\langle P;M \rangle \xrightarrow{\mu} \langle P_1;M_1\rangle$ can only be derived with the rules:
        \begin{enumerate}[label=\alph*.]
          \item \ltsrule{iProc} where $M\equal M_1$ and:
          \begin{gather}
            \label{trec-ic-iv-2}P \xrightarrow{\tau} P_1 \text{ by \ltsrule{pRec} where } P_1\equal Q[\nicefrac{\mu_X.Q}{X}]\\
            \label{trec-ic-iv-3}\langle P;M \rangle \xrightarrow{\mu} \langle P_1;M\rangle \xtwoheadrightarrow{t'} \langle P';M'\rangle
          \end{gather}
          Since we know \ref{trec-ic-iv-2} and $\textbf{accepted}(\tau,S)$ always holds, and from the assumption we know $\emptyset\cdot\emptyset\rightvdash P:S$ we can apply subject reduction (Proposition \ref{prop:sub-red}) to obtain:
          \begin{gather}
            \label{trec-ic-iv-4}\emptyset\cdot\emptyset\rightvdash P_1:S \text{ since } P \xrightarrow{\tau} P_1
          \end{gather}
          Since we know \ref{trec-ic-iv-4} and from the assumption we know $S' \xrightarrow{\triangleleft \lab_m} S$, $\lsem S' \rsem \xrightarrow{\triangleright\,\lab_m(v_m)} M$ s.t.~$\textbf{accepted}(\triangleright\,\lab_m(v_m), \overline{S'})$ and from \ref{trec-ic-iv-3} we know $\langle P_1;M\rangle \xtwoheadrightarrow{t'} \langle P';M'\rangle$, therefore we can apply the inductive hypothesis for Case \textbf{iv} to obtain that $M' \not\equal \textsf{no}^S_P$.

          \item\label{sound-trec-iv-b} \ltsrule{iMon}, where $P \equal P_1$ and with the rule:
          \begin{enumerate}[label=\arabic*.]
            \item \ltsrule{mTru} we know that:
            \begin{gather}
              \label{trec-ic-iv-5}\lsem S' \rsem \xrightarrow{\triangleright\,\lab_m(v_m)} M \xrightarrow{\tau} M_1 \text{ where } M_1\equal \envsndOP \lab_m(v_m).\lsem S_m \rsem[\nicefrac{v_m}{x_m}]\\
              \label{trec-ic-iv-6}\langle P;M \rangle \xrightarrow{\mu} \langle P;M_1\rangle \xtwoheadrightarrow{t'} \langle P';M'\rangle
            \end{gather}
            From the assumption we know that $\emptyset\cdot\emptyset\rightvdash P:S$, $S' \xrightarrow{\triangleleft \lab_m} S$ and $\textbf{accepted}(\triangleright\,\lab_m(v_m), \overline{S'})$ and we also know \ref{trec-ic-iv-5} and $\langle P;M_1\rangle \xtwoheadrightarrow{t'} \langle P';M'\rangle$ from \ref{trec-ic-iv-6}, therefore we can apply the inductive hypothesis for Case \textbf{v} to obtain that $M' \not\equal \textsf{no}^S_P$.

            \item \ltsrule{mFls} $M_1\equal \textsf{no}^D_P$. Since we know that $M_1 \not{\rightarrow}$, then $M' \equal M_1$ and hence $M'\not\equal \textsf{no}^S_P$
          \end{enumerate}
        \end{enumerate}
          
        \item\begin{subequations}For the derivation $S' \xrightarrow{\triangleleft \lab_m} S$ and $\lsem S' \rsem \xrightarrow{\triangleright\,\lab_m(v_m)} M'_1 \xrightarrow{\tau} M$ s.t.~$\textbf{accepted}(\triangleright\,\lab_m(v_m), \overline{S'})$, from the previous case we know that $M'_1 \equal \condstmtexternal{m}$ for some $m \in K$, and can transition with the rules:
        \begin{enumerate}[label=\alph*.]
          \item \ltsrule{mTru} where $M'_1 \xrightarrow{\tau} M$ s.t.~$M\equal\envsndOP \lab_m(v_m).\lsem S_m \rsem[\nicefrac{v_m}{x_m}]$. $\langle P;M\rangle \xrightarrow{\mu} \langle P_1;M_1\rangle$ can be derived with the rules:
          \begin{enumerate}[label=\arabic*.]
            \item \ltsrule{iProc} where $M\equal M_1$ and:
            \begin{gather}
              \label{trec-ic-v-1}P \xrightarrow{\tau} P_1 \text{ by \ltsrule{pRec} where }P_1\equal Q[\nicefrac{\mu_X.Q}{X}]\\
              \label{trec-ic-v-2}\langle P;M\rangle \xrightarrow{\tau} \langle P_1;M\rangle \xtwoheadrightarrow{t'} \langle P';M'\rangle
            \end{gather}
            Since we know \ref{trec-ic-v-1} and $\textbf{accepted}(\tau,S)$ always holds and from the assumption we know $\emptyset\cdot\emptyset\rightvdash P:S$ we can apply subject reduction (Proposition \ref{prop:sub-red}) to obtain that:
            \begin{gather}
              \label{trec-ic-v-3}\emptyset\cdot\emptyset\rightvdash P_1:S \text{ since } P \xrightarrow{\tau} P_1
            \end{gather}
            From the assumption we know that $S' \xrightarrow{\triangleleft \lab_m} S$ and $\lsem S' \rsem \xrightarrow{\triangleright\,\lab_m(v_m)} M'_1 \xrightarrow{\tau} M$  and $\textbf{accepted}(\triangleright\,\lab_m(v_m), \overline{S'})$ and we also know \ref{trec-ic-v-3} and $\langle P_1;M \rangle \xtwoheadrightarrow{t'} \langle P';M'\rangle$ from \ref{trec-ic-v-2}, therefore we can apply the inductive hypothesis for Case \textbf{v} to obtain that $M' \not\equal \textsf{no}^S_P$.

            \item\label{sound-trec-v-a-2} \ltsrule{iOut} where $P\equal P_1$ and:
            \begin{gather}
              \label{trec-ic-v-4}M \xrightarrow{\envsndOP \lab_m(v_m)} M_1 \text{ by \ltsrule{mOut} where }M_1\equal \lsem S_m \rsem[\nicefrac{v_m}{x_m}]\\
              \label{trec-ic-v-5}\langle P;M\rangle \xrightarrow{\envsndOP \lab_m(v_m)} \langle P;M_1\rangle \xtwoheadrightarrow{t'} \langle P';M'\rangle
            \end{gather}
            From the assumption we know $S' \xrightarrow{\triangleleft \lab_m} S$ and since from \ref{trec-ic-iv-1} we know $m \in K$, we can infer that $S \equal S_m$ by the rule \ltsrule{sSel} and therefore from \ref{trec-ic-v-4} we know that:
            \begin{gather}
              \label{trec-ic-v-6}M_1\equal \lsem S_m \rsem[\nicefrac{v_m}{x_m}]\equal\lsem S \rsem[\nicefrac{v_m}{x_m}]
            \end{gather}
            From the assumption we know $\emptyset\cdot\emptyset\rightvdash P:S$ to which we can apply the type variable instantiation lemma (Lemma \ref{lemma:type-substitution}) to obtain:
            \begin{gather}
              \label{trec-ic-v-7}\emptyset\cdot\emptyset\rightvdash P:S[\nicefrac{v_m}{x_m}]\text{ since }\emptyset[\nicefrac{v_m}{x_m}]\equal\emptyset
            \end{gather}
            Moreover from \ref{trec-ic-v-6} we know $M_1\equal\lsem S \rsem[\nicefrac{v_m}{x_m}]$ and by the synthesis substitution lemma (Lemma \ref{lemma:synthesis-subs}) we know that:
            \begin{gather}
              \label{trec-ic-v-8}M_1\equal\lsem S \rsem[\nicefrac{v_m}{x_m}]\equal\lsem S [\nicefrac{v_m}{x_m}]\rsem
            \end{gather}
            Therefore, since we know \ref{trec-ic-v-7} and \ref{trec-ic-v-8} and from \ref{trec-ic-v-5} we know $\langle P;M_1\rangle \xtwoheadrightarrow{t'} \langle P';M'\rangle$, we can apply the inductive hypothesis of Case \ref{sound-case-1} to obtain $M'\not\equal \textsf{no}^S_P$.
          \end{enumerate}

          \item\label{sound-trec-v-b} \ltsrule{mFls} where $M'_1 \xrightarrow{\tau} M$ where $M\equal\textsf{no}^D_P$. Since we know that $M\not{\rightarrow}$, then $M\equal M'$ and hence $M'\not\equal \textsf{no}^S_P$.
        \end{enumerate}\end{subequations}
      \end{enumerate}
    \end{subequations}

    \item[\ltsrule{tIf}]\begin{subequations} Therefore we know that $P\equal\textsf{if }A\textsf{ then }Q_1\textsf{ else }Q_2$ and $S \equal S'$, moreover:
      \begin{gather}
        \label{tIf-sound-premise-1}\Theta\cdot\Gamma\rightvdash A:\textsf{Bool}\\
        \label{tIf-sound-premise-2}\Theta\cdot\Gamma\rightvdash Q_1:S \text{ and } \Theta\cdot\Gamma\rightvdash Q_2:S
      \end{gather}
      This case is analogous to the case for the rule \ltsrule{tRec}, where the process type-checks with any session type $S$ and it can perform a $\tau$-transition either with the rule \ltsrule{pTru} or \ltsrule{pFls}. 
    \end{subequations}

    \item[\ltsrule{tPVar}]\begin{subequations} Therefore we know that $P \equal X$ and type-checks with any $S$. Moreover, from the rule we know that $\Theta(X)\equal S$. This case is analogous to the case for the rule \ltsrule{tRec}, however with less cases since $P \not{\xrightarrow{\mu}}$. \end{subequations}

    \item[\ltsrule{tNil}]\begin{subequations}Therefore we know that $P \equal \bm{0}$ and $S \equal \textsf{end}$ and $\lsem S \rsem \equal \bm{0}$. This case holds trivially since $P \not{\xrightarrow{\mu}}$. \end{subequations}
  \end{description}
\end{proof} 
\noindent Recall that originally we wanted to prove Theorem \ref{thrm:sound}. 
To prove this, we use Proposition \ref{prop:soundness}. 
In particular, since we are only interested in Case \ref{sound-case-1}, we can take its contrapositive to obtain Theorem \ref{thrm:sound} as a corollary. 
\section{Proof of \Cref{lem:partial-monitor-completeness}}
\label{sec:proof-partial-mon-completeness}

We follow an approach based on \emph{failing derivations}, inspired by \cite{BHLN12,BHLN17}.
\begin{enumerate}
\item We define the \emph{rule function $\Phi$} (\Cref{def:typing-rule-function})
  that maps a typing judgement
  of the form\; $J = \Theta\cdot\Gamma \rightvdash P:S$ \;to the set containing
  either all judgements in $J$'s premises, or $\true$;
\item we formalise a \emph{failing derivation} of a session typing judgement\;
  $J = \Theta\cdot\Gamma \rightvdash P:S$,
  \;(\Cref{def:failing-derivation}), showing that it exists
  if and only if $J$ is \emph{not} derivable (\Cref{lem:failing-deriv-not-deriv});
\item we formalise a negated typing judgement\;
  $\Theta\cdot\Gamma \nrightvdash P:S$
  \;(\Cref{def:negated-typing})
  and prove that it holds if and only if there is a corresponding 
  failing typing derivation of\; $\Theta\cdot\Gamma \rightvdash P:S$
  \;(\Cref{lem:typing-deriv-non-deriv});
\item finally, we use all ingredients above to prove
  \Cref{lem:partial-monitor-completeness}.
\end{enumerate}

\begin{definition}
  \label{def:typing-rule-function}%
  The \emph{rule function $\Phi$} is defined in \Cref{fig:typing-rule-function}.
\end{definition}

\begin{figure}
  \[
    \Phi\!\left(\Theta\cdot\Gamma \rightvdash P:S\right) \;=\;
    \left\{
      \begin{array}{@{}ll@{}}
        \big\{
          \Theta\cdot\Gamma, x_i:\textsf{B}_i \rightvdash P_i:S_i
        \big\}_{i \in I}
        &
        \text{if }
        \left\{
        \begin{array}{@{}l@{}}
          \exists I,J:\; P = \prcv{j}{I \cup J} \text{ and }\\
          S = \stbraNA{i}{I}
        \end{array}
        \right.
        \\
        \big\{
          \Theta\cdot\Gamma \rightvdash P':S_i
        \big\}
        &
        \text{if }
        \left\{
        \begin{array}{@{}l@{}}
          P = \psndk{l}{a}.P' \text{ and }\\
          S = \stselNA{i}{I} \text{ and }\\
          \exists i \in I:\;
          \texttt{l} \equal \texttt{l}_i \text{ and }
          \Gamma \rightvdash a:\textsf{B}_i
        \end{array}
        \right.
        \\
        \big\{
          \Theta\cdot\Gamma\rightvdash P:S \;,\; \Theta\cdot\Gamma\rightvdash Q:S
        \big\}
        &
        \text{if }
        \left\{
        \begin{array}{@{}l@{}}
          P = \pif{A}{P}{Q} \text{ and }\\
          \Gamma\rightvdash A:\textsf{Bool}
        \end{array}
        \right.
        \\
        \big\{
          \Theta, X:S\cdot\Gamma\rightvdash P':S
        \big\}
        &
        \text{if }
          P = \pmu{X}.P'
        \\
        \big\{
          \true
        \big\}
        &
        \text{if }
          P = \pnil \text{ and } S = \textsf{end}
        \\
        \big\{
          \true
        \big\}
        &
        \text{if }
          \Theta(X)\equal S
        \\
        \emptyset
        &
        \text{in all other cases}
      \end{array}
    \right.
  \]
  \caption[Typing Rule Function]{Rule Function for the Typing Judgement in \Cref{fig:session-typing-rules}}\label{fig:typing-rule-function}
\end{figure}

\begin{definition}[Failing typing derivation]
  \label{def:failing-derivation}
  A \emph{failing derivation of the typing judgement $\Theta\cdot\Gamma \rightvdash P:S$}
  is a finite sequence of judgements $(J_0, J_1, \ldots, J_n)$ such that:
  \begin{enumerate}
    \item for all $i \in 0..n$, $J_i$ is a judgement of the form\;
      $\Theta_i\cdot\Gamma_i \rightvdash P_i:S_i$;
    \item $J_0 = \Theta\cdot\Gamma \rightvdash P:S$;
    \item for all $i \in 1..n$, $J_i \in \Phi(J_{i-1})$;
    \item $\Phi(J_n) = \emptyset$.
  \end{enumerate}
\end{definition}
  
\begin{proposition}
  \label{lem:failing-deriv-not-deriv}
  \begin{enumerate}
  \item\label{lem:non-deriv-implies-failing}
    If\: $\Theta\cdot\Gamma \rightvdash P:S$ is \emph{not} derivable,
    then there is a failing derivation of $\Theta\cdot\Gamma \rightvdash P:S$;
  \item\label{lem:failing-implies-non-deriv}
    if there is a failing derivation of $\Theta\cdot\Gamma \rightvdash P:S$,  then $\Theta\cdot\Gamma \rightvdash P:S$ is \emph{not} derivable.
  \end{enumerate}
\end{proposition}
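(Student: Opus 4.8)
The statement to prove is Proposition~\ref{lem:failing-deriv-not-deriv}, which links non-derivability of a typing judgement $\Theta\cdot\Gamma \rightvdash P:S$ with the existence of a failing derivation as per Definition~\ref{def:failing-derivation}. The plan is to prove the two implications separately, exploiting the fact that the rule function $\Phi$ (Figure~\ref{fig:typing-rule-function}) is a faithful ``one-step unfolding'' of the typing rules in Figure~\ref{fig:session-typing-rules}: for every judgement $J$, $\Phi(J)$ returns the (unique) set of premises of the only typing rule that could possibly conclude $J$ based on the syntactic shape of $P$ and $S$ (and on the side conditions checkable at that point), or $\{\true\}$ when $J$ is an axiom instance, or $\emptyset$ when no rule applies at all. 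A key preliminary observation I would record is \emph{determinism of $\Phi$}: the syntax of $P$ together with $S$ picks out at most one typing rule, so $\Phi(J)$ is well-defined, and crucially $J$ is derivable if and only if either $\Phi(J) = \{\true\}$ (axiom case), or $\Phi(J) \neq \emptyset$ and every judgement in $\Phi(J)$ is derivable. This ``local'' characterisation of derivability is the engine for both directions.

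For part~\ref{lem:failing-implies-non-deriv} (failing derivation $\Rightarrow$ not derivable), I would argue by contradiction. Suppose $(J_0, \ldots, J_n)$ is a failing derivation with $J_0 = \Theta\cdot\Gamma \rightvdash P:S$, but $J_0$ \emph{is} derivable. I would then prove, by induction on $i$, that every $J_i$ in the sequence is derivable: the base case is the hypothesis on $J_0$, and for the inductive step, if $J_{i-1}$ is derivable then by the local characterisation above $\Phi(J_{i-1}) \neq \emptyset$ and \emph{all} its elements are derivable; since $J_i \in \Phi(J_{i-1})$, it is derivable. In particular $J_n$ is derivable, so again by the local characterisation $\Phi(J_n) \neq \emptyset$ or $\Phi(J_n) = \{\true\}$ --- either way $\Phi(J_n) \neq \emptyset$, contradicting clause~(4) of Definition~\ref{def:failing-derivation}. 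One subtlety to handle carefully is that $\true$ is a formal token, not a typing judgement, so when $\Phi(J_{i-1}) = \{\true\}$ we cannot have $J_i \in \Phi(J_{i-1})$ being a genuine judgement; I would note that in that case clause~(4) already forces $n = i-1$, so this situation does not arise mid-sequence.

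For part~\ref{lem:non-deriv-implies-failing} (not derivable $\Rightarrow$ failing derivation exists), I would proceed by a well-founded induction. Since the typing rules are syntax-directed and each premise produced by $\Phi$ strictly decreases the size of the process term $P$ (the branching, selection, if, and recursion cases all strip a top-level constructor; recursion unfolding does \emph{not} happen here since $\Phi$ for $\pmu{X}.P'$ moves to the body $P'$ under an extended $\Theta$), the relation ``$J' \in \Phi(J)$'' is well-founded on judgements ordered by process size. Given a non-derivable $J_0$: if $\Phi(J_0) = \emptyset$, then $(J_0)$ is itself a failing derivation and we are done. Otherwise $\Phi(J_0) \neq \emptyset$ and $\Phi(J_0) \neq \{\true\}$ (else $J_0$ would be derivable), so by the local characterisation at least one $J_1 \in \Phi(J_0)$ is not derivable; apply the induction hypothesis to $J_1$ (which has strictly smaller process size) to get a failing derivation $(J_1, \ldots, J_n)$, and prepend $J_0$ to obtain a failing derivation of $J_0$. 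I would double-check that the three structural clauses of Definition~\ref{def:failing-derivation} (all $J_i$ of judgement form, $J_1 \in \Phi(J_0)$, each subsequent step a premise) are preserved by prepending, which is immediate.

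The main obstacle I anticipate is not the inductive scaffolding but rather pinning down the ``local characterisation of derivability'' cleanly --- i.e.\ the claim that $\Theta\cdot\Gamma \rightvdash P:S$ is derivable iff $\Phi$ returns $\{\true\}$ or returns a non-empty set all of whose members are derivable. This requires a careful case analysis matching each clause in the definition of $\Phi$ (Figure~\ref{fig:typing-rule-function}) against the corresponding rule in Figure~\ref{fig:session-typing-rules}, and in particular verifying that $\Phi$ correctly handles the side conditions: e.g.\ in the \ltsrule{tBra} case $\Phi$ already requires $P = \prcv{j}{I \cup J}$ with matching labels before producing premises (so a label mismatch lands in the ``all other cases'' $\emptyset$ branch), and in \ltsrule{tSel} the existence of $i \in I$ with $\texttt{l} = \texttt{l}_i$ and $\Gamma \rightvdash a:\textsf{B}_i$ is a guard on $\Phi$ rather than a premise. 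I would also need to note that $\Phi$ as displayed omits the assertion well-formedness premises $\Gamma, x_j:\textsf{B}_j \rightvdash A_i:\textsf{Bool}$ that appear in the augmented rules of Figure~\ref{fig:session-typing-rules}; since the theorem statement works with the main-text rules (Figure~\ref{fig:session-typing-rules-noa}, trivial assertions), this omission is harmless, but I would flag it explicitly so the correspondence is airtight.
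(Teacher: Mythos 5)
Your proposal is correct, and it rests on the same pivot as the paper's proof: the determinism (syntax-directedness) of the typing rules, which makes $\Phi(J)$ the premises of the \emph{unique} rule that could conclude $J$. The differences are in the inductive scaffolding. For item~2 the paper inducts \emph{backwards} along the failing derivation, showing each $J_i$ non-derivable from the non-derivability of $J_{i+1}$ (using determinism to rule out alternative rules), whereas you run the contrapositive: assume $J_0$ derivable and push derivability \emph{forwards} to $J_n$, contradicting $\Phi(J_n)=\emptyset$. These are mirror images of one another; yours makes the appeal to determinism slightly more explicit by packaging it as a ``local characterisation of derivability,'' which is a lemma the paper uses only implicitly. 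For item~1 the paper inducts on a branch of a ``tentative derivation'' that fails to reach an axiom — a notion it never pins down — while you do a well-founded induction on process size (justified because every clause of $\Phi$ strips a top-level constructor, including the recursion clause, which descends into the body rather than unfolding). Your version is more self-contained and avoids having to formalise tentative derivations; the paper's version is closer to the intuition of ``follow the branch where type-checking gets stuck.'' Two small points: your remark that $\Phi(J_{i-1})=\{\true\}$ mid-sequence is excluded by clause~(4) is slightly misattributed — it is clauses~(1) and~(3) that forbid a non-judgement token from appearing as $J_i$ — but the conclusion stands; and your observation that $\Phi$ omits the assertion well-formedness premises is a genuine mismatch with the augmented rules of the appendix, harmless for the main-text rules, which the paper does not flag.
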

\begin{proof}
  Item~\ref{lem:non-deriv-implies-failing} is proved by considering
  a tentative derivation of $\Theta\cdot\Gamma \rightvdash P:S$,
  that in at least one branch does not reach an axiom
  (at least one such tentative derivations exists by hypothesis).
  Then, by induction on that branch of the tentative derivation,
  we construct a failing derivation,
  proceding by cases and applying the definition of the rule function $\Phi$
  (\Cref{fig:typing-rule-function}).
  \fxASwarning{Is it clear enough? Expand the proof?}%
  
  Item~\ref{lem:failing-implies-non-deriv} is proved by induction
  on the length $n$ of the failing derivation $(J_0, J_1, \ldots, J_n)$
  with\; $J_0 = \Theta\cdot\Gamma \rightvdash P:S$,
  \;proceeding in decreasing order toward $0$,
  and showing that each $J_i$ ($i \in 1..n$) is not derivable:
  \begin{itemize}
  \item for the base case $n$, we know that $\Phi(J_n) = \emptyset$,
    i.e., we are in the last clause of \Cref{fig:typing-rule-function}.
    By expanding all possible shapes of the judgement\;
    $J_n = \Theta_n\cdot\Gamma_n \rightvdash P_n:S_n$,
    \;we verify that no rule in \Cref{fig:session-typing-rules}
    can support it, hence we conclude that $J_n$ is not derivable;
    \fxASwarning{Is it clear enough? Expand the proof?}%
  \item for the inductive case $m < n$,
    we know that $\Phi(J_m) \neq \emptyset$,
    and there is a judgement\;
    $J_{m+1} = \Theta_{m+1}\cdot\Gamma_{m+1} \rightvdash P_{m+1}:S_{m+1} \in \Phi(J_m)$
    \;that is part of the failing derivation,
    and (if true) would be a premise to derive $J_m$.
    However, by the induction hypothesis, we have that $J_{m+1}$ is not derivable.
    Now, observe that the rules in \Cref{fig:session-typing-rules} are deterministic:
    i.e., there is no alternative rule applicable to derive $J_m$,
    besides the one having the (non-derivable) premise $J_{m+1}$.
    Hence, we conclude that $J_m$ is not derivable.
    \fxASwarning{Is it clear enough? Expand the proof?}%
  \end{itemize}
\end{proof}

\begin{definition}[Negated typing judgement]
  \label{def:negated-typing}
  The judgement\; $\Theta\cdot\Gamma \nrightvdash P:S$
  \;is inductively defined by the rules in \Cref{fig:negated-session-typing-rules}
  --- where, for all boolean predicates $A$,\;
  $\Gamma \nrightvdash A : \textsf{Bool}$ \;holds if and only if\;
  $\Gamma \rightvdash A : \textsf{Bool}$ \;is \emph{not} derivable.
\end{definition}

\begin{figure}
  \small
  \textbf{Negated Identifier Typing}
  \bigskip\\
  \centerline{
    \inference[\text{[\textsc{nVar}]}]{\Gamma(x)\equal \basetypek{1} \qquad \basetypek{1} \neq \basetypek{2}}{\Gamma \nrightvdash x : \basetypek{2}}
    \hspace{1cm}
    \inference[\text{[\textsc{nVal}]}]{v \not\in \basetype}{\Gamma \nrightvdash v : \basetype}
  }\bigskip\\
  \textbf{Negated Process Typing}
  \medskip\\
  \centerline{
    \inference[\text{[\textsc{nBra0}]}]{\forall I, \{\texttt{l}_i, \textsf{B}_i, S_i\}_{i \in I} \quad S \neq \stbraNA{i}{I}}{\Theta\cdot\Gamma \nrightvdash \prcv{j}{J}: S}
  }\medskip\\
  \centerline{
    \inference[\text{[\textsc{nBra1}]}]{\exists k \in I \cdot \forall j \in J \qquad \texttt{l}_k \not\equal \texttt{l}_j}{\Theta\cdot\Gamma \nrightvdash \prcv{j}{J}: \stbraNA{i}{I}}
  }\medskip\\
  \centerline{
    \inference[\text{[\textsc{nBra2}]}]{\exists i \in I \quad \Theta\cdot\Gamma, x_i:\textsf{B}_i \nrightvdash P_i:S_i}{\Theta\cdot\Gamma \nrightvdash \prcv{i}{I \cup J}: \stbraNA{i}{I}}
  }\medskip\\
  \medskip\centerline{
    \inference[\text{[\textsc{nSel0}]}]{\forall I, \{\texttt{l}_i, \textsf{B}_i, S_i\}_{i \in I} \quad S \neq \stselNA{i}{I}}{\Theta\cdot\Gamma \nrightvdash \psndk{l}{a}.P: S}
  }\medskip\\
  \medskip\centerline{
    \inference[\text{[\textsc{nSel1}]}]{\forall i \in I \qquad \texttt{l} \not\equal \texttt{l}_i}{\Theta\cdot\Gamma \nrightvdash \psndk{l}{a}.P: \stselNA{i}{I}}
  }\medskip\\
  \medskip\centerline{
    \inference[\text{[\textsc{nSel2}]}]{\exists k \in I \qquad \texttt{l} \equal \texttt{l}_k \qquad \Gamma\nrightvdash a:\textsf{B}_k}{\Theta\cdot\Gamma \nrightvdash \psndk{l}{a}.P: \stselNA{i}{I}}
  }\medskip\\
  \medskip\centerline{
    \inference[\text{[\textsc{nSel3}]}]{\exists i \in I \qquad \texttt{l} \equal \texttt{l}_i \qquad \Gamma\rightvdash a:\textsf{B}_i \qquad \Theta\cdot\Gamma \nrightvdash P:S_i}{\Theta\cdot\Gamma \nrightvdash \psndk{l}{a}.P: \stselNA{i}{I}}
  }\medskip\\
  \centerline{
    \inference[\text{[\textsc{nRec}]}]{\Theta, X:S\cdot\Gamma\nrightvdash P:S}{\Theta\cdot\Gamma \nrightvdash \pmu{X}.P:S}
    \hspace{1cm}
    \inference[\text{[\textsc{nPVar}]}]{\Theta(X) \neq S}{\Theta\cdot\Gamma \nrightvdash X :S}
  }\bigskip\\
  \centerline{
    \inference[\text{[\textsc{nIf1}]}]{\Gamma\nrightvdash A:\textsf{Bool}}{\Theta\cdot\Gamma \nrightvdash \pif{A}{P}{Q}: S}
  }\bigskip\\
  \centerline{
    \inference[\text{[\textsc{nIf2}]}]{\Gamma\rightvdash A:\textsf{Bool}&\Theta\cdot\Gamma\nrightvdash P:S & \Theta\cdot\Gamma\rightvdash Q:S}{\Theta\cdot\Gamma \nrightvdash \pif{A}{P}{Q}: S}
    \qquad
    \inference[\text{[\textsc{nIf3}]}]{\Gamma\rightvdash A:\textsf{Bool}&\Theta\cdot\Gamma\rightvdash P:S & \Theta\cdot\Gamma\nrightvdash Q:S}{\Theta\cdot\Gamma \nrightvdash \pif{A}{P}{Q}: S}
  }\bigskip\\
  \centerline{
    \inference[\text{[\textsc{nIf4}]}]{\Gamma\rightvdash A:\textsf{Bool}&\Theta\cdot\Gamma\nrightvdash P:S & \Theta\cdot\Gamma\nrightvdash Q:S}{\Theta\cdot\Gamma \nrightvdash \pif{A}{P}{Q}: S}
  }\bigskip\\
  \centerline{
    \inference[\text{[\textsc{nNil}]}]{S \neq \textsf{end}}{\Theta\cdot\Gamma \nrightvdash \pnil: S}
  }
  \caption[Negated Session Typing Rules]{Negated Session Typing Rules}\label{fig:negated-session-typing-rules}
\end{figure}

\begin{lemma}
  \label{lem:typing-deriv-non-deriv}%
  For all $\Theta, \Gamma, P, S$,\;
  there is a failing derivation of $\Theta\cdot\Gamma \rightvdash P:S$
  \;if and only if\;
  $\Theta\cdot\Gamma \nrightvdash P:S$ is derivable.
\end{lemma}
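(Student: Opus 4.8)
The plan is to prove both implications by \emph{structural induction on the process $P$}, exploiting the fact that every recursive clause of the rule function $\Phi$ (\Cref{fig:typing-rule-function}) and every premise of the negated typing rules (\Cref{fig:negated-session-typing-rules}) that mentions a sub-process refers to a \emph{strict} subterm of $P$ --- including the $\pmu{X}.P'$ case, where $P'$ is smaller than $\pmu{X}.P'$. Taking the structure of $P$ as induction measure, rather than the length of the failing derivation, is the key design decision, for the reason explained at the end. Throughout, I adopt the standing conventions that: (i) $\Gamma$ assigns a base type to every free value variable occurring in $P$, so that for each identifier $a$ in $P$ exactly one of $\Gamma\rightvdash a:\textsf{B}$ and $\Gamma\nrightvdash a:\textsf{B}$ holds (immediate from \ltsrule{tVar}/\ltsrule{tVal} vs.\ \ltsrule{nVar}/\ltsrule{nVal}), and by definition the analogous complementarity holds for predicates $A$ and $\textsf{Bool}$; and (ii) session types are taken in unfolded form (possible since recursion is guarded), so the head constructor of $S$ is never $\textsf{rec}$ --- this matches the fact that $\Phi$ and the negated rules have no $\strec{X}.S$ clause.

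For the direction ``$\Theta\cdot\Gamma \nrightvdash P:S$ implies there is a failing derivation'', I would proceed by case analysis on the last rule of the negated derivation. The rules carrying no negated sub-\emph{process} premise --- \ltsrule{nBra0}, \ltsrule{nBra1}, \ltsrule{nSel0}, \ltsrule{nSel1}, \ltsrule{nSel2}, \ltsrule{nPVar}, \ltsrule{nNil}, \ltsrule{nIf1} --- all certify a shape of $P$ and $S$ for which none of the clauses of $\Phi$ fires; here I would check $\Phi(\Theta\cdot\Gamma\rightvdash P:S)=\emptyset$ (for \ltsrule{nSel2} using pairwise distinctness of the $\texttt{l}_i$ to see that the candidate index is unique, and convention (i) to turn $\Gamma\nrightvdash a:\textsf{B}_k$ into non-derivability of $\Gamma\rightvdash a:\textsf{B}_k$), so that the length-$0$ sequence $(\Theta\cdot\Gamma\rightvdash P:S)$ is itself a failing derivation. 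For the rules with a negated sub-process premise --- \ltsrule{nBra2}, \ltsrule{nSel3}, \ltsrule{nRec}, \ltsrule{nIf2}, \ltsrule{nIf3}, \ltsrule{nIf4} --- I would apply the induction hypothesis to the relevant subterm to get a failing derivation of the corresponding sub-judgement, observe that this sub-judgement lies in $\Phi(\Theta\cdot\Gamma\rightvdash P:S)$ (the side conditions of the negated rule are exactly what makes the appropriate clause of $\Phi$ fire), and prepend $\Theta\cdot\Gamma\rightvdash P:S$.

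For the converse, given a failing derivation $(J_0,\dots,J_n)$ of $J_0=\Theta\cdot\Gamma\rightvdash P:S$, I would distinguish $n=0$ from $n>0$. If $n=0$ then $\Phi(J_0)=\emptyset$, so $P$ and $S$ fall outside every clause of \Cref{fig:typing-rule-function}, and a routine case split on the shape of $P$ matches one of the axiom-like negated rules; note $P=\pmu{X}.P'$ cannot occur ($\Phi$ of a recursion is never empty) and a judgement with $\Phi=\{\true\}$ cannot start a failing derivation (condition~(i) of \Cref{def:failing-derivation} forbids $\true$ among the $J_i$, and condition~(iv) forbids $\Phi(J_n)=\{\true\}$). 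If $n>0$ then $J_1\in\Phi(J_0)$ and $(J_1,\dots,J_n)$ is a failing derivation of $J_1$; applying the induction hypothesis to the subterm named in $J_1$ and reading off the relevant side conditions of $\Phi$, I would conclude with \ltsrule{nBra2}, \ltsrule{nSel3}, or \ltsrule{nRec}.

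The genuine obstacle is the conditional case $P=\pif{A}{P_1}{Q_1}$ in this last direction. The failing derivation descends into only one of the premises $\Theta\cdot\Gamma\rightvdash P_1:S$ and $\Theta\cdot\Gamma\rightvdash Q_1:S$, say the first, giving $\Theta\cdot\Gamma\nrightvdash P_1:S$ by the induction hypothesis on $P_1$; but choosing between \ltsrule{nIf2} and \ltsrule{nIf4} also requires knowing the status of $Q_1$. Here I would invoke \Cref{lem:failing-deriv-not-deriv}: by excluded middle either $\Theta\cdot\Gamma\rightvdash Q_1:S$ is derivable, so \ltsrule{nIf2} applies, or it is not, in which case the first part of \Cref{lem:failing-deriv-not-deriv} supplies a failing derivation of it and the induction hypothesis \emph{on the strict subterm $Q_1$} yields $\Theta\cdot\Gamma\nrightvdash Q_1:S$, so \ltsrule{nIf4} applies (symmetrically if the derivation descended into $Q_1$). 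This is precisely why the induction must be on the structure of $P$: the failing derivation of $Q_1$ obtained from \Cref{lem:failing-deriv-not-deriv} may be arbitrarily long, yet $Q_1$ is still a strict subterm of $\pif{A}{P_1}{Q_1}$, so the induction hypothesis remains applicable.
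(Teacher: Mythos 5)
Your proof is correct, and the overall decomposition is the same as the paper's: both directions proceed by matching the clauses of $\Phi$ against the rules of \Cref{fig:negated-session-typing-rules}, treating the ``axiom-like'' negated rules as the $\Phi(J)=\emptyset$ cases and the inductive ones as the $\Phi(J)\neq\emptyset$ cases. The genuine difference is the induction measure. The paper proves the direction ``failing derivation $\Rightarrow$ $\nrightvdash$ derivable'' by induction on the position $m$ in the failing derivation $(J_0,\dots,J_n)$, and the converse by induction on the derivation of $\Theta\cdot\Gamma\nrightvdash P:S$; you use structural induction on $P$ for both. For most cases the two measures are interchangeable, but in the conditional case your choice does real work. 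The paper's failing derivation descends into only one branch of $\pif{A}{P'}{Q'}$ and then concludes ``by rule \ltsrule{nIf2} or \ltsrule{nIf4}'' without establishing which premise holds for the untraversed branch $Q'$: \ltsrule{nIf2} needs $\Theta\cdot\Gamma\rightvdash Q':S$ derivable, \ltsrule{nIf4} needs $\Theta\cdot\Gamma\nrightvdash Q':S$, and obtaining the latter from non-derivability requires re-applying the lemma to a fresh failing derivation of $Q'$ of unbounded length --- which the induction on $n$ does not license. Your excluded-middle split combined with \Cref{lem:failing-deriv-not-deriv} and the induction hypothesis on the \emph{strict subterm} $Q'$ closes exactly this gap, at the cost of the (harmless) standing conventions (i) and (ii) you state up front. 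So your route is not just an alternative; it is the more careful one for the conditional case, while the paper's measure is slightly more economical everywhere else.
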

\begin{proof}
  ($\implies$)
  Assume a failing derivation $(J_0, J_1, \ldots, J_n)$
  where $J_0 = \Theta\cdot\Gamma \rightvdash P:S$.
  We now prove that, for all $i \in 1..n$,
  the judgement\; $\Theta_i\cdot\Gamma_i \nrightvdash P_i:S_i$ is derivable
  by the rules in \Cref{fig:negated-session-typing-rules}.
  We proceed by induction on $n$, in decreasing order toward $0$.
  \begin{itemize}
  \item Base case $n$.\quad
    Since $\Phi(J_m) = \emptyset$,
    by  have the following sub-cases for $P_m$.
    \begin{itemize}
    \item $P_n$ is an input process $\prcv{j}{J}$.\quad
      Since $\Phi(J_n) = \emptyset$, 
      by \Cref{fig:typing-rule-function} we have the following
      (non mutually exclusive) possibilities:
      \begin{itemize}
      \item $S_n$ is \emph{not} an external choice.
        Then, we obtain\; $\Theta_n\cdot\Gamma_i \nrightvdash P_n:S_n$ \;by rule [\textsc{nBra0}];
      \item $S_n = \stbraNA{i}{I}$ and
        $\forall i \in I$, $\not\exists j \in J$ such that $\texttt{l}_j \equal \texttt{l}_i$.
        Then, we obtain\; $\Theta_n\cdot\Gamma_i \nrightvdash P_n:S_n$ \;by rule [\textsc{nBra1}];
      \end{itemize}
    \item $P_n$ is an output process $\psndk{l}{a}.P'$.\quad
      Since $\Phi(J_n) = \emptyset$, 
      by \Cref{fig:typing-rule-function} we have the following
      (non mutually exclusive) possibilities:
      \begin{itemize}
      \item $S_n$ is \emph{not} an internal choice.
        Then, we obtain\; $\Theta_n\cdot\Gamma_n \nrightvdash P_n:S_n$ \;by rule [\textsc{nSel0}];
      \item $S_n = \stselNA{i}{I}$ and
        $\not\exists i \in I$ such that $\texttt{l}_i \equal \texttt{l}$.
        Then, we obtain\; $\Theta_n\cdot\Gamma_n \nrightvdash P_n:S_n$ \;by rule [\textsc{nSel1}];
      \item $S_n = \stselNA{i}{I}$ and
        $\exists i \in I$ such that $\texttt{l}_i \equal \texttt{l}$,
        but $\Gamma\nrightvdash a:\textsf{B}_i$.
        Then, we obtain\; $\Theta_n\cdot\Gamma_n \nrightvdash P_n:S_n$ \;by rule [\textsc{nSel2}];
      \end{itemize}
    \item $P_n = \pif{A}{P'}{Q'}$.\quad
      Since $\Phi(J_n) = \emptyset$, 
      by \Cref{fig:typing-rule-function} we have that
      $\Gamma \rightvdash A:\textsf{Bool}$ is not derivable:
      this implies $\Gamma \nrightvdash A:\textsf{Bool}$
      Thus, we obtain\; $\Theta_n\cdot\Gamma_n \nrightvdash P_n:S_n$ \;by rule [\textsc{nIf1}];
    \item $P_n = X$.\quad
      Since $\Phi(J_n) = \emptyset$, 
      by \Cref{fig:typing-rule-function} we know that $\Theta(X) \not\equal S$.
      Thus, we obtain\; $\Theta_n\cdot\Gamma_n \nrightvdash P_n:S_n$ \;by rule [\textsc{nPVar}];
    \item $P_n = \pnil$.\quad
      Since $\Phi(J_n) = \emptyset$, 
      by \Cref{fig:typing-rule-function} we know that $S \not\equal \textsf{end}$.
      Thus, we obtain\; $\Theta_n\cdot\Gamma_n \nrightvdash P_n:S_n$ \;by rule [\textsc{nNil}].
    \end{itemize}
  \item Inductive case $m < n$.\quad
    This implies $\Phi(J_m) \neq \emptyset$, and
    we have the following sub-cases for $P_m$.
    \begin{itemize}
    \item $P_m$ is an input process $\prcv{j}{J}$.\quad
      Since $\Phi(J_m) \neq \emptyset$,
      by \Cref{fig:typing-rule-function}
      we know that $S_n = \stbraNA{i}{I}$ with $J \subseteq I$, and
      $\exists i \in I$ such that
      $J_{m+1} = \Theta_m\cdot\Gamma_m, x_i:\textsf{B}_i \rightvdash P_i:S_i$.
      Therefore, by the induction hypothesis, we have
      $\Theta_m\cdot\Gamma_m, x_i:\textsf{B}_i \nrightvdash P_i:S_i$.
      Thus, we obtain\; $\Theta_m\cdot\Gamma_m \nrightvdash P_m:S_m$ \;by rule [\textsc{nBra2}];
    \item $P_m$ is an output process $\psndk{l}{a}.P'$.\quad
      Since $\Phi(J_m) \neq \emptyset$,
      by \Cref{fig:typing-rule-function}
      we know that $S_n = \stselNA{i}{I}$, and
      $\exists i \in I$ such that $\texttt{l}_i \equal \texttt{l}$,
      and $\Gamma \rightvdash a:\textsf{B}_i$,
      and $J_{m+1} = \Theta_m\cdot\Gamma_m \rightvdash P':S_i$.
      Therefore, by the induction hypothesis, we have
      $\Theta_m\cdot\Gamma_m \nrightvdash P':S_i$.
      Thus, we obtain\; $\Theta_m\cdot\Gamma_m \nrightvdash P_m:S_m$ \;by rule [\textsc{nSel3}];
    \item $P_m = \pif{A}{P'}{Q'}$.\quad
      Since $\Phi(J_m) \neq \emptyset$,
      by \Cref{fig:typing-rule-function}
      we know that $J_{m+1} = \Theta_m\cdot\Gamma_m \rightvdash R:S_m$,
      where $R$ is either $P'$ or $Q'$:
      \begin{itemize}
      \item if $R = P'$, 
        by the induction hypothesis we have
        $\Theta_m\cdot\Gamma_m \nrightvdash P':S_m$.
        Thus, we obtain\; $\Theta_m\cdot\Gamma_m \nrightvdash P_m:S_m$ \;by rule [\textsc{nIf2}] or [\textsc{nIf4}];
      \item if $R = Q'$, 
        the proof is analogous, except that we conclude by rule [\textsc{nIf3}] or [\textsc{nIf4}];
      \end{itemize}
    \item $P_m = \pmu{X}.P'$.\quad
      Since $\Phi(J_m) \neq \emptyset$,
      by \Cref{fig:typing-rule-function}
      we know that $J_{m+1} = \Theta_m,X:S:\cdot\Gamma_m \rightvdash P':S_m$;
      hence, by the induction hypothesis we have
      $\Theta_m,X:S_m\cdot\Gamma_m \nrightvdash P':S_m$.
      Thus, we obtain $\Theta_m\cdot\Gamma_m \nrightvdash P_m:S_m$ \;by rule [\textsc{nRec}].
    \end{itemize}
  \end{itemize}

  ($\impliedby$)
  Assume that\; $\Theta\cdot\Gamma \nrightvdash P:S$ \;is derivable.
  We prove the statement by induction on the derivation of
  \; $\Theta\cdot\Gamma \nrightvdash P:S$,
  \;by constructing a failing derivation $(J_0, J_1, \ldots, J_n)$
  where\; $J_0 = \Theta\cdot\Gamma \rightvdash P:S$.
  \begin{itemize}
  \item Base cases [\textsc{nBra0}], [\textsc{nBra1}], [\textsc{nSel0}], [\textsc{nSel1}], [\textsc{nSel2}], [\textsc{nPVar}], [\textsc{nIf1}], [\textsc{nNil}].\quad
    In all these cases, by the last clause in \Cref{fig:typing-rule-function},
    we have\; $\Phi\!\left(\Theta\cdot\Gamma \rightvdash P:S\right) = \emptyset$,
    \;and thus, there is a failing derivation of\; $\Theta\cdot\Gamma \rightvdash P:S$ \;with just one element.
  \item Inductive cases [\textsc{nBra2}], [\textsc{nSel3}], [\textsc{nRec}], [\textsc{nIf2}], [\textsc{nIf3}], [\textsc{nIf4}].\quad
    Let $R$ be the derivation rule under consideration:
    in all these cases for $R$, one of the first 4 cases of \Cref{fig:typing-rule-function} applies, hence we have $\Phi(J_0) = \Phi\!\left(\Theta\cdot\Gamma \rightvdash P:S\right) =
    \left\{\Theta_i\cdot\Gamma_i \rightvdash P_i:S_i\right\}_{i \in I}$
    (for some $I \neq \emptyset$).
    Observe that at least one of the judgements contained in $\Phi(J_0)$
    corresponds to a premise of\; $\Theta\cdot\Gamma \nrightvdash P:S$
    \;by rule $R$: let such a judgement be\;
    $\Theta_k\cdot\Gamma_k \rightvdash P_k:S_k$
    \;(for some $k \in I$), and the corresponding premise for $R$ be\;
    $\Theta_k\cdot\Gamma_k \nrightvdash P_k:S_k$.
    \;Therefore, by the induction hypothesis,
    there is a failing derivation $(J'_0, J'_1, \ldots, J'_m)$
    where\; 
    $J'_0 = \Theta_k\cdot\Gamma_k \rightvdash P_k:S_k$.
    \;Hence, we can construct a failing derivation for\;
    $J_0 = \Theta\cdot\Gamma \rightvdash P:S$ \;as
    $(J_0, J'_0, J'_1, \ldots, J'_m)$.    
  \end{itemize}
\end{proof}

\begin{proposition}
  \label{lem:neg-typing-unfolding}%
  Assume that $\Theta\cdot\Gamma \nrightvdash \mu_{X}.P : S$
  \;is derived with one or more instances of rule [\textsc{nPVar}] for variable $X$.
  Then, there is a process $\mu_{X}.P'$ where $P'$ is an unfolding of $P$,
  \fxASwarning{To be pedantic, this is not an strictly speaking an unfolding}
  such that
  \(
    \Theta\cdot\Gamma \nrightvdash
    \mu_{X}.P':S
  \)
  is derived by \emph{not} using rule [\textsc{nPVar}] on $X$.
\end{proposition}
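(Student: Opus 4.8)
The plan is to strip the outer \text{[\textsc{nRec}]} and then \emph{locally} repair every leaf of the given negated derivation that concludes by \text{[\textsc{nPVar}]} on $X$: each such leaf arises because, at some occurrence of $X$ in $P$, the ``residual'' type reached along the process differs from the type currently bound to $X$, and I would repair it by unfolding that occurrence of $X$ far enough to turn the type mismatch into a direct, structural mismatch between the process and the type.

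Concretely, let $\mathcal{D}$ be the given derivation of $\Theta\cdot\Gamma\nrightvdash\mu_X.P:S$. Its first step is \text{[\textsc{nRec}]}, with premise $\mathcal{D}_P$ deriving $\Theta,X{:}S\cdot\Gamma\nrightvdash P:S$. The negated rules branch only at \text{[\textsc{nIf4}]} (two negated premises), whose negated branches live in the two \emph{disjoint} sub-processes of a conditional; hence the occurrences of $X$ that are reached by \text{[\textsc{nPVar}]}-on-$X$ leaves of $\mathcal{D}_P$ are pairwise distinct positions of $P$ and may be repaired independently. By the guardedness assumption, each such leaf, concluding $\Theta'{,}X{:}S\cdot\Gamma'\nrightvdash X:S'$, is reached only after at least one input/output prefix has been consumed (via \text{[\textsc{nBra2}]}/\text{[\textsc{nSel3}]}); thus $S'$ is obtained from $S$ by stripping the consumed prefixes (up to equi-recursive unfolding), and, since \text{[\textsc{nPVar}]} applied, $S'\not\equiv S$, where $\equiv$ is the equi-recursive equality used by the typing rules.

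The heart of the argument is then the following claim: since $S$ and $S'$ are \emph{distinct regular} session types, they disagree at a finite depth $d$ (different labels, a choice versus a non-choice, a differing payload base type, or $\textsf{end}$ versus a choice); and because the original derivation matched the process against $S$, with no mismatch, all the way down to this occurrence of $X$, replacing that occurrence by the $k$-fold partial unfolding of $\mu_X.P$ for $k$ large enough to traverse $d$ type-levels along this branch yields a process in which re-checking against $S'$ now fails by surfacing that depth-$d$ disagreement as one of the structural rules \text{[\textsc{nBra0}]}, \text{[\textsc{nBra1}]}, \text{[\textsc{nSel0}]}, \text{[\textsc{nSel1}]}, \text{[\textsc{nSel2}]}, \text{[\textsc{nNil}]} (or an identifier rule \text{[\textsc{nVar}]}/\text{[\textsc{nVal}]}) --- never again by \text{[\textsc{nPVar}]} on $X$ within those copies, because the ``disagreement depth'' strictly decreases as one descends the freshly unfolded copies. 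Performing this replacement at every \text{[\textsc{nPVar}]}-on-$X$ leaf of $\mathcal{D}$, and grafting the corresponding structural sub-derivations in place of those leaves, produces a derivation of $\Theta\cdot\Gamma\nrightvdash\mu_X.P':S$ that uses no instance of \text{[\textsc{nPVar}]} on $X$, where $P'$ is the (finite) partial unfolding of $P$ obtained by these replacements; the remainder of $\mathcal{D}$ --- the positive side-premises of \text{[\textsc{nIf2}]}/\text{[\textsc{nIf3}]}, the identifier side-conditions, and all other structural leaves --- is untouched and still valid, since only sub-processes sitting under negated-position leaves were altered. Along the way I would use Lemmata~\ref{lem:failing-deriv-not-deriv} and \ref{lem:typing-deriv-non-deriv} to move between negated judgements, failing derivations, and non-derivability, and the standard equi-recursive fact that partial unfolding preserves and reflects typability against a fixed session type.

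I expect the main obstacle to be making the ``heart of the argument'' rigorous: one must pin down the precise alignment between the process's fixed prefix structure and the two types $S$ and $S'$, show that $k$ unfoldings (for $k$ bounded in terms of $d$ and the prefix-path length) genuinely exposes the disagreement as a structural mismatch rather than merely pushing an \text{[\textsc{nPVar}]}-on-$X$ obligation one copy deeper, and formalise the decreasing ``disagreement depth'' measure that guarantees termination. The bookkeeping for nested recursions on variables other than $X$, and the duplication of such binders caused by unfolding, is routine but tedious, and would be dispatched by observing that, as far as \text{[\textsc{nPVar}]} on $X$ is concerned, those binders behave exactly like non-$X$ constructs.
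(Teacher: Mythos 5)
Your route is genuinely different from the paper's. The paper's proof replaces the offending occurrence of $X$ by $P$, uses Lemmata~\ref{lem:failing-deriv-not-deriv} and \ref{lem:typing-deriv-non-deriv} to argue that the resulting judgement $\Theta''\cdot\Gamma''\nrightvdash P:S''$ must itself be derivable, grafts that derivation in, and iterates --- asserting, without proof, that the iteration terminates. You instead try to bound the number of unfoldings \emph{a priori} by the depth $d$ at which the two regular types $S$ and $S'$ first disagree, claiming the disagreement then surfaces via a structural rule and that a ``disagreement depth'' measure strictly decreases.

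That central claim is false, and this is not a matter of missing rigour. The negated rules only compare the process against the parts of the type that the process structurally exercises, so $S$ and $S'$ can disagree at finite depth in a way that no negated rule can ever observe: (i) in the payload base type of an input branch --- rule [\textsc{nBra2}] merely extends $\Gamma$ with $x_i:\textsf{B}_i$ taken from the \emph{type} and never compares $\textsf{B}_i$ against anything in the process, so if the bound variable is never used the disagreement cannot surface; (ii) in a branch of an internal choice whose label the process never selects, since [\textsc{nSel1}]--[\textsc{nSel3}] inspect only the branch matching the emitted label. Concretely, take $P=\triangleright\texttt{a}(x).X$ and $S=\,?\texttt{a}(\textsf{Int}).S''$ with $S''=\strec{Y}.?\texttt{a}(\textsf{Str}).Y$. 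The only derivation of $\emptyset\cdot\emptyset\nrightvdash\pmu{X}.P:S$ is [\textsc{nRec}], [\textsc{nBra2}], [\textsc{nPVar}] with demanded type $S'=S''\neq S$, the disagreement being the $\textsf{Int}$/$\textsf{Str}$ payload at depth $0$; yet for every $k$, the unique negated derivation for the $k$-fold unfolding again threads $k{+}1$ instances of [\textsc{nBra2}] (each label matches, each payload type is silently bound into $\Gamma$) and terminates in [\textsc{nPVar}] on $X$ at the same pair $(S,S'')$. Your measure does not decrease and no unfolding removes the [\textsc{nPVar}] leaf, so the grafting step you describe can never be performed. You should be aware that this same configuration defeats the paper's own unproven termination claim, so the difficulty lies with the proposition's justification at large and not only with your route; but as written, your proof rests on a decreasing measure that does not exist.
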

\begin{proof}
  Consider the derivation of $\Theta\cdot\Gamma \nrightvdash \mu_{X}.P:S$,
  and assume it contains a judgement\; $\Theta''\cdot\Gamma'' \nrightvdash X:S''$
  \;(instance of rule $[\textsc{nPVar}]$).
  If we replace $X$ with $P$ (thus effectively unfolding $P$),
  we obtain a derivation ending on a tentative judgement\;  $\Theta''\cdot\Gamma'' \nrightvdash P:S''$,
  with three possibilities:
  \begin{itemize}
    \item the judgement\; $\Theta''\cdot\Gamma'' \nrightvdash P:S''$ \;is not derivable.
      This is impossible, because it would imply %
      that $\Theta''\cdot\Gamma'' \rightvdash P:S''$ is derivable,
      which would imply that\; $\Theta\cdot\Gamma \rightvdash \mu_{X}.P:S$ \;is derivable,
      hence we would obtain that the hypothesis\; $\Theta\cdot\Gamma \nrightvdash \mu_{X}.P:S$ \;is false (contradiction);
    \item the judgement\; $\Theta''\cdot\Gamma'' \nrightvdash P:S''$ is derived by reaching an axiom different from $[\textsc{nPVar}]$.
      In this case, we graft this derivation in the original derivation, and we are done;
    \item the judgement\; $\Theta''\cdot\Gamma'' \nrightvdash P':S''$ is derived by reaching an instance of $[\textsc{nPVar}]$ for variable $X$.
      In this case, we graft this derivation in the original derivation, and we repeat the substitution above again, by replacing $X$ with $P$.
  \end{itemize}
  After a finite number of substitutions of $X$ with $P$,
  and repetitions of the above procedure,
  we obtain a derivation for the judgement
  $\Theta\cdot\Gamma \nrightvdash \mu_{X}.P':S$
  where $P'$ is an unfolding of $P$,
  \fxASwarning{This is not quite an unfolding --- we are duplicating the body of $P$ inside the recursion}
  whithout any instance of rule [\textsc{nPVar}] applied to $X$.%
  \fxASwarning{More details would not hurt!}
\end{proof}

\lemMonitorCompleteness*
\begin{proof}
  By the hypothesis and \Cref{lem:failing-deriv-not-deriv},
  we know that there is a failing derivation
  for\; $\emptyset\cdot\emptyset \rightvdash P:S$.
  \;Moreover, by \Cref{lem:typing-deriv-non-deriv}, we know that from such a failing derivation
  we can construct a derivation of\; $\emptyset\cdot\emptyset \nrightvdash P:S$.

  Now, let us consider the process $P_\mu$ obtained
  by unfolding $P$ until\; $\emptyset\cdot\emptyset \nrightvdash P_\mu:S$
  \;is derived \emph{without} using the axiom [\textsc{nPVar}]
  in \Cref{fig:negated-session-typing-rules}:
  such $P_\mu$ exists, and is obtained by repeating \Cref{lem:neg-typing-unfolding}
  until all occurrences of [\textsc{nPVar}] in the derivation of $P$ are removed.
  Observe that $P$ and $P_\mu$ are weakly bisimilar:
  the only behavioural difference is that $P_\mu$ may perform less
  $\tau$-transitions to unfold recursions (by rule [\textsc{pRec}]).

  Since\; $\emptyset\cdot\emptyset \nrightvdash P_\mu:S$ \;is derivable,
  by \Cref{lem:typing-deriv-non-deriv} there is a failing derivation
  $(J_0, J_1, \ldots, J_{n})$,
  with\; $J_0 = \emptyset\cdot\emptyset \rightvdash P_\mu:S$.
  Furthermore, by observing the proof of \Cref{lem:typing-deriv-non-deriv},
  we can see that:
  \begin{enumerate}[label={(o\arabic*)}]
  \item\label{item:part-mon-comp:ind} $\forall i \in 1..n-1$,
    $J_i$ corresponds to the conclusion of an inductive rule $R_i$ in \Cref{fig:negated-session-typing-rules},
    and one of the judgements $\Phi(J_i)$ corresponds to a judgement in $R_i$'s premises;
  \item\label{item:part-mon-comp:axiom} $J_n$ is the conclusion of an axiom $R_n$ in \Cref{fig:negated-session-typing-rules}.
  \end{enumerate}

  We now use the failing derivation $(J_0, J_1, \ldots, J_{n})$
  to construct a trace $t = t_1 t_2 \cdots t_n e'$
  leading from $\instr{P_\mu}{M}$ to some $\instr{P'}{M'}$, such that:
  \begin{enumerate}[label=(\alph*)]
    \item\label{item:partial-mon:trace}
      $\instr{P_\mu}{M} = \instr{P_0\sigma_0}{\lsem{S_0}\rsem} \xRightarrow{t_1} \instr{P_1\sigma_1}{\lsem{S_1}\rsem} \xRightarrow{t_2} \instr{P_2\sigma_2}{\lsem{S_2}\rsem} \xRightarrow{t_3}\cdots \xRightarrow{t_n} \instr{P_n\sigma_n}{\lsem{S_n}\rsem} \xRightarrow{t'} \instr{P'}{M'}$,
      where, $\forall i \in 0..n$:
      \begin{itemize}
        \item $P_i$ and $S_i$ are the process and type occurring in judgement $J_i$;
        \item there is a non-empty set of substitutions $\Sigma_i$ such that $\forall \sigma \in \Sigma_i: \fv{P_i} \subseteq \dom{\sigma}$; and
        \item $\sigma_i$ is any substitution such that $\sigma_i \in \Sigma_i$;
          \fxASwarning{This quantification is confusing, it may be better to keep a non-empty set of distinct processes $\{P_i\sigma\}_{\sigma \in \Sigma_i}$}
        \item $\fv{P_i} \subseteq \dom{\sigma_i}$; and
        \item $\forall j \in j..n-1$,
          $\dom{\sigma_j} \subseteq \dom{\sigma_{j+1}}$.
      \end{itemize}
      Notice that $P_0 = P_\mu$,
      and $P_0\sigma_0 = P_0 = P_\mu$
      (for any $\sigma_0$, because $\fv{P_\mu} = \fv{P_0} = \emptyset$
      by the hypothesis ``$P$ closed''),
      and $S_0 = S$;
    \item\label{item:partial-mon:stuck}
      $\instr{P'}{M'} \mathrel{\not\rightarrow}{}$ with $P' \not\equal \pnil$ or $M' \not\equal \pnil$;
      moreover, $M' \not\equal \noes$ and $M' \not\equal \noed$.
  \end{enumerate}
  We proceed by induction on $n$, in decreasing order toward $0$:
  \fxASwarning{Here we are skimming over the fact that
    types may be under recursion,
    hence the synthesised monitors may perform a finite number of $\tau$-transitions
    before reaching the configurations described below%
  }
  \begin{itemize}
  \item base case $n$.\quad
    We know that $J_n$ falls under item~\ref{item:part-mon-comp:axiom} above,
    so we proceed by cases on the axiom $R_n$
    with conclusion\; $\Theta_n\cdot\Gamma_n \nrightvdash P_n: S_n$:
    \begin{itemize}
    \item{}[\textsc{nBra0}]\quad
      We have $P_n = \prcv{j}{J}$, and $S_n$ is not a branching type.
      We have the following sub-cases:
      \begin{itemize}
      \item $S_n = \textsf{end}$,
        and thus (by \Cref{def:synth-function})
        $\lsem{S_n}\rsem = \pnil$.
        In this case, we conclude by taking $P' = P_n\sigma_n$,
        $\Sigma_n$ as the set of all substitutions with domain $\fv{P'}$,
        $\sigma_n$ as any substitution from $\Sigma_n$ (hence, $\fv{P_n} \subseteq \dom{\sigma_n}$),
        $t'$ empty,
        and $M' = \lsem{S_n}\rsem$;
      \item $S_n = \stselNA{i}{I}$
        and thus (by \Cref{def:synth-function})\\
        $\lsem{S_n}\rsem = \triangleright\big\{\texttt{l}_i(x_i:\textsf{B}_i).\pif{\isValueB{i}{x_i}}{\envsndOP \texttt{l}_i(x_i).\lsem S_i\rsem}{\nopd}\big\}_{i\in I}$.
        In this case, we conclude by taking $P' = P_n\sigma_n$,
        $\Sigma_n$ as the set of all substitutions with domain $\fv{P'}$,
        $\sigma_n$ as any substitution from $\Sigma_n$ (hence, $\fv{P_n} \subseteq \dom{\sigma_n}$),
        $t'$ empty,
        and $M' = \lsem{S_n}\rsem$;
      \end{itemize}
    \item{}[\textsc{nSel0}]\quad
      We have $P_n = \psndk{l}{a}$, and $S_n$ is not a selection type.
      We have the following sub-cases:
      \begin{itemize}
      \item $S_n = \textsf{end}$,
        and thus (by \Cref{def:synth-function})
        $\lsem{S_n}\rsem = \pnil$.
        In this case, we conclude by taking $P' = P_n\sigma_n$,
        $\Sigma_n$ as the set of all substitutions with domain $\fv{P'}$,
        $\sigma_n$ as any substitution from $\Sigma_n$ (hence, $\fv{P_n} \subseteq \dom{\sigma_n}$),
        $t'$ empty,
        and $M' = \lsem{S_n}\rsem$;
      \item $S_n = \stbraNA{i}{I}$
        and thus (by \Cref{def:synth-function})\\
        $\lsem{S_n}\rsem = \envrcvOP\big\{\texttt{l}_i(x_i:\textsf{B}_i).\pif{\isValueB{i}{x_i}}{\triangleleft \texttt{l}_i(x_i).\lsem S_i\rsem}{\noed}\big\}_{i\in I}$.
        In this case, we conclude by taking $P' = P_n\sigma_n$,
        $\Sigma_n$ as the set of all substitutions with domain $\fv{P'}$,
        $\sigma_n$ as any substitution from $\Sigma_n$ (hence, $\fv{P_n} \subseteq \dom{\sigma_n}$),
        $t' = \envrcvOP\texttt{l}_k(v)$ for any $k \in I$ and
        $v$ such that $\isValueB{k}{v}$ holds,
        and $M' = \triangleleft \texttt{l}_k(v).\lsem S_k\rsem$;
      \end{itemize}
    \item{}[\textsc{nIf1}]\quad
      We have $P_n = \pif{A}{Q}{Q'}$, and (from the rule premises)
      $\Gamma_n\nrightvdash A:\textsf{Bool}$: thus, $A$ cannot evaluate to a boolean
      (by item~\ref{assumption:ill-typed-no-eval} on page~\pageref{assumption:ill-typed-no-eval}),
      hence $P_n$ cannot reduce by the rules in \Cref{fig:process-calc-syntax-semantics}.
      Hence, in this case we conclude by taking $P' = P_n\sigma_n$,
      $\Sigma_n$ as the set of all substitutions with domain $\fv{P'}$,
      $\sigma_n$ as any substitution from $\Sigma_n$ (hence, $\fv{P_n} \subseteq \dom{\sigma_n}$),
      $t'$ empty,
      and $M' = \lsem{S_n}\rsem$;
    \item{}[\textsc{nNil}]\quad
      We have $P_n = \pnil$, and $S_n \not\equal \textsf{end}$.
      We have the following sub-cases:
      \begin{itemize}
      \item $S_n = \stbraNA{i}{I}$
        and thus (by \Cref{def:synth-function})\\
        $\lsem{S_n}\rsem = \envrcvOP\big\{\texttt{l}_i(x_i:\textsf{B}_i).\pif{\isValueB{i}{x_i}}{\triangleleft \texttt{l}_i(x_i).\lsem S_i\rsem}{\noed}\big\}_{i\in I}$.
        In this case, we conclude by taking $P' = P_n\sigma_n$,
        $\Sigma_n$ as the set of all substitutions with domain $\fv{P'}$,
        $\sigma_n$ as any substitution from $\Sigma_n$ (hence, $\fv{P_n} \subseteq \dom{\sigma_n}$),
        $t' = \envrcvOP\texttt{l}_k(v)$ for any $k \in I$ and
        $v$ such that $\isValueB{k}{v}$ holds,
        and $M' = \triangleleft \texttt{l}_k(v).\lsem S_k\rsem$;
      \item $S_n = \stselNA{i}{I}$
        and thus (by \Cref{def:synth-function})\\
        $\lsem{S_n}\rsem = \triangleright\big\{\texttt{l}_i(x_i:\textsf{B}_i).\pif{\isValueB{i}{x_i}}{\envsndOP \texttt{l}_i(x_i).\lsem S_i\rsem}{\nopd}\big\}_{i\in I}$.
        In this case, we conclude by taking $P' = P_n\sigma_n$,
        $\Sigma_n$ as the set of all substitutions with domain $\fv{P'}$,
        $\sigma_n$ as any substitution from $\Sigma_n$ (hence, $\fv{P_n} \subseteq \dom{\sigma_n}$),
        $t'$ empty,
        and $M' = \lsem{S_n}\rsem$;
      \end{itemize}
    \item{}[\textsc{nBra1}]\quad
      We have $P_n = \prcv{j}{J}$, and $S_n = \stbraNA{i}{I}$,
      and $\exists k \in I: \forall j \in J :\texttt{l}_k \not\equal \texttt{l}_j$.
      By \Cref{def:synth-function}, we have\\
      $\lsem{S_n}\rsem = \triangleright\big\{\texttt{l}_i(x_i:\textsf{B}_i).\pif{\isValueB{i}{x_i}}{\envsndOP \texttt{l}_i(x_i).\lsem S_i\rsem}{\nopd}\big\}_{i\in I}$.
      In this case, we conclude by taking $P' = P_n\sigma_n$,
      $\Sigma_n$ as the set of all substitutions with domain $\fv{P'}$,
      $\sigma_n$ as any substitution from $\Sigma_n$ (hence, $\fv{P_n} \subseteq \dom{\sigma_n}$),
      $t' = \envrcvOP\texttt{l}_k(v)$ for any
      $v$ such that $\isValueB{k}{v}$ holds,
      and $M' = \triangleleft \texttt{l}_k(v).\lsem S_k\rsem$;
    \item{}[\textsc{nSel1}]\quad
      We have $P_n = \psndk{l}{a}.P''$, and $S_n = \stselNA{i}{I}$,
      and $\forall i \in I: \texttt{l} \not\equal \texttt{l}_i$.
      By \Cref{def:synth-function}, we have\\
      $\lsem{S_n}\rsem = \triangleright\big\{\texttt{l}_i(x_i:\textsf{B}_i).\pif{\isValueB{i}{x_i}}{\envsndOP \texttt{l}_i(x_i).\lsem S_i\rsem}{\nopd}\big\}_{i\in I}$.
      In this case, we have two possibilities:
      \begin{itemize}
      \item there exists a non-empty set of substitutions $\Sigma''$ such that,
        $\forall \sigma'' \in \Sigma''$,
        $a\sigma'' \Downarrow v$ (for some $v$).
        In this case, we conclude by taking
        $\Sigma_n = \Sigma''$,
        $\sigma_n$ as any $\sigma'' \in \Sigma_n$,
        $P' = P''\sigma_n$,
        $t' = \tau$ (for the communication between $P_n$ and $\lsem{S_n}\rsem$),
        and $M' = \nops$
        (by rule [\textsc{mIV}] in \Cref{fig:monitor-calculus});
      \item there is no substitution $\sigma''$ such that 
        $a\sigma'' \Downarrow v$ (for any $v$),
        i.e., no substitution $\sigma''$ allowing $P_n\sigma''$ to perform
        an output transition, by the rules in
        \Cref{fig:process-calc-syntax-semantics}.
        In this case, we conclude by taking $P' = P_n\sigma_n$,
        $\Sigma_n$ as the set of all substitutions with domain $\fv{P'}$,
        $\sigma_n$ as any substitution from $\Sigma_n$ (hence, $\fv{P_n} \subseteq \dom{\sigma_n}$),
        $t'$ empty,
        and $M' = \lsem{S_n}\rsem$;
      \end{itemize}
    \item{}[\textsc{nSel2}]\quad
      We have $P_n = \psndk{l}{a}.P''$, and $S_n = \stselNA{i}{I}$,
      and $\exists k \in I: \texttt{l} \equal \texttt{l}_k$ and
      $\Gamma\nrightvdash a:\textsf{B}_k$.
      By \Cref{def:synth-function}, we have\\
      $\lsem{S_n}\rsem = \triangleright\big\{\texttt{l}_i(x_i:\textsf{B}_i).\pif{\isValueB{i}{x_i}}{\envsndOP \texttt{l}_i(x_i).\lsem S_i\rsem}{\nopd}\big\}_{i\in I}$.
      In this case, we have two possibilities:
      \begin{itemize}
      \item there exists a non-empty set of substitutions $\Sigma''$ such that,
        $\forall \sigma'' \in \Sigma''$,
        $a\sigma'' \Downarrow v$ (for some $v$).
        In this case, we conclude by taking
        $\Sigma_n = \Sigma''$,
        $\sigma_n$ as any $\sigma'' \in \Sigma_n$,
        $P' = P''\sigma_n$,
        $t' = \tau\tau$
        (the first $\tau$ is for the communication between $P_n$ and $\lsem{S_n}\rsem$, while the second $\tau$ is for evaluating the ``if'' in the monitor, which picks the $\false$ branch),
        and $M' = \nopd$
      \item there is no substitution $\sigma''$ such that 
        $a\sigma'' \Downarrow v$ (for any $v$),
        i.e., no substitution $\sigma''$ allowing $P_n\sigma''$ to perform
        an output transition, by the rules in
        \Cref{fig:process-calc-syntax-semantics}.
        In this case, we conclude by taking $P' = P_n\sigma_n$,
        $\Sigma_n$ as the set of all substitutions with domain $\fv{P'}$,
        $\sigma_n$ as any substitution from $\Sigma_n$ (hence, $\fv{P_n} \subseteq \dom{\sigma_n}$),
        $t'$ empty,
        and $M' = \lsem{S_n}\rsem$;
      \end{itemize}
    \item{}[\textsc{nPVar}].\quad
      This case is impossible: $P\mu$ does not contain instances of
      [\textsc{nPVar}], as we unfolded $P$ into $P_\mu$
      by using \Cref{lem:neg-typing-unfolding} (see above);
    \end{itemize}
  \item inductive case $m < n$.\quad
    We know that $J_m$ falls under item~\ref{item:part-mon-comp:ind} above,
    so we proceed by cases on the inductive rule $R_m$
    with conclusion\; $\Theta_m\cdot\Gamma_m \nrightvdash P_m: S_m$:
    \begin{itemize}
    \item{}[\textsc{nBra2}].\quad
      We have $S_m = \stbraNA{i}{I}$ and $P_m = \prcv{j}{I \cup J}$, and
      $\exists k \in I: \Theta\cdot\Gamma, x_k:\textsf{B}_k \nrightvdash P_k:S_k$.
      Correspondingly, since $J_{m+1} \in \Phi(J_m)$, we have
      $J_{m+1} = \Theta_{m+1}\cdot\Gamma_{m+1} \rightvdash P_{m+1}: S_{m+1}$
      with $\Theta_{m+1} = \Theta_m$, $\Gamma_{m+1} = \Gamma_m, x_k:\textsf{B}_k$, $P_{m+1} = P_k$ and $S_{m+1} = S_k$.
      By \Cref{def:synth-function}, we have\\
      $\lsem{S_n}\rsem = \envrcvOP\big\{\texttt{l}_i(x_i:\textsf{B}_i).\pif{\isValueB{i}{x_i}}{\triangleleft \texttt{l}_i(x_i).\lsem S_i\rsem}{\noed}\big\}_{i\in I}$.\\
      By the induction hypothesis,
      there is a set $\Sigma_{m+1} \neq \emptyset$
      containing substitutions with domain including $\fv{P_{m+1}} = \fv{P_k}$,
      and any substitution $\sigma_{m+1} \in \Sigma_{m+1}$
      can be used to build a trace $t''$ such that
      $\instr{P_{m+1}\sigma_{m+1}}{\lsem{S_{m+1}}\rsem} \xRightarrow{t''} \instr{P'}{M'} \mathrel{\not\rightarrow}{}$ with $P' \not\equal \pnil$ or $M' \not\equal \pnil$;
      moreover, $M' \not\equal \noes$ and $M' \not\equal \noed$.
      Now, we take:

      \smallskip\centerline{\(
      \Sigma'' \;=\; \left\{\sigma'' \in \Sigma_{m+1} \;\middle|\;
      \begin{array}{@{}l@{}}
        x_k \in \dom{\sigma''} \;\land\;
        \Gamma_{m+1} \rightvdash x_k\sigma'' : \textsf{B}_k
      \end{array}
      \right\}
      \)}\smallskip

      \ie all substitutions from $\Sigma_{m+1}$ that instantiate $x_k$ as a well-typed;
      at least one exists, by $\Gamma_{m+1} \rightvdash a:\textsf{B}_k$ and the substitution lemma.
      \fxASwarning{Is would be better to add this invariant to the i.h.}
      Thus, we conclude by taking
      $\Sigma_m = \Sigma_{m+1} \setminus \{x_i\}_{i \in I}$
      (\ie, we remove the variables bound by $P_m$'s external choice),
      $\sigma_m$ as any substitution from $\Sigma_m$ (hence, $\fv{P_m} \subseteq \dom{\sigma_m}$),
      $t' = \envrcvOP \texttt{l}_k(x_k\sigma'') \tau \tau t''$ 
      (where $\sigma'' \in \Sigma''$, the first action is the monitor receiving
      a valid message from the environment, the first $\tau$ is is the monitor validating the such input,
      and the second $\tau$ is the monitor and process synchronising)
      and $M_m = \lsem{S_m}\rsem$;

    \item{}[\textsc{nSel3}].\quad
      We have $P_m = \psndk{l}{a}.P''$, and $S_m = \stselNA{i}{I}$,
      and $\exists k \in I: \texttt{l} \equal \texttt{l}_k$ and
      $\Gamma_m \rightvdash a:\textsf{B}_k$
      and $\Theta_m\cdot\Gamma_m \nrightvdash P'':S_k$.
      Correspondingly, since $J_{m+1} \in \Phi(J_m)$, we have
      $J_{m+1} = \Theta_{m+1}\cdot\Gamma_{m+1} \rightvdash P_{m+1}: S_{m+1}$
      with $\Theta_{m+1} = \Theta_m$, $\Gamma_{m+1} = \Gamma_m$, $P_{m+1} = P''$ and $S_{m+1} = S_k$.
      By \Cref{def:synth-function}, we have\\
      $\lsem{S_n}\rsem = \triangleright\big\{\texttt{l}_i(x_i:\textsf{B}_i).\pif{\isValueB{i}{x_i}}{\envsndOP \texttt{l}_i(x_i).\lsem S_i\rsem}{\nopd}\big\}_{i\in I}$.\\
      By the induction hypothesis,
      there is a set $\Sigma_{m+1} \neq \emptyset$
      containing substitutions with domain including $\fv{P_{m+1}} = \fv{P''}$,
      and any substitution $\sigma_{m+1} \in \Sigma_{m+1}$
      can be used to build a trace $t''$ such that
      $\instr{P_{m+1}\sigma_{m+1}}{\lsem{S_{m+1}}\rsem} \xRightarrow{t''} \instr{P'}{M'} \mathrel{\not\rightarrow}{}$ with $P' \not\equal \pnil$ or $M' \not\equal \pnil$;
      moreover, $M' \not\equal \noes$ and $M' \not\equal \noed$.
      Now, we take:

      \smallskip\centerline{\(
      \Sigma_m \;=\; \left\{\sigma'' \in \Sigma_{m+1} \;\middle|\;
      \begin{array}{@{}l@{}}
        \fv{a} \subseteq \dom{\sigma''} \;\land\;
        \Gamma_m \rightvdash a\sigma'' : \textsf{B}_k
      \end{array}
      \right\}
      \)}\smallskip

      \ie all substitutions from $\Sigma_{m+1}$ that instantiate $a$ as a well-typed value;
      at least one exists, by $\Gamma_m \rightvdash a:\textsf{B}_k$ and the substitution lemma.
      \fxASwarning{Is would be better to add this invariant to the i.h.}
      Thus, we conclude by taking 
      $\sigma_m$ as any substitution from $\Sigma_m$ (hence, $\fv{P_m} \subseteq \dom{\sigma_m}$),
      $t' = \tau \tau \envsndOP \texttt{l}_k(a\sigma_m) t''$ (where the first $\tau$ is synchronisation between the process and monitor,
      the second $\tau$ is the monitor validating the process output,
      and then the monitor forwarding the message sent by the process)
      and $M_m = \lsem{S_m}\rsem$;

    \item{}[\textsc{nIf2}].\quad
      We have $P_m = \pif{A}{Q}{Q'}$, with $\Theta_m\cdot\Gamma_m \nrightvdash Q: S_m$;
      correspondingly, since $J_{m+1} \in \Phi(J_m)$, we have
      $J_{m+1} = \Theta_{m+1}\cdot\Gamma_{m+1} \rightvdash P_{m+1}: S_{m+1}$
      with $\Theta_{m+1} = \Theta_m$, $\Gamma_{m+1} = \Gamma_m$, $P_{m+1} = Q$ and $S_{m+1} = S_m$.
      By the induction hypothesis,
      there is a set $\Sigma_{m+1} \neq \emptyset$
      containing substitutions with domain including $\fv{P_{m+1}} = \fv{Q}$,
      and any substitution $\sigma_{m+1} \in \Sigma_{m+1}$
      can be used to build a trace $t''$ such that
      $\instr{P_{m+1}\sigma_{m+1}}{\lsem{S_{m+1}}\rsem} \xRightarrow{t''} \instr{P'}{M'} \mathrel{\not\rightarrow}{}$ with $P' \not\equal \pnil$ or $M' \not\equal \pnil$;
      moreover, $M' \not\equal \noes$ and $M' \not\equal \noed$.
      Now, we take:

      \smallskip\centerline{\(
      \Sigma_m \;=\; \left\{\sigma'' \cup \sigma''' \;\middle|\;
      \begin{array}{@{}l@{}}
        \sigma'' \in \Sigma_{m+1} \;\land\; A\sigma'' \Downarrow \true
        \\
        \dom{\sigma'''} = \fv{Q'} \setminus \dom{\sigma''}
      \end{array}
      \right\}
      \)}\smallskip

      Observe that $\Sigma_m \neq \emptyset$;
      otherwise, there would be no trace $t^{*}$ and substitution $\sigma^{*}$
      such that $P_\mu \xRightarrow{t^{*}} P_m\sigma^{*} \xrightarrow{\tau} Q\sigma^{*}$, \ie $P_\mu$ would contain dead code (by \Cref{def:no-dead-code}),
      hence $P$ would also contain dead code, which would contradict the hypothesis in the statement. \fxASwarning{Is it clear?}
      Thus, we conclude by taking 
      $\sigma_m$ as any substitution from $\Sigma_m$ (hence, $\fv{P_m} \subseteq \dom{\sigma_m}$),
      $t' = \tau t''$ (where the $\tau$ is the ``if'' statement in $P_m\sigma_m$ reducing to the ``true'' branch),
      and $M_m = \lsem{S_m}\rsem$;
      
    \item{}[\textsc{nIf3}].\quad
      Similar to case [\textsc{nIf2}] above,
      except that we use the inductive premise on $Q'$ (\ie the ``false'' branch of the ``if'' statement);
    \item{}[\textsc{nIf4}].\quad
      The failing derivation follows one of the two inductive premises of the rule:
      the proof is similar to either case [\textsc{nIf2}] or case [\textsc{nIf3}] above;
    \item{}[\textsc{nRec}].\quad
      We have $P_m = \pmu{X}.P''$, with $\Theta_m,X:S_m\cdot\Gamma_m \nrightvdash P'': S_m$;
      correspondingly, since $J_{m+1} \in \Phi(J_m)$, we have
      $J_{m+1} = \Theta_{m+1}\cdot\Gamma_{m+1} \rightvdash P_{m+1}: S_{m+1}$
      with $\Theta_{m+1} = \Theta_m,X:S_m$, $\Gamma_{m+1} = \Gamma_m$, $P_{m+1} = P''$ and $S_{m+1} = S_m$.
      By the induction hypothesis,
      there is a set of substitutions $\Sigma_{m+1} \neq \emptyset$
      such that any substitution $\sigma_{m+1} \in \Sigma_{m+1}$
      can be used to build a trace $t''$ such that
      $\instr{P_{m+1}\sigma_{m+1}}{\lsem{S_{m+1}}\rsem} \xRightarrow{t''} \instr{P'}{M'} \mathrel{\not\rightarrow}{}$ with $P' \not\equal \pnil$ or $M' \not\equal \pnil$;
      moreover, $M' \not\equal \noes$ and $M' \not\equal \noed$.
      Thus, we conclude by taking
      $\Sigma_m = \{ \sigma'' \in \Sigma_{m+1} \mid \sigma''(X) = P''\}$
      \fxASnote{At least one such $\sigma''$ exists because we have unfolded $P_\mu$
        so that its negated typing derivation never uses [\textsc{nPVar}] (\Cref{lem:neg-typing-unfolding})
        hence the failing derivation from $P''$ ends before reaching $X$,
        and the trace $t''$ ends before unfolding $\pmu{X}.P''$ again%
      }
      $\sigma_m$ as any substitution from $\Sigma_m$ (hence, $\fv{P_m} \subseteq \dom{\sigma_m}$),
      $t' = \tau t''$ (where the $\tau$ is the unfolding of the top-level recursion of $P_m$),
      and $M_m = \lsem{S_m}\rsem$.
    \end{itemize}
  \end{itemize}

  We have thus proved that if\; $\emptyset\cdot\emptyset \nrightvdash P_\mu:S$,
  \;then there is a trace $t$ such that (for some $P', M'$)\;
  $\instr{P_\mu}{M} \xRightarrow{t} \instr{P'}{M'} \mathrel{\not\rightarrow}{}$,
  with $P' \not\equal \pnil$ or $M' \not\equal \pnil$.
  Since $P$ and $P_\mu$ are weakly bisimilar (see above),
  we also have (for some $P', M'$)\;
  $\instr{P}{M} \xRightarrow{t} \instr{P'}{M'} \mathrel{\not\rightarrow}{}$,
  with $P' \not\equal \pnil$ or $M' \not\equal \pnil$:
  this is the thesis.
\end{proof}
 }{}%

\listoffixmes

\end{document}